\title{A Self-Stabilizing General De Bruijn Graph\footnote{This work was partially supported by the German Research Foundation (DFG) within the Collaborative Research Center "On-The-Fly Computing" (SFB 901).}}
\titlerunning{A Self-Stabilizing General De Bruijn Graph} 
\author[1]{Michael Feldmann}
\author[2]{Christian Scheideler}
\affil[1]{Paderborn University, F\"urstenallee 11, 33102 Paderborn, Germany\\
  \texttt{michael.feldmann@upb.de}}
\affil[2]{Paderborn University, F\"urstenallee 11, 33102 Paderborn, Germany\\
  \texttt{scheideler@upb.de}}
\authorrunning{M. Feldmann and C. Scheideler} 
\subjclass{C.2.4 Distributed Systems}
\keywords{Distributed Systems, Topological Self-stabilization, de Bruijn Graph}
\theoremstyle{plain}
\newtheorem{fact}[theorem]{Fact}
\DeclareMathOperator*{\argmin}{argmin} 
\DeclareMathOperator*{\argmax}{argmax} 
\begin{document}

\maketitle

\begin{abstract}
Searching for other participants is one of the most important operations in a distributed system.
We are interested in topologies in which it is possible to route a packet in a fixed number of hops until it arrives at its destination.
Given a constant $d$, this paper introduces a new self-stabilizing protocol for the $q$-ary $d$-dimensional de Bruijn graph ($q = \sqrt[d]{n}$) that is able to route any search request in at most $d$ hops w.h.p., while significantly lowering the node degree compared to the clique: We require nodes to have a degree of $\mathcal O(\sqrt[d]{n})$, which is asymptotically optimal for a fixed diameter $d$.
The protocol keeps the expected amount of edge redirections per node in $\mathcal O(\sqrt[d]{n})$, when the number of nodes in the system increases by factor $2^d$.
The number of messages that are periodically sent out by nodes is constant.
\end{abstract}

\section{Introduction}
The Internet becomes more and more relevant for every part of our society, as people increasingly use it to interact with each other and exchange information.
Common examples are real-time applications like streaming platforms, multiplayer games or social media networks that are maintained by overlay networks.
The performance of these kind of systems benefits from a low latency/delay.
For example, experiments in \cite{JB09} show that users issue fewer search requests when the latency on Google web servers is increased by only $100$ms.
For many systems there are hard deadlines on the delay that are acceptable: Multiplayer games often require server-side delays only up to $10$ms, because any higher delay would reduce the fun for the players drastically.
To keep the delay low, we require an overlay network to form a topology with a low diameter in legal states such that requests can be delivered quickly to the correct entity.
Reaching a legal state can be guaranteed if the system is \textit{self-stabilizing}, i.e., the system is able to recover from illegal states.
We are interested in self-stabilizing systems that are able to route requests to their target as fast as possible even under a large number of participants.
For example, routing in a simple line structure takes $\Theta(n)$ hops, whereas routing in a de Bruijn graph can be done in $\mathcal O(\log n)$ hops.
Both of these structures have only a constant node degree.
If the degree of the nodes is much higher, i.e., in a clique, routing can be done way more effectively: We can send requests to their destination in only one hop, since every node is connected to every other node in the system.
The drawback here is that nodes have to maintain a large number of outgoing edges, which may be very costly, especially in systems with many participants.
Our goal is to develop a self-stabilizing protocol for a network, in which the node degree is lower than the node degree in the clique, but still enables to route requests to their destination in a constant number of hops w.h.p.
Given a constant $d \in \mathbb{N}$, $d \geq 2$, our network has a diameter of at most $d$ (w.h.p.) in every legitimate state.
As a network topology, we use the $q$-ary $d$-dimensional de Bruijn graph ($q = \sqrt[d]{n}$), called \textit{general de Bruijn} graph, which was first presented in~\cite{Malyshev1997}.
The self-stabilizing protocol consists of a combination of sub-protocols: We combine the sorted list with the $q$-connected list, the standard de Bruijn graph and the $q$-ary de Bruijn graph. 
For the resulting structure it holds that each node has an outdegree of $\mathcal O(\sqrt[d]{n})$, which is asymptotically optimal for a fixed diameter $d$.
\subsection{Model} \label{sec:model}
We model a distributed system as a directed graph $G=(V,E)$ with $n = |V|$.
Each peer in the system is represented by a node $v \in V$.
Each node $v \in V$ can be identified by its unique reference or its unique identifier  $v.id \in \mathbb{N}$ (called \textit{ID}). 
Additionally, each node $v$ maintains local protocol-based variables and has a \textit{channel} $v.Ch$, which is a system-based variable that contains incoming messages.
The message capacity of a channel is unbounded and messages never get lost.
If a node $u$ knows the reference of some other node $v$, then $u$ can send a message $m$ to $v$ by putting $m$ into $v.Ch$.

We distinguish between two different types of \textit{actions}: The first type is used for standard procedures and has the form $\langle label \rangle(\langle parameters \rangle):\langle command \rangle$, where $label$ is the name of that action, $parameters$ defines the set of parameters and $command$ defines the statements that are executed when calling that action.
It may be called locally or remotely, i.e., every message that is sent to a node has the form $\langle label \rangle(\langle parameters \rangle)$.
The second action type has the form $\langle label \rangle :(\langle guard \rangle) \longrightarrow \langle command \rangle$, where $label$ and $command$ are defined as above and $guard$ is a predicate over local variables.
An action for some node $u$ may only be executed if its guard is $true$ or if there is a message in $u.Ch$ that requests to call the action.
In both cases, we call the action \textit{enabled}.
An action whose guard is simply $true$ is called \textsc{Timeout}.
When a node $u$ processes a message $m$, then $m$ is removed from $u.Ch$.

We define the \textit{system state} to be an assignment of a value to every node's variables and messages to each channel.
A \textit{computation} is an infinite sequence of system states, where the state $s_{i+1}$ can be reached from its previous state $s_i$ by executing an action that is enabled in $s_i$.
We call the first state of a given computation the \textit{initial state}.
We assume \textit{fair message receipt}, meaning that every message of the form $\langle label \rangle(\langle parameters \rangle)$ that is contained in some channel, is eventually processed.
Furthermore, we assume \textit{weakly fair action execution}, meaning that if an action is enabled in all but finitely many states of a computation, then this action is executed infinitely often. 
Consider the \textsc{Timeout} action as an example for this.
We place no bounds on message propagation delay or relative node execution speed, i.e., we allow fully asynchronous computations and non-FIFO message delivery.
Our protocol does not manipulate node identifiers and thus only operates on them in \emph{compare-store-send} mode, i.e., we are only allowed to compare node IDs to each other, store them in a node's local memory or send them in a message.
Note that we compute the hash value of a node's identifier in our protocol, but this does not manipulate the ID itself.

We are interested in the formation and maintenance of a certain graph topology (which we introduce in Section \ref{sec:prelmininaries:base_construction}) for the nodes in the distributed system.
In this paper we assume that there are no corrupted IDs in the initial state of the system, otherwise we would require failure detectors to identify corrupted IDs, which exceed the scope of this paper.
Thus we can assume that node IDs are always correct in all states, as our protocol is compare-store-send.
Nevertheless, node channels may contain an arbitrary amount of messages containing false information in initial states: We call these messages \textit{corrupted} and we will argue that all corrupted messages will eventually be processed by our protocol.
We say the system is in a \textit{legitimate (stable) state}, if the nodes and the edges form the desired graph topology and there are no corrupted messages in the system.
We are now ready to define what it means for a protocol to be self-stabilizing:

\begin{definition}[Self-stabilization] \label{self_stabilization}
A protocol is \emph{self-stabilizing} if it satisfies the following two properties:
\begin{description}
	\item[-] Convergence: Starting from an arbitrary system state, the protocol is guaranteed to arrive at a legitimate state.
	\item[-] Closure: Starting from a legitimate state, the protocol remains in legitimate states thereafter.
\end{description}
\end{definition}

There is a directed edge $(u,v) \in E$, if $u$ stores the reference of $v$ in its local memory or if there is a message in $u.Ch$ carrying the reference of $v$.
In the former case, we call that edge \textit{explicit} and in the latter case we call that edge \textit{implicit}.
In order for our distributed algorithms to work, we require the directed graph $G$ containing all explicit and implicit edges to stay at least weakly connected at every point in time.
A directed graph $G=(V,E)$ is \textit{weakly connected}, if the undirected version of $G$, namely $G'=(V,E')$ is connected, i.e., for two nodes $u,v \in V$ there is a path from $u$ to $v$ in $G'$.
Once there are multiple weakly connected components in $G$, these components cannot be connected to each other anymore as it has been shown in~\cite{DBLP:journals/tcs/NorNS13} for compare-store-send protocols.
For a graph that contains multiple weakly connected components, our protocol converts each of these components to our desired topology.
Nodes may initiate search requests at any point in time.
If node $v$ initiates a search request, it enables the action \textsc{Search}($t$), where $t \in \mathbb{N}$ is the ID of the node to be searched.
We do not assume that there is always a node with ID $t$ in the system, i.e., either the search request eventually reaches $u \in V$ with $u.id = t$, or it reaches a node at which our routing algorithm outputs "Failure!".
In both cases the routing algorithm \emph{terminates}.

\subsection{Related Work}
Peer-to-Peer Overlays that are able to route requests in one hop~\cite{DBLP:conf/hotos/GuptaLR03} or two hops~\cite{DBLP:conf/nsdi/GuptaLR04} to the target have already been proposed.
Another protocol that provides fast, but sometimes suboptimal routing as well as handling of path outages, is the \textit{Resilient Overlay Network} (RON)~\cite{DBLP:journals/ccr/AndersenBKM02}.
However, neither of the above protocols are truly self-stabilizing.

The concept of self-stabilizing algorithms for distributed systems goes back to the year 1974, when E. W. Dijkstra introduced the idea of self-stabilization in a token-based ring \cite{DBLP:journals/cacm/Dijkstra74}.
People came up with self-stabilizing protocols for various types of overlays, like sorted lists~\cite{DBLP:conf/alenex/OnusRS07}, rings~\cite{DBLP:conf/p2p/ShakerR05}, spanning trees~\cite{DBLP:conf/fsttcs/AggarwalK93}, Chord graphs~\cite{DBLP:journals/mst/KniesburgesKS14}, Skip graphs~\cite{DBLP:journals/tcs/ClouserNS12} and many more.
A self-stabilizing protocol for the clique has been presented in \cite{DBLP:journals/tcs/KniesburgesKS15}.
There is even a universal approach, which is able to derive self-stabilizing protocols for several types of topologies \cite{DBLP:journals/tcs/BernsGP13}.

In addition to the general de Bruijn graph, this paper also makes use of the standard de Bruijn graph \cite{dB1946}, for which there already exists a self-stabilizing protocol by Richa~{\it et~al} \cite{DBLP:conf/sss/RichaSS11}.
It uses the same technique as our work, namely the \textit{continuous-discrete
approach}, which was originally introduced by Naor and Wieder \cite{DBLP:journals/talg/NaorW07}.
However, the protocol in \cite{DBLP:conf/sss/RichaSS11} uses several virtual nodes per real node in order to be able to locally perform a de Bruijn hop, which works for them, because the node degree is constant in the standard de Bruijn graph.
Since nodes have a degree of $\mathcal O(\sqrt[d]{n})$ in our system, we use a different approach here.
\subsection{Our Contribution}
In this paper we propose a new self-stabilizing protocol \textsc{BuildQDeBruijn} for the general de Bruijn graph, which is built out of a combination of sub-protocols.
We describe this protocol in Section~\ref{sec:desc}.
Routing packets in our network can be done in at most $d$ hops w.h.p. for any constant $d$ (Section~\ref{sec:preliminaries}).
We show that our protocol is self-stabilizing in Section~\ref{sec:analysis} among some further properties: Each node has a degree of $\mathcal O(\sqrt[d]{n})$ and only sends out a constant number of messages in each call of \textsc{Timeout}.
Also, if the number of nodes increases by a factor of $2^d$, each old node only has to redirect or build at most $\mathcal O(\sqrt[d]{n})$ edges on expectation.

\section{Topology and Routing}\label{sec:preliminaries}
In this section we introduce our construction for emulating a general de Bruijn graph.
We also describe how to route search requests via this construction and show that the routing algorithm for each request performs at most $d$ hops w.h.p. until termination.

\subsection{Classical De Bruijn Graphs and Hashing}\label{sec:hashing}
The classical de Bruijn graph is defined as follows:

\begin{definition}
Let $d \in \mathbb{N}$.
The standard ($d$-dimensional) de Bruijn graph consists of nodes $(x_1, \ldots ,x_d) \in \{0,1\}^d$ and edges $(x_1, \ldots ,x_d) \rightarrow (j,x_1, \ldots ,x_{d-1})$ for all $j \in \{0,1\}$.
\end{definition}

The standard de Bruijn graph has a diameter of $d$, so one can route a packet from a source $s \in \{0,1\}^d$ to a target $t \in \{0,1\}^d$ by adjusting exactly $d$ bits.
We call one single bitshift a \textit{de Bruijn hop}.
If we assume $d$ to be a constant, then the number of hops per search request is constant.
However, the standard de Bruijn graph has a fixed number of nodes in this case, that is, $n = 2^d$.
Since we want to allow an arbitrary number of nodes in the system, the standard de Bruijn graph does not fit our purposes.
Therefore, we extend the standard de Bruijn graph to the general de Bruijn graph, which is defined as follows:

\begin{definition}\label{def:general_db}
Let $q,d \in \mathbb{N}$.
The general ($q$-ary $d$-dimensional) de Bruijn graph consists of nodes $(x_1, \ldots ,x_d) \in \{0, \ldots ,q-1\}^d$ and edges \[(x_1, \ldots ,x_d) \rightarrow (j,x_1, \ldots ,x_{d-1})\] for all $j \in \{0, \ldots ,q-1\}$.
\end{definition}

The diameter of the general de Bruijn graph is also $d$, so we are still able to route search requests in $d$ hops by adjusting exactly $d$ bits.
We allow $q$ to be dynamic, so we can use this topology to maintain any number of nodes, that is, $n = q^d$.
Solving this equation for $q$ yields a degree of $q = \sqrt[d]{n}$ per node.
Thus, the general de Bruijn graph meets the following lower bound:

\begin{fact} \label{fact:node_degree_lower_bound}
Every graph with $n$ nodes and diameter $d$ must have a degree of at least $\lfloor \sqrt[d]{n} \rfloor$.
\end{fact}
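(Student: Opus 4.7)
My plan is to use a standard ball-counting (``Moore bound'') argument. Fix an arbitrary node $v$; because the diameter is $d$, every other node must be reachable from $v$ by a directed path of length at most $d$, so the out-ball of radius $d$ around $v$ must have size at least $n$. Turning this into an upper bound on $n$ in terms of the maximum out-degree will immediately yield the claim.

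Concretely, let $\Delta$ denote the maximum out-degree. I would first show by induction on $k$ that the number of nodes at distance exactly $k$ from $v$ is at most $\Delta^k$: any such node must be an out-neighbor of some node at distance $k-1$, and each of the at most $\Delta^{k-1}$ candidates contributes at most $\Delta$ out-neighbors. Summing over $k = 0, 1, \ldots, d$ gives
\[
  n \;\leq\; \sum_{k=0}^{d} \Delta^k.
\]
To invert this, I would compare the right-hand side against the binomial expansion $(\Delta+1)^d = \sum_{k=0}^{d} \binom{d}{k}\Delta^k$. Since $\binom{d}{k} \geq 1$ for every $0 \leq k \leq d$, each term of $\sum_{k=0}^{d}\Delta^k$ is dominated by the corresponding term of $(\Delta+1)^d$, and the inequality is \emph{strict} as soon as $d \geq 2$ and $\Delta \geq 1$ (the coefficient of $\Delta$ is $d \geq 2$ on the right but only $1$ on the left). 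Hence $n < (\Delta+1)^d$, equivalently $\sqrt[d]{n} < \Delta + 1$, so $\lfloor \sqrt[d]{n} \rfloor \leq \Delta$, which is exactly the claim.

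Since this is essentially bookkeeping, I do not expect a serious obstacle. The only points deserving a line of care are verifying that the strict inequality $\sum_{k=0}^{d}\Delta^k < (\Delta+1)^d$ indeed applies throughout the regime of interest ($d \geq 2$, and $\Delta \geq 1$ whenever $n \geq 2$), and dispensing with the degenerate case $\Delta = 0$, which forces $n = 1$ and makes the bound read $0 \leq 0$.
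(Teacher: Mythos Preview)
Your argument is correct and is essentially the same Moore-bound/BFS-ball approach the paper uses: bound the number of nodes reachable from a fixed vertex within $d$ hops in terms of the maximum degree and compare to $n$. The paper's version is terser (it argues by contradiction with $\Delta \le \lfloor\sqrt[d]{n}\rfloor-1$ and only counts the $\Delta^d$ leaves of the BFS tree), whereas you sum over all levels and close the estimate with the clean comparison $\sum_{k=0}^d \Delta^k < (\Delta+1)^d$; the underlying idea is the same.
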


\begin{proof}
	Assume to the contrary for a graph with $n$ nodes and diameter $d$ that no nodes has a degree higher than $\lfloor \sqrt[d]{n} \rfloor -1$.
	Fix a node $v$ and construct the BFS-tree starting at $v$ with $d$ until level $d$.
	The number of leaf nodes in this tree is equal to $(\lfloor \sqrt[d]{n} \rfloor-1)^d < (\lfloor \sqrt[d]{n} \rfloor)^d \leq n$, which implies that we cannot reach all nodes from $v$ in just $d$ steps.
\end{proof}

We use a pseudorandom hash function $h: \mathbb{N} \rightarrow [0,1)$ to distribute node IDs uniformly and independently onto the $[0,1)$-interval.
Whenever we want to use the hash value of a node $v \in V$, we just write $v$ instead of $h(v.id)$ for convenience.
We can derive a bit string representation of the first $k$ bits out of a node $v$'s hash value by computing the inverse of the function $r_k: \{0,1\}^k \rightarrow [0,1)$ with \[r_k(x_1,\ldots,x_k) = \sum_{i=1}^k x_i\cdot \frac{1}{2^i}.\]
Once we have a bit string representation of a node, we can transform it to any base $q = 2^k$ for some $k \in \mathbb{N}$, $k > 1$.
Both of these transformations are important for our routing algorithm.

A node $u$ is \textit{left} (resp. \textit{right}) of a node $v$, if $u < v$ (resp. $u > v$).
Given some node $w$ and two nodes $u, v$, we say that $u$ is \textit{closer} to $w$ than $v$, if $|u-w| < |v-w|$.
We call a node $u \neq v$ the \textit{closest neighbor} of $v \in V$, if there are no other nodes that are closer to $v$ than $u$.
Similarly, a node $v$ is \textit{closest} to some point $p \in [0,1)$, if $|v-p| < |u - p|$ for all $u \in V, u \neq v$.
For a hash function $h$ as described above, we get the following lemma:

\begin{lemma}\label{cor:nodedistance}
The expected distance between two closest neighbors $u, v \in V$ on the $[0,1)$-interval~(seen as a ring) is equal to $\frac{1}{n}$, where $n$ denotes the number of nodes in the system.
\end{lemma}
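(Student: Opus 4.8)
The plan is to model the $n$ node hash values $h(v.id)$ as $n$ points placed independently and uniformly at random on the ring $[0,1)$ — exactly the idealized behaviour we ask of the pseudorandom hash function $h$ from Section~\ref{sec:hashing} — and then exploit the rotational symmetry of the ring. These $n$ points cut the ring into exactly $n$ arcs, and a pair of nodes sits ``back to back'' with no other node between them along the ring precisely when they are the two endpoints of one such arc; hence the quantity to compute is the expected length of one of these arcs. It is worth noting up front that the ring viewpoint (rather than the half-open interval) is what makes the argument clean, since on the ring all $n$ arcs play symmetric roles.

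I would carry this out in three short steps. First, by rotational invariance of the uniform distribution on the ring, condition on one node sitting at position $0$; the remaining $n-1$ nodes are then i.i.d.\ uniform on $[0,1)$. Second, let $L_1,\dots,L_n$ be the $n$ arc lengths in clockwise order starting from that node. They satisfy $\sum_{i=1}^n L_i = 1$ with probability $1$, and they are \emph{exchangeable}, since any permutation of the $n-1$ free points permutes the arcs accordingly; consequently all $L_i$ have the same expectation. Third, take expectations and use linearity: $\sum_{i=1}^n \mathbb{E}[L_i] = 1$ together with equality of the $\mathbb{E}[L_i]$ forces $\mathbb{E}[L_i] = 1/n$ for every $i$, which is the claim. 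A clean alternative, which I would probably write out because it also delivers the full distribution, is to compute the tail directly: the arc from a fixed node to its clockwise neighbour exceeds $t$ iff none of the other $n-1$ points lands in the length-$t$ clockwise arc starting at that node, an event of probability $(1-t)^{n-1}$, so the expected arc length is $\int_0^1 (1-t)^{n-1}\,dt = 1/n$.

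The main obstacle is not the calculation but the setup. On the modelling side one must justify treating the fixed, finite set of node IDs as though their hash values were i.i.d.\ uniform — the standard idealization of $h$, and the source of randomness behind the paper's other ``w.h.p.'' claims. On the definitional side one must be careful about which pair the lemma refers to: the relevant quantity is the gap between two \emph{consecutive} nodes on the ring (equivalently, a pair of nodes with no node between one and the other), which is a single arc length with mean $1/n$, rather than the expected distance from a fixed node to its nearest among \emph{all} others, which is $\min$ of the two arcs flanking that node and would instead evaluate to $\int_0^{1/2}(1-2t)^{n-1}\,dt = 1/(2n)$. Finally, degenerate cases — two IDs colliding at the same hash value, or a weakly connected component with fewer than two nodes — occur with probability zero or are vacuous and can be dismissed in a sentence.
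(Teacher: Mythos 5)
Your proof is correct, and your primary argument takes a genuinely different (and arguably cleaner) route than the paper's. The paper fixes one node at $0$, derives the CDF of the minimum of the remaining $n-1$ uniform points, and integrates $\int_0^1(1-t)^{n-1}\,dt=1/n$; this is exactly the ``clean alternative'' you sketch at the end. Your main argument instead invokes exchangeability of the $n$ spacings on the ring together with $\sum_i L_i = 1$ and linearity of expectation, which dispenses with the CDF derivation and the integral altogether. What the paper's calculation buys in return is the full tail $\Pr[L>t]=(1-t)^{n-1}$, not merely the mean; what your symmetry argument buys is brevity and the observation that \emph{every} arc has mean $1/n$, not just the one adjacent to the distinguished node. (One small caveat: ``any permutation of the $n-1$ free points permutes the arcs accordingly'' is not quite the right justification for exchangeability, since relabelling the points leaves the set of spacings unchanged; the standard reason is that the spacings are Dirichlet$(1,\dots,1)$-distributed, which is a symmetric law on the simplex.) You also correctly flag and resolve the ambiguity in the phrase ``two closest neighbors'': the lemma (and the paper's later use of it in Lemma~\ref{lemma:pointdistance}) means the gap between two \emph{consecutive} nodes on the ring, giving $1/n$, and not the distance from a node to its single nearest neighbor, which would be $\min$ of two adjacent arcs and evaluate to $1/(2n)$ — a distinction the paper's statement does not make explicit but its proof does.
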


\begin{proof}
Let $X_1,\ldots,X_n$ be the hashes of $n$ nodes when using $h$.
W.l.o.g. assume that there exists $i \in \{1,\ldots,n\}$ with $X_i = 0$.
The cumulative distribution function (CDF) $F_{X_j}(x)$ is defined by \[F_{X_j}(x) = Pr[X_j \leq x]\] for an arbitrary $j \in \{1,\ldots,n\}$.
It holds that the CDF of the minimum value $X_{\min} := \min\{\{X_1,\ldots,X_n\}\setminus \{X_i\}\}$ is given by \[F_{X_{\min}}(x) = 1-(1-Pr[X_{\min} \leq x])^{n-1}.\]
Since 
\begin{equation*}
   Pr[X_j \leq x] =
   \begin{cases}
     0, & x < 0 \\
     x, & 0 \leq x \leq 1 \\
     1, & x > 1 
   \end{cases}
\end{equation*}
it follows 
\begin{equation}
   F_{X_{\min}}(x) =
   \begin{cases}
     0, & x < 0 \\
     1-(1-x)^{n-1}, & 0 \leq x \leq 1 \\
     1, & x > 1 
   \end{cases}
\end{equation}
We need to compute the expected distance between $X_i = 0$ and $X_{\min}$, since these hashes belong to neighboring nodes.
It holds
\begin{eqnarray*}
E[|X_{min} - X_i|] = E[X_{\min}] &=& \int_0^{\infty} 1-F_{X_{\min}}(t)\ dt\\
&\overset{(1)}{=}&  \int_0^{1} 1-(1-(1-t)^{n-1})\ dt\\
&=&  \int_0^{1} (1-t)^{n-1}\ dt = \left[-\frac{1}{n}(1-t)^n\right]_0^1 = \frac{1}{n}
\end{eqnarray*}
\end{proof}

For the rest of this paper, we require $h: \mathbb{N} \rightarrow [0,1)$ and the constant $d \in \mathbb{N}$ to be a part of our protocol, i.e., every node knows $h$ and $d$.

\subsection{Base Construction}\label{sec:prelmininaries:base_construction}
We hash all nodes onto the $[0,1)$-interval, using the hash function $h$ as described in the last section.
The network we are going to construct has a diameter of $d$ w.h.p., which makes routing in a constant number of hops possible.

\begin{definition}[Network Topology]\label{def:topology}
The \emph{general de Bruijn network (GDB)} is a directed graph $G=(V,E_L \cup E_q \cup E_{dB} \cup E_{q-dB})$ with the following properties:
	\begin{description}
		\item[-] $E_L$ contains list edges: $(v,w) \in E_L \Leftrightarrow w$ is the closest neighbor that is left (resp. right) of $v$.
		\item[-] $E_q$ contains $q$-neighborhood edges: $(v,w) \in E_q \Leftrightarrow$ there are at most $c\cdot q$ nodes closer to $v$ than $w$, where $c > 2$ is a constant and $q = \sqrt[d]{n}$.
		\item[-] $E_{dB}$ contains standard de Bruijn edges: $\forall j \in \{0,1\}: (v,w) \in E_{dB} \Leftrightarrow w$ is closest to the point $\frac{v+j}{2}$.
		\item[-] $E_{q-dB}$ contains general de Bruijn edges: $\forall i \in \{2,\ldots,\log(q)\}\ \forall j \in \{0,\ldots,2^i-1\}: (v,w) \in E_{q-dB} \Leftrightarrow w$ is closest to the point $\frac{v+j}{2^i}$.
	\end{description}
\end{definition}

All logarithms in this paper are to the base $2$.
For the natural logarithm of some number $x$ we use $\ln(x)$.
Note that the constant $c > 2$ is only needed to prove the correctness of the routing algorithm.
Assume for simplicity that $q = \sqrt[d]{n}$ is a power of $2$, i.e., $q = 2^k$ for some $k \in \mathbb{N}$.
We explain how to deal with arbitrary values of $q$ in Section~\ref{q_v_neighborhood}.

If $w$ is closest to the point $\frac{v+j}{2^i}$, denote the edge $(v,w)$ as a \textit{de Bruijn edge on level $i$}.
For $i = 1$, we speak of a \textit{standard de Bruijn edge}.
For $i > 1$, we speak of a \textit{general de Bruijn edge} and if $i < \log(q)$ we speak of a \textit{lower level general de Bruijn edge}.
Note that we include lower level general de Bruijn edges to facilitate the self-stabilization process.
If we forward a message via a de Bruijn edge on level $i>1$, we speak of a \textit{general de Bruijn hop}.
For $i = 1$, we speak of a \textit{standard de Bruijn hop}.
By writing $v \rightarrow p$ for a point $p \in [0,1)$, we mean that $v$ has an edge to the node $u$ that is closest to $p$, i.e., $v$ stores the reference of $u$ in its local memory.
We are now ready to prove that de Bruijn edges in our network emulate the classical de Bruijn edges correctly:

\begin{lemma}\label{lemma:dbhop_correctness}
Let $v \in V$. 
A de Bruijn hop via $v \rightarrow \frac{v+j}{2^{i}}$, $i \in \{1,\ldots,\log(q)\}$, $j \in \{0, \ldots ,2^{i} - 1\}$, is equivalent to appending $\log(2^{i}) = i$ bits to the left of the bit string representation of $v$, where the content of the appended bit string is equal to $(b_{i-1},b_{i-2}, \ldots ,b_0) \in \{0,1\}^i$ with $b_{i-1}\cdot 2^{i-1} + b_{i-2}\cdot 2^{i-2} +  \ldots  + b_{0}\cdot 2^{0} = j$.
\end{lemma}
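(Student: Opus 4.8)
The plan is to prove the identity by a direct computation on the base-$2$ expansions, using that dividing by $2^i$ is a pure shift that produces no carries. Write the bit string representation of $v$ as $v = r_k(x_1,\ldots,x_k) = \sum_{m=1}^{k} x_m 2^{-m}$, and let $j = \sum_{\ell=0}^{i-1} b_\ell 2^{\ell}$ be the binary expansion of the offset, with $(b_{i-1},\ldots,b_0)$ as in the statement. Then I split the target point as
\[
\frac{v+j}{2^i} \;=\; \frac{j}{2^i} + \frac{v}{2^i} .
\]
A re-indexing via $m = i-\ell$ gives $\frac{j}{2^i} = \sum_{m=1}^{i} b_{i-m}\,2^{-m}$, which is exactly the number whose first $i$ bits are $(b_{i-1},b_{i-2},\ldots,b_0)$; and $\frac{v}{2^i} = \sum_{m=1}^{k} x_m\,2^{-(i+m)} = \sum_{m=i+1}^{i+k} x_{m-i}\,2^{-m}$, which is the bit string $(x_1,\ldots,x_k)$ shifted $i$ positions to the right.

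The key step is to observe that these two sums have disjoint supports: the first uses only bit positions $1,\ldots,i$ and the second only positions $i+1,\ldots,i+k$. Hence their sum involves no carries, so $\frac{v+j}{2^i} = r_{i+k}(b_{i-1},\ldots,b_0,x_1,\ldots,x_k)$, i.e.\ it is precisely the string obtained by prepending $(b_{i-1},\ldots,b_0)$ to the bit string representation of $v$. Since a de Bruijn hop on level $i$ routes the request to the node closest to the point $\frac{v+j}{2^i}$, this point identity is exactly the claimed equivalence, and the number of appended bits is $\log(2^i)=i$. The special case $i=1$ recovers the familiar single-bit de Bruijn shift.

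Since every step is elementary, I do not anticipate a genuine obstacle. The only place that requires a moment's care is the no-carry claim when forming $\frac{j}{2^i}+\frac{v}{2^i}$; isolating the two disjoint ranges of bit positions $\{1,\ldots,i\}$ and $\{i+1,\ldots,i+k\}$ is exactly what makes this immediate, and the rest is just index bookkeeping.
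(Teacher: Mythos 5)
Your proof is correct and takes essentially the same approach as the paper: split $\frac{v+j}{2^i}$ into $\frac{v}{2^i} + \frac{j}{2^i}$, expand each term in binary, and observe that the fractional bits occupy disjoint position ranges so the sum is just concatenation. Your explicit note about disjoint supports (no carries) is a nice touch of rigor, but the underlying computation is the same.
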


\begin{proof}
We can write $v$ as \[v = a_1\cdot \frac{1}{2} + a_2\cdot \frac{1}{4} +  \ldots  + a_m\cdot\frac{1}{2^m}\] with $a_k \in \{0,1\}$ for all $k \in \{1, \ldots ,m\}$.
Then it holds 
\begin{eqnarray*}
\frac{v+j}{2^{i}} &=& \frac{v}{2^{i}} + \frac{j}{2^{i}}\\
&=& \frac{a_1\cdot \frac{1}{2} + a_2\cdot \frac{1}{4} +  \ldots  + a_m\cdot\frac{1}{2^m}}{2^{i}} + \frac{j}{2^{i}}\\
&=& \underbrace{\frac{a_1}{2^{i+1}} + \frac{a_2}{2^{i+2}} +  \ldots  + \frac{a_m}{2^{i + m}}}_\text{Bits of $v$ shifted $i$ times to the right} + \frac{j}{2^{i}}\\
\end{eqnarray*}
We know that $j \in \{0, \ldots ,2^{i}-1\}$, so we can write $j$ as a binary string with $i$ bits: \[j = b_{i-1}\cdot 2^{i-1} + b_{i-2}\cdot 2^{i-2} +  \ldots  + b_{0}\cdot 2^{0}\] with $b_l \in \{0,1\}$ for all $l \in \{0, \ldots ,i-1\}$.
So we have
\begin{eqnarray*}
\frac{v+j}{2^{i}} &=& \frac{a_1}{2^{i+1}} + \frac{a_2}{2^{i+2}} +  \ldots  + \frac{a_m}{2^{i + m}} + \frac{j}{2^{i}}\\
&=& \frac{a_1}{2^{i+1}} + \frac{a_2}{2^{i+2}} +  \ldots  + \frac{a_m}{2^{i + m}} + \frac{b_{i-1}\cdot 2^{i-1} + b_{i-2}\cdot 2^{i-2} +  \ldots  + b_{0}\cdot 2^{0}}{2^{i}}\\
&=& \underbrace{\frac{b_{i-1}}{2} + \frac{b_{i-2}}{4} +  \ldots  + \frac{b_0}{2^{i}}}_\text{$i$ bits defined by $j$ appended to the left} + \underbrace{\frac{a_1}{2^{i+1}} + \frac{a_2}{2^{i+2}} +  \ldots  + \frac{a_m}{2^{i + m}}}_\text{Bits of $v$ shifted $i$ times to the right}\\
\end{eqnarray*}
Therefore, we have appended $i$ bits to $v$'s bit string and proved the lemma.
\end{proof}

Since $j \in \{0, \ldots ,2^i-1\}$, we are able to append any arbitrary bit string of length $i$.
So for $i = \log(q)$, we can append $\log(q) = \log(\sqrt[d]{n}) = \frac{1}{d}\log(n)$ arbitrary bits at once per general de Bruijn hop.
The outdegree of our construction is not too high as the following theorem states:

\begin{theorem} \label{theorem_degree}
Each node in the GDB has degree $\mathcal O(\sqrt[d]{n})$.
\end{theorem}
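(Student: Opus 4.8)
The plan is to bound the contribution of each of the four edge sets $E_L$, $E_q$, $E_{dB}$, and $E_{q-dB}$ to the outdegree of an arbitrary node $v$ separately, and then sum. First, $|E_L|$ contributes at most $2$ edges per node (the closest left and right neighbors), and $|E_q|$ contributes $\mathcal{O}(c\cdot q) = \mathcal{O}(\sqrt[d]{n})$ edges by definition, since $c$ is a constant. These two are immediate from Definition~\ref{def:topology}.

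Next I would handle $E_{dB}$: for fixed $v$ there are only two relevant points $\frac{v+0}{2}$ and $\frac{v+1}{2}$, each contributing exactly one outgoing edge (to the node closest to that point), so $E_{dB}$ contributes at most $2$ edges. The main work is $E_{q-dB}$: for each level $i \in \{2,\ldots,\log q\}$ there are $2^i$ target points $\frac{v+j}{2^i}$ for $j \in \{0,\ldots,2^i-1\}$, each giving one outgoing edge. Summing over levels gives $\sum_{i=2}^{\log q} 2^i \le 2^{\log q + 1} = 2q = 2\sqrt[d]{n}$ edges (a geometric series dominated by its last term). So $E_{q-dB}$ also contributes $\mathcal{O}(\sqrt[d]{n})$.

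Adding the four bounds, the total outdegree is $2 + \mathcal{O}(c\cdot q) + 2 + 2q = \mathcal{O}(\sqrt[d]{n})$, which proves the theorem. I expect the only subtlety — really a bookkeeping point rather than a genuine obstacle — to be making sure the geometric sum over de Bruijn levels is charged correctly, i.e., observing that although there are $\Theta(\log q)$ levels, the per-level counts $2^i$ grow geometrically so the total is still $\Theta(q)$ rather than $\Theta(q\log q)$; and noting that the statement as phrased concerns outdegree, so one does not need to separately account for incoming edges. If the paper intends total degree (in- plus out-), one would additionally argue that each point $p\in[0,1)$ has at most a constant number of nodes for which it is the designated target point, but since the definitions are stated in terms of outgoing edges $(v,w)$, the outdegree bound suffices.
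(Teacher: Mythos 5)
Your proof is correct and takes essentially the same route as the paper: bound the $q$-neighborhood contribution by $c\cdot q$ and sum the geometric series $\sum_i 2^i$ over de Bruijn levels to get $\Theta(q)$, yielding $\mathcal{O}(\sqrt[d]{n})$ total. The paper folds $E_L$ into $E_q$ (noting the $q$-neighborhood already contains the list neighbors) and sums the de Bruijn levels from $i=1$ to $\log q$ in one go to get exactly $2q-2$, but that is only a bookkeeping difference from your splitting of $E_{dB}$ and $E_{q-dB}$.
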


\begin{proof}
We count the maximum number of edges a node may possibly have:
Each node $v \in V$ maintains the $c\cdot q$ closest neighbors of $v$, which already includes $v$'s direct list neighbors.
Next count the number of general de Bruijn edges for $v$, including $v$'s standard de Bruijn edges: On level $i$, $i \in \{1, \ldots ,\log(q)\}$, there are exactly $2^{i}$ de Bruijn edges, leading from $q$ general de Bruijn edges on level $\log(q)$ to the $2$ standard de Bruijn edges on level $1$. So $v$ has
\begin{align*}
& \sum_{i=1}^{\log(q)} 2^{i}\\
=\ & \left(\sum_{i=0}^{\log(q)} 2^i\right) - 1\\
=\ & \left(\frac{1-2^{\log(q)+1}}{1-2}\right)-1\\
=\ & (2q-1)-1\\
=\ & 2q - 2\\
\end{align*}
general de Bruijn edges.

Summing it all up results in $v$ having $cq + 2q - 2 = (c+2)q-2 = \mathcal O(\sqrt[d]{n})$ outgoing edges.
\end{proof}

\subsection{Routing}\label{sec:pre:routing}
When processing a search request with target ID $t \in \mathbb{N}$, we proceed in two phases: In the first phase we perform $d-1$ general de Bruijn hops to fix the most significant bits of the target address.
In the second phase, we greedily search for the target node via $q$-neighborhood edges.

At the beginning of the first phase, we compute the bit string representation of $h(t)$ and transform it to the base $q$ as described in Section~\ref{sec:hashing}.
This yields a number $t_q := (t_1,\ldots,t_k)_q \in \{0,\ldots,q-1\}^k$ for some $k \in \mathbb{N}$.
We only consider the first $d-1$ digits $t_1,\ldots,t_{d-1}$ of $t_q$.
Let the search request be at node $v \in V$.
We perform a general de Bruijn hop via the edge $v \rightarrow \frac{v+t_i}{q}$ starting with $i = d - 1$.
We decrement $i$ after each general de Bruijn hop.
The first phase ends, when $i = 0$, i.e., after $d-1$ general de Bruijn hops.
Observe that at this point, we have fixed the most significant $\lceil \frac{d-1}{d}\log(n) \rceil$ bits of the bit string representation of $h(t)$.

In the second phase, we greedily search for the node with target ID $t$, by delegating the search request via edges in $E_q$.
We do this until the target node has been found, or the request arrives at a node $v \in V, v.id \neq t$ from which it cannot be routed closer to $h(t)$ via $q$-neighborhood edges.
In both cases, the algorithm terminates, resulting in a successful search in the first case or a failed search in the second case.
This phase is equivalent to fixing the remaining bits of the binary representation of $h(t)$, which can be done via a single hop w.h.p. until the request arrives at the target node. 
Figure~\ref{fig:routing_example} illustrates an example when the constant $d$ is set to $4$.

\begin{figure}[ht]
	\centering
 	\includegraphics[scale=0.99]{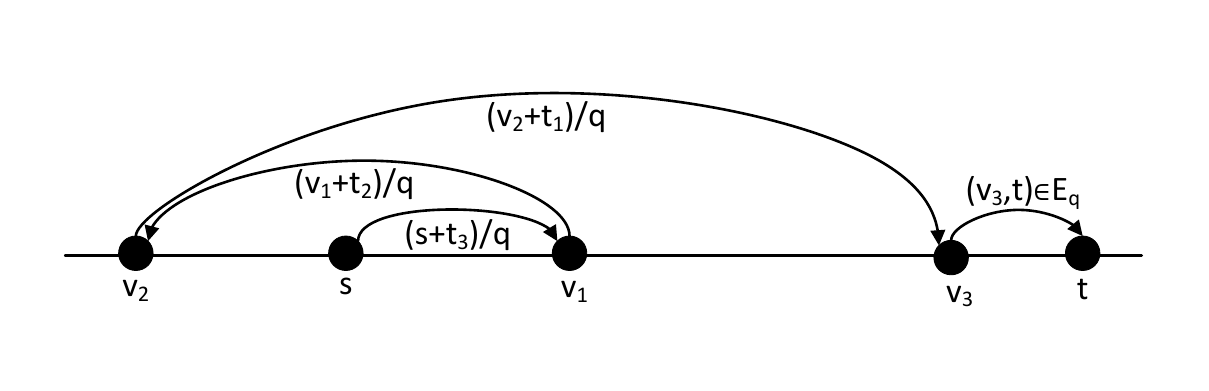}
	\caption{Possible routing path to node $t$ starting at node $s$, when $d = 4$.}
	\label{fig:routing_example}
\end{figure}

To show the correctness of the routing algorithm, we need the following theorem for the standard Chernoff bounds:

\begin{theorem}[Chernoff Bounds]\label{theorem:chernoff}
Let $X_1,\ldots,X_n$ be independent random variables.
Let $X = \sum^n_{i=1} X_i$ and  $\mu = E[X]$.
Then it holds for all $\delta > 0$ that 
\begin{equation}\tag{1}
Pr[X \geq (1+\delta)\mu] \leq e^{-\delta^2\mu/(2( 1+\delta/3))}
\end{equation}
and for all $0 < \delta < 1$ that 
\begin{equation}\tag{2}
Pr[X \leq (1-\delta)\mu] \leq e^{-\delta^2\mu/2}.
\end{equation}
\end{theorem}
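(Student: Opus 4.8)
The plan is to prove both bounds by the standard exponential-moment method: apply Markov's inequality to $e^{tX}$ for a parameter $t$ chosen afterwards, factor the moment generating function using independence, and then optimize $t$. The one point worth flagging up front is that the inequalities as stated require the $X_i$ to be $\{0,1\}$-valued (or more generally $[0,1]$-valued) — with no restriction on the $X_i$ they are false — but this is exactly the regime in which we will use the theorem, so I would state this restriction in the proof. For such variables, convexity of $x \mapsto e^{tx}$ on $[0,1]$ together with $1+x \le e^x$ gives $E[e^{tX_i}] \le 1 + (e^t-1)E[X_i] \le \exp\big((e^t-1)E[X_i]\big)$, and independence yields the key estimate $E[e^{tX}] = \prod_{i=1}^n E[e^{tX_i}] \le \exp\big((e^t-1)\mu\big)$, valid for every real $t$.

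For the upper tail (1), I would fix $t > 0$ and bound
\begin{equation*}
Pr[X \ge (1+\delta)\mu] = Pr\big[e^{tX} \ge e^{t(1+\delta)\mu}\big] \le e^{-t(1+\delta)\mu}\,E[e^{tX}] \le \exp\big((e^t-1)\mu - t(1+\delta)\mu\big).
\end{equation*}
Choosing $t = \ln(1+\delta) > 0$ minimizes the exponent and yields $Pr[X \ge (1+\delta)\mu] \le \big(e^{\delta}/(1+\delta)^{1+\delta}\big)^{\mu}$. It then remains only to verify the scalar inequality $\delta - (1+\delta)\ln(1+\delta) \le -\delta^2/\big(2(1+\delta/3)\big)$ for all $\delta > 0$; this is a one-variable calculus fact, checked by comparing the two sides (and their derivatives, or power series) at $\delta = 0$. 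Substituting it into the previous bound gives (1).

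For the lower tail (2), I would argue symmetrically, applying the same idea to $-X$ with $t > 0$:
\begin{equation*}
Pr[X \le (1-\delta)\mu] = Pr\big[e^{-tX} \ge e^{-t(1-\delta)\mu}\big] \le e^{t(1-\delta)\mu}\,E[e^{-tX}] \le \exp\big((e^{-t}-1)\mu + t(1-\delta)\mu\big).
\end{equation*}
Taking $t = -\ln(1-\delta) > 0$ (legitimate since $0 < \delta < 1$) gives $Pr[X \le (1-\delta)\mu] \le \big(e^{-\delta}/(1-\delta)^{1-\delta}\big)^{\mu}$, and the proof concludes by verifying $-\delta - (1-\delta)\ln(1-\delta) \le -\delta^2/2$ for $0 < \delta < 1$, again a routine single-variable estimate (e.g.\ via the power series of $\ln(1-\delta)$).

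Everything except the two final scalar inequalities is the textbook Markov/MGF template, so I expect those two elementary estimates — and, to a lesser extent, being precise about the hypothesis on the $X_i$ — to be the only places requiring any care.
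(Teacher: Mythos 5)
The paper does not prove this theorem at all: it is stated as a well-known fact and then invoked elsewhere (in Theorem~\ref{theorem:routing}, Lemma~\ref{q_v_approx_lemma}, and Lemma~\ref{lemma:standardProbing}), so there is no author proof to compare against. Your proof is the standard one — Markov's inequality applied to $e^{\pm tX}$, factoring the moment generating function by independence, bounding each factor via convexity on $[0,1]$ and $1+y\le e^y$, optimizing $t$ to obtain the form $\bigl(e^{\delta}/(1+\delta)^{1+\delta}\bigr)^{\mu}$ resp.\ $\bigl(e^{-\delta}/(1-\delta)^{1-\delta}\bigr)^{\mu}$, and finishing with the two scalar estimates $\delta-(1+\delta)\ln(1+\delta)\le-\delta^2/(2(1+\delta/3))$ and $-\delta-(1-\delta)\ln(1-\delta)\le-\delta^2/2$ — and all of these steps are correct. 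You are also right to flag the missing hypothesis: as written, ``independent random variables'' is insufficient (take, e.g., unbounded $X_i$), and the bound requires the $X_i$ to be $\{0,1\}$- or $[0,1]$-valued. The paper indeed only ever applies the theorem to indicator variables, so the omission is a statement-level imprecision rather than an error in use, but noting and repairing it is exactly the right thing to do in a self-contained proof.
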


The following theorem yields the desired bound on the number of hops for the routing algorithm:

\begin{theorem} \label{theorem:routing}
The number of hops required to send a request from a source node $s$ to a destination node $t$ via \textsc{DeBruijnSearch} is $d$ w.h.p.
\end{theorem}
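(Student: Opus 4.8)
The plan is to bound the two phases of \textsc{DeBruijnSearch} separately and add them. The first phase always performs exactly $d-1$ general de Bruijn hops, so its analysis concerns \emph{where} the request ends up rather than how far it travels; the second phase then reaches the target in a single greedy hop along an edge of $E_q$ with high probability, and this is where the randomness of the hash function enters. Together this yields at most $(d-1)+1=d$ hops.

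For the first phase I would iterate Lemma~\ref{lemma:dbhop_correctness}. Writing $h(t)=(t_1,\dots,t_k)_q$, the phase visits nodes $v_0=s,v_1,\dots,v_{d-1}$, where $v_j$ is the node closest to $p_j:=\frac{v_{j-1}+t_{d-j}}{q}$; by Lemma~\ref{lemma:dbhop_correctness} this hop prepends the $\log q$ bits of $t_{d-j}$ to the (shifted) bit string of the current node. Let $p_j^*$ be the idealized point defined by the same recursion but always using the exact predecessor point, i.e. $p_0^*:=s$ and $p_j^*:=\frac{p_{j-1}^*+t_{d-j}}{q}$; then $p_{d-1}^*$ is exactly the point whose leading $\frac{d-1}{d}\log n$ bits coincide with those of $h(t)$. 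A short induction on $j$ gives $|v_j-p_j^*|\le\sum_{\ell=1}^{j}q^{-(j-\ell)}|v_\ell-p_\ell|$, and each $|v_\ell-p_\ell|$ is at most the largest gap between two consecutive hashes on the $[0,1)$-ring; by the Chernoff bound of Theorem~\ref{theorem:chernoff} together with a union bound over $\Theta(n/\log n)$ intervals of length $\frac{\log n}{n}$, this gap is $\mathcal O\left(\frac{\log n}{n}\right)$ w.h.p. Since the geometric sum over $\ell$ is $\mathcal O(1)$, we get $|v_{d-1}-p_{d-1}^*|=\mathcal O\left(\frac{\log n}{n}\right)$, and since $h(t)$ agrees with $p_{d-1}^*$ on the first $\frac{d-1}{d}\log n$ bits, $|p_{d-1}^*-h(t)|<2^{-\frac{d-1}{d}\log n}=n^{-(d-1)/d}=q/n$. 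Hence the first phase leaves the request at a node $v_{d-1}$ with $|v_{d-1}-h(t)|\le q/n+\mathcal O\left(\frac{\log n}{n}\right)$.

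For the second phase let $u^*$ be the node closest to $h(t)$; it satisfies $|u^*-h(t)|=\mathcal O\left(\frac{\log n}{n}\right)$ w.h.p., and $u^*.id=t$ iff $t\in V$ (assuming the hash values of the relevant IDs are distinct, which holds w.h.p.), so terminating the search at $u^*$ is correct in the success as well as in the failure case. It remains to show that $u^*$ lies in the $q$-neighborhood of $v_{d-1}$, i.e. that at most $c\cdot q$ nodes are closer to $v_{d-1}$ than $u^*$ is. Those nodes lie in an interval of length $2|v_{d-1}-u^*|\le 2q/n+\mathcal O\left(\frac{\log n}{n}\right)$ around $v_{d-1}$, whose expected number of nodes is $2q+\mathcal O(\log n)$; since $q=n^{1/d}=\omega(\log n)$ for constant $d$, the upper-tail Chernoff bound of Theorem~\ref{theorem:chernoff} shows that w.h.p. this interval contains at most $(1+\delta)(2q+\mathcal O(\log n))<c\cdot q$ nodes for a suitably small constant $\delta>0$ chosen using $c>2$ (for $n$ large enough) — here one also takes a union bound over a $1/\mathrm{poly}(n)$-net of possible target points to handle the adversarial choice of $t$. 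Consequently $v_{d-1}$ has an explicit edge $(v_{d-1},u^*)\in E_q$, the greedy step of the second phase forwards the request from $v_{d-1}$ directly to $u^*$, and \textsc{DeBruijnSearch} terminates after $(d-1)+1=d$ hops w.h.p. (and never more than $d$, since the second phase may already start at $u^*$).

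The step I expect to be the main obstacle is the first-phase drift analysis combined with the constant bookkeeping in the second phase: one must argue that routing to the \emph{closest} node at every de Bruijn hop does not let the request wander off the ideal de Bruijn trajectory, and then verify that the residual $\mathcal O\left(\frac{\log n}{n}\right)$ drift together with the $q/n$ span of the still-unfixed low-order bits still fits inside a $q$-neighborhood. This is precisely where the hypothesis $c>2$ is used, namely to absorb the constant-factor slack in the Chernoff estimate on interval occupancy. Everything else — that all required de Bruijn edges exist in a legitimate state, and that the first phase performs $d-1$ hops — is immediate from Definition~\ref{def:topology} and the description of the routing procedure.
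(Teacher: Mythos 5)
Your proof follows the same two-phase decomposition as the paper --- $d-1$ general de Bruijn hops to fix the high-order bits of $h(t)$, then a single greedy hop via a $q$-neighborhood edge, with the upper-tail Chernoff bound (and the slack from $c>2$) ensuring that the node closest to $h(t)$ lies in the $q$-neighborhood of the landing node --- so this is essentially the paper's argument. Worth noting: your treatment is actually more careful than the paper's, which implicitly identifies the node reached after each hop with the ideal de Bruijn point and thus never confronts the rounding drift; you explicitly control the accumulated drift via a geometric sum of per-hop gaps, showing it is $\mathcal{O}\!\left(\frac{\log n}{n}\right)$ w.h.p.\ and hence dominated by the $q/n$ span of the still-unfixed low-order bits, which closes a small gap that the published proof glosses over.
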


\begin{proof}
Let $(t_1,\ldots,t_k)_2$ be the bits of the target destination $h(t)$ for some $k \in \mathbb{N}$.
In the first phase of \textsc{DeBruijnSearch}, we perform $d-1$ general de Bruijn hops.
Lemma~\ref{lemma:dbhop_correctness} implies that we arrive at some node $v$ with bit address $(v_1,\ldots,v_l)_2$, $l \in \mathbb{N}$ and \[v_i = t_i\ \forall\ i \in \left\{1,\ldots,\left\lceil\frac{d-1}{d}\log(n)\right\rceil\right\}.\]
Assume $k = l$ for convenience.
It holds $remHops = 0$ at this point, so our algorithm switches to the second phase.
The remaining bits that need to be fixed are the bits $t_{\lceil\frac{d-1}{d}\log(n)\rceil+1},\ldots,t_k$.
We show that these bits can be fixed in only one hop via the $q$-neighborhood w.h.p., because in worst case it holds \[v_i \neq t_i\ \forall\ i \in \left\{\left\lceil\frac{d-1}{d}\log(n)\right\rceil + 1,\ldots,k\right\},\] so the maximum distance between $v$ and $h(t)$ on the $[0,1)$-interval is equal to

\begin{eqnarray*}
\sum_{i=\lceil\frac{d-1}{d}\log(n)\rceil+1}^{k} \frac{1}{2^i} &\leq & \underbrace{\left(\sum_{i=1}^{k} \frac{1}{2^i}\right)}_{\leq 1\text{ for } k \rightarrow \infty} \cdot \frac{1}{n^{\frac{d-1}{d}}}\\
&\leq & \frac{1}{n^{\frac{d-1}{d}}} = \frac{\sqrt[d]{n}}{n} = \frac{q}{n}\\
\end{eqnarray*}

We need to show that $t \in v.Q$ with high probability, when $|t-v| \leq \frac{q}{n}$.
This is equivalent to showing that the probability of $c\cdot q$ or more nodes being in $I = [v-\frac{q}{n}, v+ \frac{q}{n}]$ is low.
For all $w \in V$ let $X_w$ be a binary random variable with
\begin{equation*}
   X_w =
   \begin{cases}
     1, & \text{if }w \in I \\
     0, & \text{otherwise.} 
   \end{cases}
\end{equation*}
Then it holds $Pr[X_w = 1] = 2q/n$ and $E[X_w] = 2q/n$.
Define \[X:=\sum^{}_{w \in V} X_w.\]
Then $\mu = E[X] = \sum^{}_{w \in V} E[X_w] = n\cdot \frac{2q}{n} = 2q$.
Following the standard Chernoff bound (Theorem~\ref{theorem:chernoff}(1)) with $\delta = c/2-1$, we get
\begin{eqnarray*}
Pr[X \geq c\cdot q] & \leq & e^{(-(c/2-1)^2 \cdot 2q)/(2((1+c/2-1)/3))}\\
& = & e^{(-(c/2-1)^2 \cdot 2q)/(c/3)}\\
& = & e^{-pq} \text{, for } p := (6(c/2-1)^2)/c\\
& = & e^{-p\sqrt[d]{n}}\\
& \leq & e^{-p \ln(n)} \text{, for } n \text{ high enough}\\
& = & n^{-p}
\end{eqnarray*}
\end{proof}

Notice that Theorem~\ref{theorem:routing} still holds when $q$ is not exactly accurate but only a value in $\Theta(\sqrt[d]{n})$, because $\ln(n) \in \Theta(\sqrt[d]{n})$.
This is important, because our self-stabilizing protocol in the next section uses approximations of $q$, resp. $\log(n)$.

\section{The BuildQDeBruijn Protocol} \label{sec:desc}
In this section we describe the \textsc{BuildQDeBruijn} protocol.
We construct the protocol out of sub-protocols for each edge type mentioned in Definition~\ref{def:topology}.
The pseudocode can be found in Appendix~\ref{appendix:pseudocode}.

\subsection{Node Variables} \label{sec:desc:variables}
We first give an overview over the variables of each node:

\begin{definition}
Given a GDB $G$, each node $v \in V$ has the following variables:
\begin{description}
	\item[-] Variables $v.left, v.right \in V \cup \{\perp\}$ storing $v$'s left and right list neighbor.
	\item[-] A variable $v.q \in 2^k$, $k \in \mathbb{N}$ storing an approximation of $\frac{1}{2}\sqrt[d]{n}$.
	\item[-] A set $v.Q := \{q_1, \ldots ,q_{c \cdot 2v.q}\} \subset V$ storing nodes for $v$'s $q$-neighborhood.
	\item[-] Variables $v.db(i,j) \in V \cup \{\perp\}$, for all $i \in \{1, \ldots ,\log(2v.q)\}$, $j \in \{0, \ldots ,2^i - 1\}$ representing $v$'s de Bruijn edges. Denote the union of $v$'s de Bruijn edges by the set $v.db = \bigcup_{i,j} v.db(i,j)$
\end{description}
\end{definition}

Observe that $v.db(1,0)$ and $v.db(1,1)$ represent $v$'s standard de Bruijn edges.
If our protocol has to call an action on a node stored in variable $u$, it only executes this call, if $u \neq \perp$.
\textsc{BuildQDeBruiijn} consists of four sub-protocols: One for list edges, one for $q$-neighborhood edges, one for standard de Bruijn edges and a sub-protocol for general de Bruijn edges.
We describe each sub-protocol individually in the following sections.

\subsection{List Edges}
The base of our self-stabilizing protocol consists of a sorted list for all nodes $v \in V$ over the $[0,1)$-interval.
We use the \textsc{BuildList} protocol from \cite{DBLP:conf/alenex/OnusRS07}, where each node only keeps its closest left ($v.left$) and right ($v.right$) list neighbor.
In every call of \textsc{Timeout}, each node introduces itself to $v.left$ and $v.right$, by sending a \Call{Linearize}{$v$} request to them.
When calling \Call{Linearize}{$v$} on a node $u$, $u$ sets $u.left = v$, if $v$ is left of $u$ and closer to $u$ than $u.left$.
The old value $o$ of $u.left$ is then delegated to the node $\bar{q} \in u.Q$ that is closest to $o$ by calling \Call{Linearize}{$o$} on $\bar{q}$.
In case $u.left = \perp$, $u$ just sets $u.left = v$.
If $v$ is left of $u$ and $u.left$ is closer to $u$ than $v$, then $u$ delegates $v$ as described above.
Node $u$ proceeds analogously for $u.right$ in case $v$ is right of $u$.
Thus, node references are never deleted, but always delegated until the node arrives at the correct spot in the sorted list.
We get the following theorem from~\cite{DBLP:conf/alenex/OnusRS07}:

\begin{theorem} [\cite{DBLP:conf/alenex/OnusRS07}]\label{list:self_stabilizing}
\textsc{BuildList} is self-stabilizing:
\begin{description}
	\item[-] Convergence: \textsc{BuildList} converts any weakly connected graph $G = (V, E_L)$ into a sorted list.
	\item[-] Closure: If the explicit list edges in $G = (V,E_L)$ already form a sorted list, then these edges are preserved by \textsc{BuildList}.
\end{description}

\end{theorem}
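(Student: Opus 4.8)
The plan is to establish closure and convergence separately, following the standard linearization argument of \cite{DBLP:conf/alenex/OnusRS07}. Closure is immediate: if the explicit edges already form a sorted list, then for every node $v$ the pointers $v.left$ and $v.right$ are $v$'s true closest left and right neighbors, and any \Call{Linearize}{$r$} message, whether one residing in a channel or one generated by a \textsc{Timeout}, can only replace such a pointer by a strictly closer node on the same side, of which none exists; hence no explicit list edge ever changes.

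For convergence I would argue in three steps. First, weak connectivity of the graph of all explicit and message edges is an invariant: processing \Call{Linearize}{$r$} at a node $p$ either makes $r$ an explicit out-neighbor of $p$ or re-sends $r$ to a node $\bar q$ that is already an out-neighbor of $p$ (namely one of $p.left$, $p.right$, or a member of $p.Q$), and \textsc{Timeout} only adds edges, so every old undirected connection survives and the (by assumption weakly connected) component stays connected. Second, I introduce the potential
\[
\Phi \;=\; \sum_{\{u,w\}} \bigl|\{\, z \in V : z \text{ lies strictly between } u \text{ and } w \,\}\bigr| \;+\; n\cdot\bigl(\#\text{ node pointers currently equal to } \perp\bigr),
\]
where the first sum ranges over unordered pairs of distinct nodes joined by at least one edge, explicit or carried in a message; this $\Phi$ is a nonnegative integer. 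A \textsc{Timeout} at $v$ only creates the edges $(v.left,v)$ and $(v.right,v)$, whose endpoint pairs are already joined by $v$'s explicit pointers, so it leaves $\Phi$ unchanged; and processing \Call{Linearize}{$r$} at $p$ either changes no pointer, or installs a strictly closer $r$ on the same side in place of the old value $o$ and re-sends $o$ to the node $\bar q \in p.Q$ closest to $o$, and since $o$ is strictly farther from $p$ than $r$ and lies on the same side, $\bar q$ lies strictly between $p$ and $o$ (or $o = \perp$), so the pair $\{p,o\}$ gives way to a strictly lighter pair $\{\bar q, o\}$ and $\Phi$ drops by at least $1$. As a pointer, once different from $\perp$, never reverts to $\perp$, these are all the possibilities and $\Phi$ never increases. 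Third, since $\Phi$ is a nonnegative integer it eventually stabilizes, after which no \Call{Linearize}{} modifies a pointer; I then argue that such a stable, weakly connected configuration is already a sorted list, because otherwise following a \textsc{Timeout}-generated self-announcement through the delegation rule yields, after finitely many delegations, a message that would force a pointer change, a contradiction. Combined with fair message receipt, so that the finitely many corrupted messages present initially (which carry only legal IDs, the protocol being compare-store-send) are eventually consumed, this gives convergence to a legitimate state of \textsc{BuildList}.

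The step I expect to be the main obstacle is the second half of the potential analysis together with the claim that a stable configuration is already sorted. One must verify that re-delegating the displaced reference $o$ to the $p.Q$-node closest to $o$, rather than to the freshly installed neighbor as in plain linearization, really moves $o$ toward its destination and hence happens only finitely often, which relies on $p.Q$ always containing $p.left$ and $p.right$ so that the delegation target is never $p$ itself whenever a strictly closer candidate exists. One must also rule out that a stable but unsorted explicit configuration persists, i.e., that self-announcements cannot keep circulating forever without ever reaching a node where they would update a pointer. A smaller point is to confirm that the corrupted messages present initially contain only valid node references, which holds because no identifiers are corrupted and the protocol works in compare-store-send mode, so that these messages are absorbed by the same argument instead of being regenerated indefinitely.
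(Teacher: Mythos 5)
The paper does not prove this theorem itself; it is imported from Onus, Richa, and Scheideler's linearization paper---the citation tag on the theorem and the sentence ``We get the following theorem from~\cite{DBLP:conf/alenex/OnusRS07}'' make this explicit---so there is no internal proof to compare your argument against. Your reconstruction follows the standard linearization route (a connectivity invariant plus a non-increasing potential over pairs of neighboring references), which is the right family of argument and is in the spirit of the cited source.

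That said, your potential has a concrete gap. The paper's \textsc{Timeout} for \textsc{BuildList} (Algorithm~\ref{algo:buildlist}) handles a corrupted pointer $v.right \leq v$ by re-sending the old value as a \textsc{Linearize} message and then setting $v.right \gets \perp$. Under your $\Phi$, the first sum is unchanged (the pair $\{v,o\}$ is still joined, now only implicitly via the message) while the second term jumps by $n$, so $\Phi$ strictly increases, breaking the monotonicity you rely on. To repair this, observe that a pointer stored on the correct side of $v$ (node IDs are never corrupted; only references in the initial state can be misplaced) is never again reset to $\perp$, so such increases happen at most $2n$ times in total and can be amortized away, or the $\perp$-count in the second term should be replaced by a count of still-corrupted pointers. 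A smaller mismatch: you say the displaced value $o$ is re-sent to the $p.Q$-node closest to $o$, but the pseudocode's interior branch forwards $o$ directly to the freshly installed neighbor $u$; the $p.Q$-based \textsc{Delegate} appears only in the branch where the incoming reference is farther than the current pointer. This does not change the conclusion, but the potential bookkeeping should track the code that is actually executed.
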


Theorem~\ref{list:self_stabilizing} does not suffice to guarantee convergence for the sorted list in our protocol because we just require $G = (V, E_L \cup E_q \cup E_{dB} \cup E_{q-dB})$ to be weakly connected.
Therefore, we \emph{downgrade} (non-list) edges represented by sets $v.Q$ and $v.db$, if they are closer to $v$ than $v.left$ or $v.right$: Downgrading some node $u$ stored in one of these sets is done in \textsc{Timeout} of each sub-protocol other than \textsc{BuildList}, by locally calling \textsc{Linearize}($u$).
Similarly we may \emph{upgrade} list edges represented by $v.left$ and $v.right$ in case they are a better fit w.r.t. Definition~\ref{def:topology} than nodes stored in sets $v.Q$ and $v.db$.
Upgrading is done by copying the node reference from $v.left$, resp. $v.right$ and storing the copy in $v.Q$ or $v.db$.
Figure~\ref{fig:protocol_dependencies} illustrates the interaction between sub-protocols of \textsc{BuildQDeBruijn}.

\begin{figure}[ht]
	\centering
 	\includegraphics[scale=0.76]{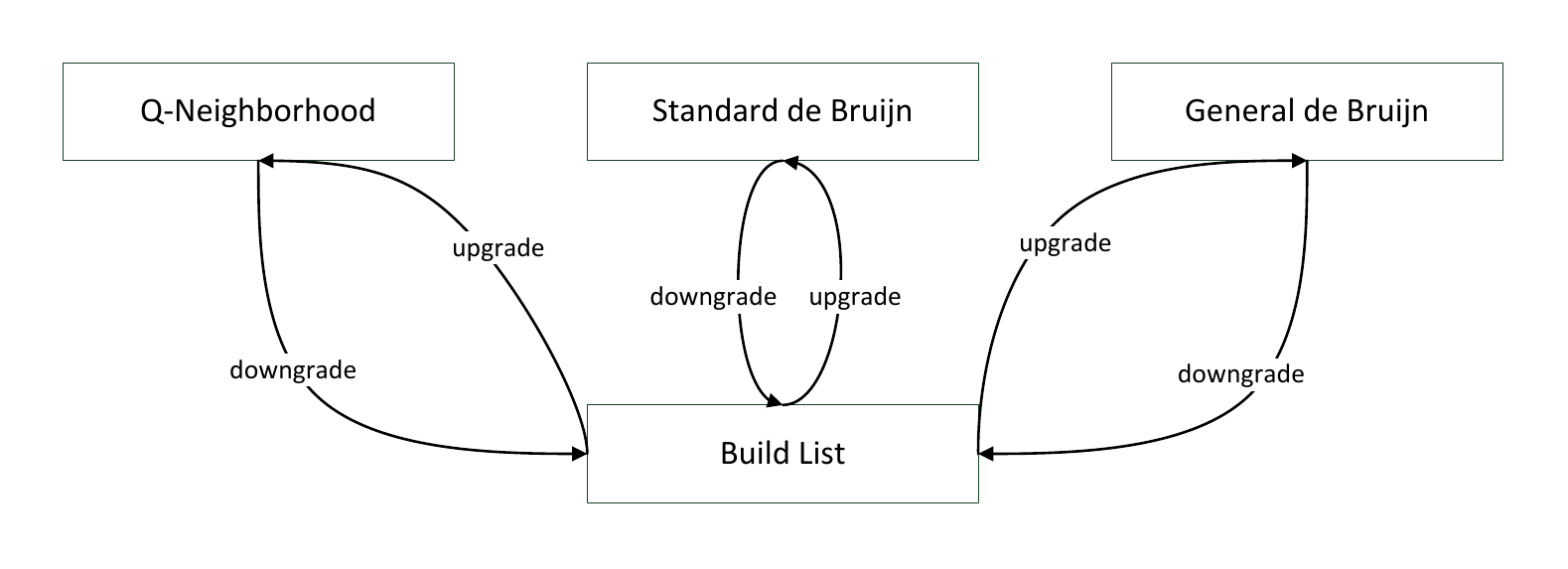}
	\caption{Interaction between all sub-protocols of \textsc{BuildQDeBruijn}.}
	\label{fig:protocol_dependencies}
\end{figure}

\subsection{Q-Neighborhood} \label{q_v_neighborhood}
Every node $v \in V$ needs to keep edges to its closest $c\cdot q = c\sqrt[d]{n}$ neighbors.
Since $v$ is not able to determine the exact value of $\sqrt[d]{n}$ locally, it stores an approximation in its variable $v.q$.
Instead of aiming for $v.q \approx \sqrt[d]{n}$, we aim for $v.q \approx\frac{1}{2}\sqrt[d]{n}$ for convergence reasons.
Whenever we want to use the (approximated) value $\sqrt[d]{n}$ at $v$, we just use $v.q$ multiplied by $2$.
If $v$ modifies $v.q$, we call this event a \textit{$v.q$-update}.
Using $v.q$, $v$ maintains the set $v.Q := \{q_1, \ldots ,q_{c\cdot 2\cdot v.q}\} \subset V$ storing the $c\cdot 2\cdot v.q$ nodes closest to $v$.
For $v.q \approx \frac{1}{2}\sqrt[d]{n}$, it holds $|v.Q| \approx c\cdot \sqrt[d]{n}$.
As soon as the system is in a legitimate state, it holds for any node $u \neq v$ with $u \not \in v.Q$ that $|u-v| > \max_{i \in \{1, \ldots ,c\cdot 2\cdot v.q\}}\{|q_i-v|\}$, i.e., $v.Q$ contains $v$'s closest $c \cdot \sqrt[d]{n}$ list neighbors.
Next we describe how our protocol updates $v.Q$ and $v.q$. 

To keep $v.Q$ updated at any time, $v$ does the following: In each call of \textsc{Timeout}, $v$ picks $q_k \in v.Q$ in a round-robin fashion and introduces $q_k$ to its closest list neighbor in the direction of $v$ by calling \Call{Introduce}{$\tilde{q}$, $v$} on $q_k$. 
The node $\tilde{q}$ is determined as follows: If $q_k = v.left$ or $q_k = v.right$, then $\tilde{q} = v$. 
Otherwise, $v$ sets $\tilde{q}$ based on $q_k$ being left or right of $v$: If $q_k < v$, then $\tilde{q} = q_{k+1}$, otherwise $\tilde{q} = q_{k-1}$.

When some node $u$ receives an \Call{Introduce}{$\tilde{q}$, $v$} request, $u$ updates $u.Q$ by choosing the closest $c\cdot 2\cdot u.q$ neighbors from $u.Q \cup \{\tilde{q}\}$.
Nodes $\bar{q} \in u.Q \cup \{\tilde{q}\}$ that are not part of the updated set $u.Q$ are delegated via the \textsc{BuildList} protocol by locally calling \textsc{Linearize}($\bar{q}$).
Afterwards, $u$ responds by sending an \Call{Introduce}{$\{l\}$, $\perp$} message to $v$, where $l = u.left$, if $u.right$ is closer to $v$ than $u.left$, or $l = u.right$ otherwise.
This has to be done in order to guarantee that every node $v$ eventually has a complete set $v.Q$ with $|v.Q| = c\cdot 2\cdot v.q$.
Note that the second parameter is set to $\perp$ for this response, in order to avoid an infinite loop of message calls between two nodes.

\begin{figure}[ht]
	\centering
 	\includegraphics[scale=1]{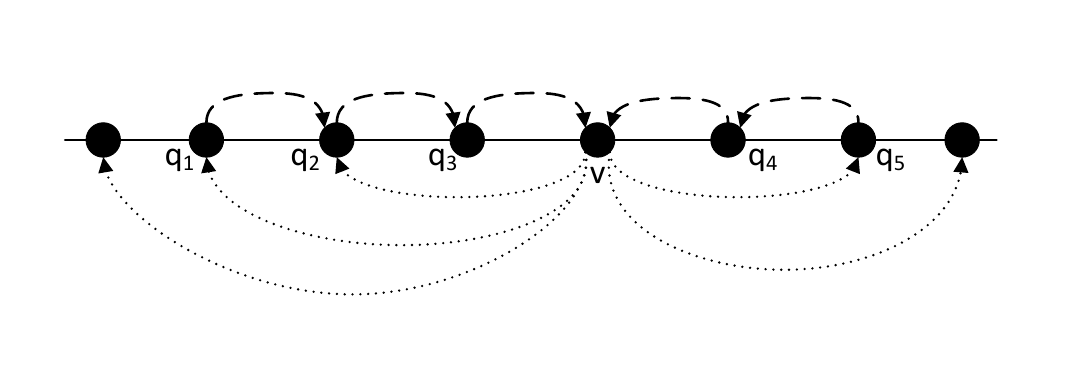}
	\caption{Implicit edges generated after $v$ has chosen $q_1,\ldots,q_5$ once in \textsc{Timeout}. The dotted implicit edges are generated by the responses sent out from $q_1,\ldots,q_5$ to $v$.}
	\label{fig:introduction_illustration}
\end{figure}

To keep $v.q$ updated at node $v$, $v$ periodically checks if $v.q$ is within the interval $(\frac{1}{4}\sqrt[d]{n}, \sqrt[d]{n})$.
Recall that we require $v.q$ to be a power of $2$, i.e., $v.q = 2^k$ for some $k \in \mathbb{N}$.
If $v.q \not\in (\frac{1}{4}\sqrt[d]{n}, \sqrt[d]{n})$, it has to be updated.
Notice that we have to avoid updating $v.q$ too frequently because each update changes the set $v.Q$, implying a higher workload for $v$.
The way we approximate $v.q$ is the following: We calculate values
	\[a_i = \left|2^d \cdot |q_1-q_{2^i\cdot v.q}| - \left(\frac{1}{2^i \cdot v.q}\right)^{d-1}\right|,\]
for all $i \in \{-\log(v.q), \ldots ,0,1\}$.
Out of those $a_i$, we compute $j$ such that $a_j = \min_i\{a_i\}$ and multiply $v.q$ by $2^j$. 
As the next lemma shows, this leads to $v.q$ becoming stable, i.e., $v.q$ is not updated anymore at some point in time.

\begin{lemma}\label{q_v_approx_lemma}
Consider a sorted list over the interval $[0,1)$ and a node $v \in V$.
After at most $\log(\sqrt[d]{n})$ $v.q$-updates, $v.q \in (\frac{1}{4}\sqrt[d]{n}, \sqrt[d]{n}) = \Theta(\sqrt[d]{n})$ w.h.p. and $v.q$ does not get updated anymore as long as no nodes join or leave the system.
\end{lemma}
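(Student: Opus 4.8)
The plan is to reduce both assertions to the behaviour of the quantities $a_i$ that govern each $v.q$-update. Concretely, I would show that w.h.p.\ $a_i$ is, up to a $(1\pm o(1))$ factor, a fixed unimodal function of $m:=2^i\,v.q$ whose unique minimiser is $\Theta(\sqrt[d]{n})$, and then read off the statement from the shape of that function together with the \emph{asymmetric} admissible step sizes: the index range $\{-\log(v.q),\dots,0,1\}$ lets a single update rescale $v.q$ so that $2^i v.q\in\{1,2,\dots,v.q,2\,v.q\}$, i.e.\ an arbitrary downward factor but at most the upward factor $2$. Throughout I use the hypothesis that the explicit list edges already form a sorted list, so (running the $q$-neighbourhood sub-protocol on a sorted list) $v.Q$ converges to the set of the $c\cdot 2\,v.q$ nodes closest to $v$; since the periodic check deliberately throttles $v.q$-updates, I may assume that between two consecutive updates enough rounds pass for $v.Q$ to re-fill after the preceding update changed its target size, so that whenever $v$ evaluates the $a_i$ the node $q_{2^i v.q}$ is genuinely the $(2^i v.q)$-th closest node to $v$.

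The probabilistic core is a concentration bound on the neighbour distances. For a radius $r$ the number $N(r)$ of nodes in $[v-r,v+r]$ is a sum of independent indicators with $E[N(r)]=2rn$, so --- exactly as in the proof of Theorem~\ref{theorem:routing}, via Theorem~\ref{theorem:chernoff} and Lemma~\ref{cor:nodedistance} --- $N(r)=(1\pm\delta)\,2rn$ with probability $1-n^{-\Omega(1)}$ whenever $2rn=\Omega(\log n)$. Taking $r$ with $2rn=m$ shows the $m$-th closest node sits at distance $\Theta(m/n)$ from $v$, hence $|q_1-q_m|=(1\pm o(1))\cdot\gamma\,m/n$ for a constant $\gamma=\Theta(1)$, simultaneously for every node $v$ and every index with $m=2^iv.q=\Omega(\log n)$ (union bound over the $O(\log n)$ indices and the $n$ nodes). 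For the remaining, small indices ($m=O(\log n)$) both $q_1$ and $q_m$ lie within $O(\log n/n)$ of $v$, so $2^d|q_1-q_m|=o(1)$ while $m^{1-d}=\Omega\bigl((\log n)^{1-d}\bigr)$; thus there $a_i\approx m^{1-d}$, which is strictly decreasing in $m$, so among small indices the largest $m$ (that is, $i=1$) has the smallest $a_i$. This is exactly what makes $v.q$ keep doubling during the climb.

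Substituting the concentration into the definition of $a_i$ gives $a_i\approx g(m):=\bigl|\,c'\,m/n-m^{1-d}\,\bigr|$ with $m=2^i v.q$ and $c'=2^d\gamma=\Theta(2^d)$; $g$ is strictly decreasing then strictly increasing on $(0,\infty)$, with a unique zero at $m^*=(n/c')^{1/d}=\Theta(\sqrt[d]{n})$, and $g$ is unimodal in $\log m$. Hence an update sends $v.q$ to the admissible power of two $2^j v.q$ whose logarithm is closest to $\log m^*$, subject to $j\le 1$ (the lower end $-\log(v.q)$ of the index range never binds once $n$ is large, since $m^*\ge 1$). Writing $\ell=\log v.q$, this map is $\ell\mapsto\ell+\min\{1,\mathrm{round}(\log m^*)-\ell\}$, whose unique fixed point is $P:=2^{\mathrm{round}(\log m^*)}$ --- the power of two nearest $m^*$ --- satisfying $P\in[m^*/\sqrt2,\sqrt2\,m^*]$; a direct (somewhat technical) check of the constant $c'$ places this interval inside $(\frac{1}{4}\sqrt[d]{n},\sqrt[d]{n})$, which is where that interval in the statement comes from. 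If $v.q>\sqrt2\,m^*$ then $P$ is an admissible down-step, so the very next update sets $v.q=P$; if $v.q<m^*/\sqrt2$ the update exactly doubles $v.q$, so from any start with $v.q\ge1$ at most $\mathrm{round}(\log m^*)=\log\sqrt[d]{n}-\Theta(1)\le\log\sqrt[d]{n}$ doublings reach $P$. Either way at most $\log\sqrt[d]{n}$ updates occur, after which $v.q=P\in(\frac{1}{4}\sqrt[d]{n},\sqrt[d]{n})$ and the admissible minimiser of $g$ is attained at $i=0$, so $v.q$ is multiplied by $1$ forever --- until a join/leave changes $n$ and hence $m^*$. This establishes both claims.

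The main obstacle is the interplay between the $o(1)$ relative error of the concentration bound and the (constant) multiplicative width of the target window: one must check that this error cannot flip which admissible index minimises $g$ near $m^*$ --- so that the settled value really lands inside $(\frac{1}{4}\sqrt[d]{n},\sqrt[d]{n})$ and the dynamics does not oscillate --- and that the ``noise floor'' $\Theta((\log n)^{1-d})$ coming from the small indices provably exceeds the near-optimal value $a_{i^*}=O(n^{1/d-1})$, so those indices are never selected once $v.q$ is near its target. Pinning down the constant $c'$ (equivalently $\gamma$) sharply enough to certify $P\in(\frac{1}{4}\sqrt[d]{n},\sqrt[d]{n})$, rather than merely $P=\Theta(\sqrt[d]{n})$, is the one genuinely computational point; the rest is the unimodality-plus-asymmetric-steps argument above.
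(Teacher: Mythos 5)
Your proposal follows essentially the same route as the paper's proof: use Chernoff concentration to relate the observed gap $|q_1-q_m|$ to $m/n$, then argue that $a_i$ (the paper's $f$) is small precisely when $2^iv.q\approx\frac12\sqrt[d]{n}$, then count updates. What you add is structure that the paper leaves implicit: you formalize $a_i$ as (approximately) a unimodal function $g(m)=|c'm/n-m^{1-d}|$, recast the update rule as a discrete dynamical system $\ell\mapsto\ell+\min\{1,\mathrm{round}(\log m^*)-\ell\}$ with a unique fixed point, and observe the asymmetry of admissible steps (arbitrary downward, at most one doubling upward) that underlies the paper's remark that an over-large $v.q$ is corrected in a single update while an under-small one may need up to $\log\sqrt[d]{n}$ doublings. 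The paper instead splits into two ad hoc Chernoff applications ($v.q$ too small vs.\ too large) and simply asserts that ``when $v.q$ gets closer to $\frac12\sqrt[d]{n}$, $f(v.q)$ gets smaller'' without establishing unimodality or analysing how the $\argmin$ over the full index range behaves. The gaps you flag at the end --- that the $(1\pm o(1))$ multiplicative error in the concentration estimate could in principle perturb the $\argmin$ by a constant number of dyadic steps, that the noise floor $\Theta((\log n)^{1-d})$ at small $m$ must be shown to exceed $a_{i^*}$, and that the constant $c'$ must be pinned down to place the fixed point inside $(\frac14\sqrt[d]{n},\sqrt[d]{n})$ rather than merely $\Theta(\sqrt[d]{n})$ --- are genuine, but they are gaps the paper's own proof also does not close (and at least one of them, the window-placement computation, is easy: $m^*=\frac12\sqrt[d]{n}$, so the nearest power of two lies in $[\sqrt[d]{n}/(2\sqrt2),\sqrt[d]{n}/\sqrt2]\subset(\frac14\sqrt[d]{n},\sqrt[d]{n})$). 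Your version is therefore a cleaner and more honest account of the same argument rather than a different one.
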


\begin{proof}
For a node $v \in V$ consider the function \[f(v.q) = \left| 2^d\cdot \left| q_1 - q_{v.q} \right| - \left( \frac{1}{v.q}\right)^{d-1}\right|\] with $q_1, q_{v.q} \in v.Q$ being the nodes with minimum, resp. maximum hash value.
This function $f$ represents the relation between the choice of $v.q$ and the number of nodes in the interval $I = [v-\frac{v.q}{n}, v + \frac{v.q}{n}]$.
When $v.q$ gets closer to $\frac{1}{2}\sqrt[d]{n}$, $f(v.q)$ gets smaller, which results in $v.q$ getting stable, since our approximation approach searches for a $v.q$ that minimizes $f$.
On the other hand, if $f$ grows larger, then $v.q$ gets more and more inaccurate compared to its approximation target $\frac{1}{2}\sqrt[d]{n}$.
Notice that the expected distance between nodes $q_1$ and $q_{v.q}$ is equal to $v.q/n$, which implies that the closer $f(v.q)$ is to $0$, the closer $v.q$ is to $\frac{1}{2}\sqrt[d]{n}$.

We show for a value $v.q$ that is too small, the number of nodes in the interval $I$ for $v$ is too small w.h.p., which results in $|q_1 - q_{v.q}|$ and ultimately $f(v.q)$ getting too large.
Notice that in this case $q_1$ and $q_{v.q}$ are not contained in $I$ w.h.p.
Similarly we show for a value $v.q$ that is too large, the number of nodes in the interval $I$ of $v$ is too large, which also results in $f$ getting too large.

Assume that $v.q$ is chosen too small, i.e., $v.q \leq \frac{1}{4} \sqrt[d]{n}$.
Then it holds $|I| \leq \frac{\sqrt[d]{n}}{2n}$.
For all $w \in V$ let $X_w$ be a binary random variable with
\begin{equation*}
   X_w =
   \begin{cases}
     1, & \text{if } w \in I \\
     0, & \text{otherwise.} 
   \end{cases}
\end{equation*}
Then it holds $Pr[X_w = 1] = \frac{\sqrt[d]{n}}{2n}$ and for $X := \sum_{w \in V} X_w$ it holds $\mu = E[X] = n\cdot \frac{\sqrt[d]{n}}{2n} = \sqrt[d]{n}/2$.
Following the standard Chernoff bound (Theorem~\ref{theorem:chernoff}(1)), we get for all $\delta > 0$:

\begin{eqnarray*}
Pr[X \geq (1 + \delta) \cdot \mu] & \leq & e^{-\frac{-\delta^2\sqrt[d]{n}}{4(1+\delta/3)}}\\
& = & e^{-p\sqrt[d]{n}} \text{, for } p = \frac{\delta^2}{4(1+\delta/3)}.\\
& \leq & n^{-p} \text{, for } n \text{ high enough}.
\end{eqnarray*}

Thus the number of nodes in $I$ is too small w.h.p., which results in $v.q$ getting increased by our algorithm as $f(v.q)$ is too large.

Now assume that $v.q$ is chosen too large, i.e., $v.q \geq \sqrt[d]{n}$.
Then it holds $|I| \geq 2\sqrt[d]{n}/n$.
We define the same binary random variables $X_w$ for all $w \in V$ as above and get $Pr[X_w = 1] \geq 2\sqrt[d]{n}/n$ as well as $\mu = E[X] = n \cdot \frac{2\sqrt[d]{n}}{n} = 2\sqrt[d]{n}$.
Following the standard Chernoff bound (Theorem~\ref{theorem:chernoff}(2)) results in
\begin{eqnarray*}
Pr[X \leq (1 - \delta) \cdot \mu] & = & Pr[X \leq (1- \delta) \cdot 2\sqrt[d]{n}]\\
& \leq & e^{-\frac{-\delta^2 \cdot 2\sqrt[d]{n}}{2}}\\
& = & e^{-\delta^2 \cdot \sqrt[d]{n}}\\
& \leq & n^{-\delta^2} \text{, for } n \text{ high enough and all } 0 < \delta < 1.
\end{eqnarray*}

Thus there are too many nodes in $I$ w.h.p., resulting in $v.q$ getting halved by our algorithm as $f(v.q)$ is too large.

We only need at most $\mathcal O(\log(\sqrt[d]{n}))$ $v.q$-updates until we found the correct $v.q$, because we need to multiply $v.q$ at most $\log(v.q)$ times with $2$ in case $v.q \leq \frac{1}{4}\sqrt[d]{n}$ initially.
For any $v.q \geq \sqrt[d]{n}$ we just need one $v.q$-update, because we compute $a_i$ ($i \in \{-\log(v.q),\ldots,0,1\}$) for all candidates $2^iv.q$ that are lower than $v.q$ and find the best fitting candidate in one step.
\end{proof}

Note that Lemma~\ref{q_v_approx_lemma} only holds if for a fixed value $v.q$, $v.Q$ eventually contains the correct nodes.
But this can be shown as part of the overall convergence (Lemma~\ref{lemma:convergence:II}).

In addition to the approximation of $\sqrt[d]{n}$, we need an approximation of $\log(n)$ at every node $v \in V$ in order to perform routing.
We approximate $\log(n)$ similar to the approach for computing $v.q$: For all $i \in \{\frac{1}{2}v.q, \ldots 2v.q\}$, we compute a value 	
	\[a_i = \left|2^d \cdot |q_1-q_{i}| - \left(\frac{1}{i}\right)^{d-1}\right|\]
and set $\log(n) = \log((2\cdot \argmin_i\{a_i\})^d)$, as $\argmin_i\{a_i\}$ gives us the integer value $i$ that is closest to $\frac{1}{2}\sqrt[d]{n}$.
The resulting approximation for $\log(n)$ is even more precise than the one for $\sqrt[d]{n}$, as the following lemma states.
Recall that we chose to approximate $\sqrt[d]{n}$ with less precision in order to avoid updating $v.q$ too often.

\begin{lemma}\label{lemma:logn}
In a $q$-connected sorted list over the interval $[0,1)$, approximating $\log(n)$ eventually yields a value $\log(n) - \varepsilon, \varepsilon \in o(1)$ w.h.p. as long as no nodes join or leave the system.
\end{lemma}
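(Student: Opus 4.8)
The plan is to mirror the Chernoff argument behind Lemma~\ref{q_v_approx_lemma}, while exploiting that the $\log(n)$-estimator performs an exhaustive one-shot search over an integer window around the (already stabilised) value $v.q$ instead of a geometric doubling search --- this is what buys the extra precision. First I would invoke Lemma~\ref{q_v_approx_lemma}: after finitely many $v.q$-updates the value $v.q$ is stable and satisfies $v.q \in (\frac{1}{4}\sqrt[d]{n},\sqrt[d]{n})$ w.h.p., so $\frac{1}{2}\sqrt[d]{n}$ lies strictly inside $(\frac{1}{2}v.q, 2v.q)$ and the integer window $\{\frac{1}{2}v.q,\ldots,2v.q\}$ over which the estimator ranges contains $i^{*} := \lfloor \frac{1}{2}\sqrt[d]{n}\rceil$, the integer nearest to $\frac{1}{2}\sqrt[d]{n}$. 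It therefore suffices to prove that $\argmin_i a_i$ is within a multiplicative $(1\pm o(1))$ factor of $\frac{1}{2}\sqrt[d]{n}$ w.h.p.: writing $\argmin_i a_i = \frac{1+\eta}{2}\sqrt[d]{n}$ with $\eta \in o(1)$ then gives $\log((2\argmin_i a_i)^d) = \log(n) + d\log(1+\eta) = \log(n) - \varepsilon$ with $\varepsilon \in o(1)$, which is exactly the claim.

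Next I would study the idealised quantity $g(i) := 2^d\cdot\frac{i}{n} - (\frac{1}{i})^{d-1}$ obtained from $a_i$ by replacing $|q_1-q_i|$ with its expectation (by Lemma~\ref{cor:nodedistance} the expected gap between consecutive nodes is $\frac{1}{n}$, so the span of $i$ consecutive nodes has expectation $\approx \frac{i}{n}$); thus $a_i$ concentrates around $|g(i)|$. Since $g'(i) = \frac{2^d}{n} + (d-1)i^{-d} > 0$ and $g(i) = 0$ precisely at $i = \frac{1}{2}\sqrt[d]{n}$, the map $i \mapsto |g(i)|$ is uniquely minimised at $\frac{1}{2}\sqrt[d]{n}$. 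Quantitatively, with $i = \frac{1+\eta}{2}\sqrt[d]{n}$ one gets $|g(i)| = 2^{d-1}n^{1/d-1}\,|(1+\eta)-(1+\eta)^{-(d-1)}|$; hence $|g(i^{*})| = O(n^{-1})$ (there $|\eta| = O(n^{-1/d})$, so the bracket is $O(n^{-1/d})$), whereas $|g(i)| = \Omega(n^{1/d-1})$ for every candidate $i$ whose relative distance $|\eta|$ from $\frac{1}{2}\sqrt[d]{n}$ is bounded below by a constant.

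Finally I would bound the deviation of $a_i$ from $|g(i)|$ for every one of the $O(\sqrt[d]{n})$ candidates by the same Chernoff/union-bound scheme as in Lemma~\ref{q_v_approx_lemma}: for each $i$ the number of nodes in a window of length $\Theta(i/n)$ about $v$ is concentrated around its mean up to a multiplicative $1\pm O(\sqrt{\log n / i})$ factor with probability $1 - n^{-\Omega(1)}$, which forces $|q_1-q_i| = \frac{i}{n}(1\pm O(\sqrt{\log n / i}))$ and hence makes $a_i$ and $|g(i)|$ differ by only $O(\sqrt{i\log n}/n)$. As $i = \Theta(\sqrt[d]{n})$ this error is $O(n^{1/(2d)-1}\sqrt{\log n}) = o(n^{1/d-1})$, i.e.\ asymptotically smaller than the separation between $|g(i^{*})|$ and $|g(i)|$ for $i$ a constant factor off; a union bound over the candidates (which only costs a factor $\sqrt[d]{n}\cdot n^{-\Omega(1)} = n^{-\Omega(1)}$) then shows that w.h.p.\ $a_{i^{*}} = o(n^{1/d-1})$ while $a_i = \Omega(n^{1/d-1})$ for every far-off candidate, so no far-off candidate can be the minimiser, i.e.\ $\argmin_i a_i = \frac{1}{2}\sqrt[d]{n}\,(1\pm o(1))$; the reduction of the first paragraph then closes the proof (one also needs, exactly as for Lemma~\ref{q_v_approx_lemma}, that $v.Q$ already holds the correct nodes for the stabilised $v.q$, which is established as part of the overall convergence). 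The step I expect to be delicate is precisely this last one: the concentration must hold \emph{simultaneously} over all $\Theta(\sqrt[d]{n})$ candidates, and --- more importantly --- the resulting control on $\argmin_i a_i$ must be $o(1)$ \emph{relative} to $\sqrt[d]{n}$, since only a relative-error bound translates into the claimed additive $o(1)$ accuracy for $\log(n)$.
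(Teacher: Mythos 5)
Your proof takes essentially the same route as the paper's: invoke Lemma~\ref{q_v_approx_lemma} to guarantee that $\frac{1}{2}\sqrt[d]{n}$ lies inside the search window $\{\frac{1}{2}v.q,\ldots,2v.q\}$, and then apply the Chernoff-concentration technique from Lemma~\ref{q_v_approx_lemma} to argue that the minimiser of $a_i$ lies close to $\frac{1}{2}\sqrt[d]{n}$. The paper's write-up is extremely terse --- it just asserts that the estimator ``is able to find the real value $k=\lfloor\frac{1}{2}\sqrt[d]{n}\rfloor$ w.h.p.\ (with the same argumentation as in the proof of Lemma~\ref{q_v_approx_lemma})'' --- whereas you correctly note that the concentration cannot pin down the argmin exactly: adjacent candidates differ by only $\Theta(1/n)$ in the idealised value $|g(i)|$, which is below the $\Theta(\sqrt{i\log n}/n)$ noise floor of $a_i$. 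Your reformulation as a multiplicative $(1\pm o(1))$-accuracy statement, combined with the union bound over the $O(\sqrt[d]{n})$ candidates and the observation that $\log((2\cdot\argmin_i a_i)^d)=\log(n)+d\log(1\pm o(1))=\log(n)\pm o(1)$, is both what is actually provable and exactly what is needed for the lemma's conclusion. In that sense your write-up does not diverge from the paper's approach but substantially tightens it, filling in the concentration/union-bound details the paper leaves implicit and repairing the slight overstatement about exact recovery of $\lfloor\frac{1}{2}\sqrt[d]{n}\rfloor$.
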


\begin{proof}
From Lemma~\ref{q_v_approx_lemma} it follows that the real value for $\frac{1}{2}\sqrt[d]{n}$ lies within the interval $[\frac{1}{2}v.q, 2v.q]$.
The algorithm \textsc{Approximate\_log\_N} chooses the best absolute $a_k$ out of all $a_i$, $i \in \{\frac{1}{2}v.q, \ldots ,2v.q\}$ and thus is able to find the real value $k = \lfloor \frac{1}{2}\sqrt[d]{n} \rfloor$ w.h.p. (with the same argumentation as in the proof of Lemma~\ref{q_v_approx_lemma}).
Therefore, $\log((2k)^d)$ is equal to $\log(n) \pm \varepsilon$, $\varepsilon \in o(1)$.
\end{proof}

\subsection{Standard de Bruijn Edges}\label{sec:desc:sdb}
The idea to generate standard de Bruijn edges for a node $v \in V$ is as follows: In \textsc{Timeout}, $v$ sends out messages $P_0, P_1$.
We call such a message a \emph{probe}.
Probe $P_0$ stores the target location $\frac{v}{2}$ and $P_1$ stores the target location $\frac{v+1}{2}$ within itself.
We want a probe to reach the node in the system that is closest to the probe's target location.
A probe also stores $v$ itself, so that it can be sent back to $v$ immediately, once it arrives at the target node.
Recall that the two variables $v.db(1,0)$ and $v.db(1,1)$ contain the nodes that $v$ thinks are closest to the locations $\frac{v}{2}$, resp. $\frac{v+1}{2}$.
In the following, we explain the routing process for the probe $P_1$.
The routing for $P_0$ works analogously.

\begin{enumerate}
\item Forward $P_1$ to $u = v.right$.
\item Perform a standard de Bruijn hop by forwarding $P_1$ to $u.db(1,1)$.
\item Greedily forward $P_1$ via the $q$-neighborhood until some node $t$ is reached that is closest to $\frac{v+1}{2}$ based on its local view.
\item Store $t$ in $P_1$ and send the probe back to $v$, such that $v$ is able to set $v.db(1,1) = t$.
\end{enumerate}

Note that if the system has not reached a legal state yet, steps $1$ or $2$ may not be executed, since the respective variables are set to $\perp$.
In this case we proceed with step $4$, storing the most recently traversed node.
It is easy to see that once the sorted list along with the $q$-connected list has stabilized, $v.db(1,0)$ and $v.db(1,1)$ will eventually store the correct nodes.
If $v$ modifies $v.db(1,1)$, it delegates the old value for $v.db(1,1)$ away via the \textsc{BuildList} protocol.
This approach is efficient regarding the number of hops per probe as the following lemma shows:

\begin{lemma}\label{lemma:standardProbing}
Let the GDB $G$ be in a legitimate state.
Probes for standard de Bruijn edges only need $3$ hops w.h.p. to be routed from a node $v \in V$ to the node that is closest to the probe's target, namely $\frac{v+j}{2}$, $j \in \{0,1\}$.
\end{lemma}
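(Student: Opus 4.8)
The plan is to analyze the three-step routing process for probe $P_1$ described just before the lemma and bound the number of hops in each step, using the fact that $G$ is in a legitimate state so all relevant edges ($v.right$, $u.db(1,1)$, and the $q$-neighborhood edges) are present and point to the correct nodes. The target location of the probe is $\frac{v+j}{2}$ for $j \in \{0,1\}$, so by Lemma~\ref{lemma:dbhop_correctness} a single standard de Bruijn hop corresponds to prepending one bit to the bit string representation of the current node. First I would handle step 1: forwarding $P_1$ to $u = v.right$ is a single hop, and since consecutive list neighbors are at expected distance $\frac{1}{n}$ by Lemma~\ref{cor:nodedistance}, moving to $v.right$ changes the node's hash value by only $O(1/n)$, which is negligible compared to the $q/n$ scale of the $q$-neighborhood.

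Next I would treat step 2, the actual standard de Bruijn hop via $u.db(1,1)$. By Lemma~\ref{lemma:dbhop_correctness}, this hop takes us to the node closest to $\frac{u+1}{2}$, i.e., it prepends the bit determined by $j=1$ to the bit string of $u$. Since $u = v.right$ differs from $v$ by $O(1/n)$ in hash value, the point $\frac{u+1}{2}$ differs from the probe's true target $\frac{v+1}{2}$ by only $O(1/n)$. So after step 2 we are at a node $t'$ whose hash value is within $O(1/n)$ of the target $\frac{v+1}{2}$, having correctly fixed the most significant bit, in exactly one hop. The third step is the greedy walk through the $q$-neighborhood toward the node closest to $\frac{v+1}{2}$. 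Here I would reuse the Chernoff-bound argument from the proof of Theorem~\ref{theorem:routing}: since $t'$ is within distance $O(1/n) \ll q/n$ of the target, the target node lies in $t'.Q$ with high probability (the probability that $c\cdot q$ or more nodes fall in the relevant interval around $t'$ is at most $n^{-p}$ for a constant $p>0$), so the greedy walk reaches the correct node in a single additional hop w.h.p. Finally, step 4 (sending the probe back to $v$) is explicitly a direct message to $v$, which the probe carries, so it does not count as a routing hop toward the target (or, depending on the accounting convention, the lemma's "$3$ hops" refers precisely to steps 1–3). Summing up: $1$ hop (step 1) $+$ $1$ hop (step 2) $+$ $1$ hop (step 3) $= 3$ hops w.h.p.

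The main obstacle I anticipate is making rigorous the claim that perturbing the target point by the $O(1/n)$ offset coming from $v.right$ (and from the discreteness of which node is "closest" at each stage) does not push the greedy walk off track — i.e., that after the de Bruijn hop in step 2 we really are within $o(q/n)$ of $\frac{v+1}{2}$ so that the Chernoff estimate still applies with the same constant $c$. This requires checking that the cumulative error from the two-node hop is dominated by $1/n$ w.h.p. (using Lemma~\ref{cor:nodedistance} and a union bound over the $O(1)$ hops, or noting that in a legitimate state consecutive list-neighbor gaps are $O(\log n / n)$ w.h.p.), which is $o(q/n)$ since $q = \sqrt[d]{n} = \omega(\log n)$ — the same slack already exploited in Theorem~\ref{theorem:routing}. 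Once that is in place, the hop count follows immediately by adding the contributions of the three steps.
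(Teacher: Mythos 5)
Your proposal follows essentially the same route as the paper: forward to $v.right$, take the standard de Bruijn hop, then argue the landing node is close enough to the true target $\frac{v+1}{2}$ that one $q$-neighborhood hop finishes the job, and back that with a Chernoff bound. The one real issue is how you get the distance bound after the de Bruijn hop. Lemma~\ref{cor:nodedistance} only gives an \emph{expected} gap of $\frac{1}{n}$, and expectation alone does not yield a w.h.p. tail bound; your first suggested fix --- a union bound over the $O(1)$ hops --- does not repair this, since the problem is the lack of concentration, not the number of events. Your second suggested fix is the right one and is exactly what the paper does: it invokes the Naor--Wieder result that the longest segment after hashing $n$ nodes onto $[0,1)$ is $\Theta(\log n / n)$ w.h.p., so the landing node $v.right.db(1,1)$ is within $O(\log n / n)$ of $v.db(1,1)$, and then a Chernoff bound shows an interval of that length contains at most $2\log n < c\sqrt[d]{n}$ nodes w.h.p. (here $q = \sqrt[d]{n} = \omega(\log n)$ is the slack that makes it go through). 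So the substance of your argument is right; you should simply drop the $O(1/n)$-from-expectation claim and the union-bound alternative, and state the $O(\log n / n)$ w.h.p. segment bound up front as the driving estimate.
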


\begin{proof}
W.l.o.g. let $j = 1$.
In a stable system, a node $v \in V$ routes the probe for the location $\frac{v+j}{2}$ to its right list neighbor $v.right$.
From there on we perform a de Bruijn hop via the edge $v.right \rightarrow \frac{v.right+1}{2}$ and end up at the node $v.right.db(1,1)$.
We show that w.h.p. $v.db(1,1)$ is contained in the $q$-neighborhood of $v.right.db(1,1)$, so the probe already reaches its destination after step $3$.
We use the following Lemma from \cite{DBLP:journals/talg/NaorW07} to compute the largest possible distance between $v.db(1,1)$ and $v.right.db(1,1)$:

\begin{lemma}[\cite{DBLP:journals/talg/NaorW07}]
After hashing $n$ nodes onto the interval $[0,1)$ the length of the longest segment is w.h.p. $\Theta(\frac{\log(n)}{n})$.
\end{lemma}

We want to compute the probability that there are more than $2 \log n$ nodes within an interval of size $\log(n)/n$.
If this probability turns out to be very low, the probability that there are more than $\sqrt[d]{n}$ nodes within an interval of size $\log n/n$ is even lower, since $p \log(n) \leq \sqrt[d]{n}$ for some constant $p$ and $n$ high enough.
As a consequence the probability that $v.db(1,1)$ is in the $q$-neighborhood of $v.right.db(1,1)$ is very high, which proves the lemma.
Let $I$ be an interval in $[0,1)$ of size $\log(n)/n$ and for all $v \in V$ let $X_v$ be a binary random variable with

\begin{equation*}
   X_v =
   \begin{cases}
     1, & \text{if node }v \text{ is in }I \\
     0, & \text{otherwise.} 
   \end{cases}
\end{equation*}

For $v \in V$ it holds $Pr[v \in I] = \frac{\log(n)}{n}$ and for $X := \sum_{v \in V} X_v$ it holds $E[X] = \log(n)$.
Following the standard Chernoff bound (Theorem~\ref{theorem:chernoff}(1)) with $\delta = 1$, we get

\begin{eqnarray*}
Pr[X \geq (1 + \delta) \cdot \mu] & = & Pr[X \geq 2 \log(n)]\\
& \leq & e^{-\log(n)/\frac{8}{3}}\\
& \leq & n^{-3/8}.
\end{eqnarray*}
\end{proof}

\subsection{General de Bruijn Edges} \label{sec:desc:general}
To establish general de Bruijn edges at each node we use a probing approach similar to that for the standard de Bruijn edges: Nodes periodically send out a probe and forward it until it arrives at the node that is closest to the probe's target location.
Since we want to avoid sending out probes for all possible general de Bruijn targets at once, we send out only one probe per \textsc{Timeout}-call for one single general de Bruijn target.
For picking the probe's target, we use a round-robin approach similar to the one for the $q$-neighborhood edges: In each call of \textsc{Timeout}, we pick $i \in \{2,\ldots,\log(v.q)+1\}$ and $j \in \{0,\ldots,2^i-1\}$ in a round-robin fashion and generate the probe $P_{i,j}$ that has the point $\frac{v+j}{2^i}$ as target location.
The result of $P_{i,j}$ has to be stored in $v.db(i,j)$.
Aside from $v$ itself, we also store $i$ and $j$ in $P_{i,j}$ since these are important for the routing approach:

\begin{enumerate}
\item Forward $P_{i,j}$ to $u = v.db(i-1,k)$, with $k = j \mod 2^{i-1}$.
\item Execute a standard de Bruijn hop: If $j \geq 2^{i-1}$ then forward $P_{i,j}$ from $u$ to $u.db(1,1)$, otherwise forward $P_{i,j}$ from $u$ to $u.db(1,0)$.
\item Greedily forward $P_{i,j}$ via the $q$-neighborhood until some node $t$ is reached that is closest to $\frac{v+j}{2^i}$ based on its local view.
\item Store $t$ in $P_{i,j}$ and send $P_{i,j}$ back to $v$, such that $v$ is able to set $v.db(i,j) = t$.
\end{enumerate}

The following lemma shows that the above approach is efficient regarding the number of hops for a single probe:

\begin{lemma}\label{lemma:probing}
Let the GDB $G$ be in a legitimate state.
Probes for general de Bruijn edges only need $3$ hops w.h.p. to be routed from a node $v \in V$ to the node that is closest to the probe's target, namely $\frac{v+j}{2^{i}}$ for $i \in \{2, \ldots ,\log(v.q)+1\}$ and $j \in \{0, \ldots ,2^{i} - 1\}$.
\end{lemma}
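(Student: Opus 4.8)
The plan is to follow the template of Lemma~\ref{lemma:standardProbing}, tracking how close each of the three hops brings the probe $P_{i,j}$ to its target point $P := \frac{v+j}{2^{i}}$, and then closing the argument with a Chernoff bound as in Theorem~\ref{theorem:routing}. First I would note that in a legitimate state every de Bruijn variable stores the node required by Definition~\ref{def:topology}, so step~1 really does forward the probe to the node $u$ that is closest to $P' := \frac{v+k}{2^{i-1}}$, where $k = j \bmod 2^{i-1}$ (this is well defined since $i-1 \in \{1,\dots,\log(v.q)\}$). The key arithmetic identity is that $\frac{P' + b_{i-1}}{2} = P$, where $b_{i-1}$ is the most significant bit of $j$: because $j < 2^{i}$ we have $j = k + b_{i-1}\cdot 2^{i-1}$, hence $\frac{P'+b_{i-1}}{2} = \frac{v + k + b_{i-1}2^{i-1}}{2^{i}} = \frac{v+j}{2^{i}}$. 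Therefore the standard de Bruijn hop in step~2, which moves to $u.db(1,b_{i-1})$, lands at the node $w$ that is closest to $\frac{u+b_{i-1}}{2}$, i.e.\ at the node closest to a point only slightly perturbed from $P$.

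Next I would quantify that perturbation with the longest-segment estimate of Naor and Wieder already invoked in Lemma~\ref{lemma:standardProbing}: the longest empty segment among the $n$ hashes has length $\Theta(\log n / n)$ w.h.p., so the node closest to any fixed point lies within $O(\log n/n)$ of it. Applied to step~1 this gives $|u - P'| = O(\log n/n)$; step~2 halves this error, since $\bigl|\frac{u+b_{i-1}}{2} - P\bigr| = \frac{|u - P'|}{2} = O(\log n / n)$, and adding the approximation error of $w$ yields $|w - P| \le \bigl|w - \frac{u+b_{i-1}}{2}\bigr| + \bigl|\frac{u+b_{i-1}}{2} - P\bigr| = O(\log n / n)$. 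In particular, writing $t^{\ast}$ for the node globally closest to $P$ (itself within $O(1/n)$ of $P$), we get $|w - t^{\ast}| = O(\log n / n)$.

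The greedy phase (step~3) is then handled exactly as in Lemma~\ref{lemma:standardProbing} and Theorem~\ref{theorem:routing}: I would argue that $t^{\ast} \in w.Q$ w.h.p., so that $w$ forwards the probe straight to $t^{\ast}$, which recognizes itself as closest to $P$ in its local view and sends the probe back to $v$. This reduces to bounding the probability that $\Theta(\sqrt[d]{n}) = |w.Q|$ or more nodes land in an interval of size $O(\log n / n)$ around $w$; the expected count is $O(\log n)$, and since $\log n = o(\sqrt[d]{n})$ for constant $d$, the Chernoff bound of Theorem~\ref{theorem:chernoff}(1) — the same computation as in the cited lemma — gives failure probability $n^{-\Omega(1)}$. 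Adding up steps~1--3 gives the stated bound of $3$ hops w.h.p., and the $O(1)$ auxiliary w.h.p.\ events combine by a union bound.

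I expect the only real obstacle to be bookkeeping rather than anything substantive: getting the identity $\frac{P'+b_{i-1}}{2}=P$ exactly right for the round-robin choice $k = j \bmod 2^{i-1}$ (so that the lower-level de Bruijn edge used in step~1 together with the single bit appended in step~2 reconstruct precisely the target index $j$, via Lemma~\ref{lemma:dbhop_correctness}), and checking that the error accumulated through steps~1 and~2 never leaves the $O(\log n / n)$ window in which the Chernoff estimate applies. Once these are pinned down, the remainder of the proof is essentially identical to the standard-de-Bruijn case.
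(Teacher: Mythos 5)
Your proof is correct and takes the same approach as the paper, whose own proof of this lemma is just the one-liner ``Analogously to the proof of Lemma~\ref{lemma:standardProbing}.'' You have worked out that analogy faithfully --- using the lower-level de Bruijn edge $v.db(i-1, j \bmod 2^{i-1})$ for hop~1, the identity $j = (j \bmod 2^{i-1}) + b_{i-1}\cdot 2^{i-1}$ to justify the standard de Bruijn hop~2 landing near $\frac{v+j}{2^i}$, and the Naor--Wieder longest-segment bound together with the Chernoff estimate for hop~3 --- so there is nothing to correct.
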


\begin{proof}
Analogously to the proof of Lemma~\ref{lemma:standardProbing}.
\end{proof}

Having nodes store lower-level general de Bruijn edges is not only useful in our probing approach, but also reduces the effort for $v$ when processing a $v.q$-update.
This will certainly be the case in a dynamic environment as there are nodes leaving the system resulting in $v.q$ to be halved.
As soon as $v.q$ halves at a node $v$, $v$ just drops its general de Bruijn edges on the highest level.
Without lower-level general de Bruijn edges $v$ would have to probe for a new set of general de Bruijn edges.
Similarly, in case that $v.q$ doubles, $v$ is able to use its old high-level general de Bruijn edges to probe for the general de Bruijn edges on the next higher level.

\subsection{Join and Leave}
When a new node $v$ wants to join the system at some node $u$, it just introduces itself to $u$ by calling \Call{Linearize}{$v$} on $u$.
Then $v$ is integrated into the sorted list via \textsc{BuildList}.
As soon as $v$ is in the correct spot of the sorted list, $v$ is able to generate a correct approximation $v.q$ along with its $q$-neighborhood and thus build the set of general de Bruijn edges.
A very simple approach for a node to leave the system is to 'just leave'.
Since each node is connected to its closest $\Theta(\sqrt[d]{n})$ list neighbors, when the system is in a legitimate state, the graph is guaranteed to stay weakly connected when one node leaves.
However, nodes may also leave at times at which the network has not yet reached a legitimate state.
There are already protocols that are able to safely exclude a node from the system, so we just refer the reader to \cite{DBLP:conf/sss/KoutsopoulosSS15} for a universal approach or \cite{DBLP:conf/sss/ForebackKNSS14} for an approach specifically for the sorted list.

\section{Protocol Analysis} \label{sec:analysis}
In this section we show that \textsc{BuildQDeBruijn} is self-stabilizing according to Definition~\ref{self_stabilization}.

\subsection{Convergence}
Given any weakly connected graph $G = (V, E_L \cup E_q \cup E_{dB} \cup E_{q-dB})$, we first argue that all corrupted messages stored initially in node channels are processed:

\begin{lemma} \label{lemma:convergence:0}
	Given any weakly connected graph $G = (V, E_L \cup E_q \cup E_{dB} \cup E_{q-dB})$ and a set of corrupted messages $M$ spread arbitrarily over all node channels.
	Eventually, $G$ is free of corrupted messages, while staying weakly connected.
\end{lemma}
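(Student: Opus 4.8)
\emph{Approach.} The plan is to separate a safety claim from a liveness claim. For safety I will show that no action of \textsc{BuildQDeBruijn} can split a weakly connected component. For liveness I will show that the corrupted messages in $M$ trigger only a finite cascade of further sends, so that under fair message receipt they are all eventually consumed. As a preliminary step I would list the message types that can sit in a channel: \textsc{Linearize}$(w)$; the two shapes of \textsc{Introduce}, the request \textsc{Introduce}$(\tilde q,v)$ and the response \textsc{Introduce}$(\{l\},\perp)$; the standard de Bruijn probes $P_0,P_1$ and the general de Bruijn probes $P_{i,j}$; and \textsc{Search} requests. Since the model (Section~\ref{sec:model}) forbids corrupted identifiers, every message — corrupted or not — carries only valid node references, and the only ``false'' content is the surplus of messages and wrong auxiliary fields (target points, indices $i,j$, the stored originator, phase markers). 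Delegation and routing are insensitive to those fields being wrong: they still terminate, which is exactly what keeps the cascade finite.

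\emph{Weak connectivity.} I would prove by induction over the computation that every enabled action maps a weakly connected $G$ to a weakly connected $G$. The key observation is that every action only \emph{re-routes} references along edges that are already present. \textsc{Linearize}, the evictions performed inside \textsc{Introduce}, and the downgrades performed in \textsc{Timeout} all delegate a reference to a node the holder stays connected to, rather than dropping it; an upgrade \emph{copies} $v.left$ or $v.right$ into $v.Q$ or $v.db$, keeping the original; forwarding a probe moves the probe, and with it the implicit edge to its originator, along an edge of $G$; completing a probe stores the result in some $v.db(i,j)$ while delegating the old value. In each case the undirected edge set changes only by swapping an edge incident to some node $x$ for another edge incident to $x$ whose other endpoint was already reachable from $x$, so components never split. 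This is the usual invariant of \textsc{BuildList}-style protocols (cf.\ \cite{DBLP:conf/alenex/OnusRS07,DBLP:journals/tcs/NorNS13}).

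\emph{Finite cascade.} I would bound, message type by message type, the number of sends a message can transitively cause. A \textsc{Linearize}$(w)$ at $u$ either stores $w$ and halts, or forwards \textsc{Linearize}$(w)$ to a node lying strictly between $u$ and $w$ in hash order; the number of nodes between the current holder and $w$ is a strictly decreasing potential bounded by $n$, so the chain halts after at most $n$ steps with $w$ stored. An \textsc{Introduce}$(\tilde q,v)$ updates the finite set $u.Q$, delegates the $O(|u.Q|)$ evicted references via local \textsc{Linearize} calls, and emits exactly one response \textsc{Introduce}$(\{l\},\perp)$; that response updates $v.Q$, delegates its own evictions, and — because its second parameter is $\perp$ — emits nothing further, so the \textsc{Introduce} chain has length at most two. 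A probe is a single message that is forwarded a constant number of hops via stored edges, then performs a greedy walk through $q$-neighborhoods in which every hop strictly decreases the distance to the probe's fixed target point and hence visits each node at most once ($\le n$ hops), then takes one return hop to the stored originator, where it sets a $v.db(\cdot,\cdot)$ entry and delegates the old value by one \textsc{Linearize}. Finally a \textsc{Search} request terminates (Section~\ref{sec:model}) leaving no persistent message behind. Composing these bounds, each corrupted message in $M$ spawns only a finite tree of descendants; since $M$ is finite, the total number of message-processing events caused by the corruption is finite, and by fair message receipt they are all eventually executed. Afterwards $G$ contains no corrupted messages, and together with the connectivity invariant this gives the lemma.

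\emph{Main obstacle.} The crux is the termination argument above, i.e.\ excluding an infinite cascade. The delicate ingredients are the two monotone potentials — the count of nodes between the current holder and the target for \textsc{Linearize}, and the distance to the fixed target point for the greedy phase of a probe, which prevents a node from being revisited — both of which rely on $|V|$ being finite, and the observation that the $\perp$ second parameter is precisely what stops an \textsc{Introduce} ping-pong. One caveat worth flagging: in a corrupted state $v.q$, and therefore $|v.Q|$ and the span of de Bruijn levels at $v$, may be arbitrarily large, so the per-message bounds are finite but not yet $\mathcal O(\sqrt[d]{n})$; tightening them to the value promised in Section~\ref{sec:desc} is left to the later convergence lemmas. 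The remainder is bookkeeping.
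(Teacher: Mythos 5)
Your decomposition into a connectivity invariant and a termination/finite-cascade argument, organized message type by message type, matches the paper's own (rather terse) proof, and most of the per-type bookkeeping is right: the \textsc{Introduce} chain has length at most two because the response carries $\perp$, and the greedy phase of a probe strictly decreases $|w-t|$ (since $u=\argmin_{w\in v.Q\cup\{v\}}|w-t|$ and we only forward when $u\neq v$), so a probe visits at most $n$ nodes before returning. The connectivity invariant argument (every action only reroutes or copies references) is also the same idea the paper invokes.

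The one genuine gap is your termination argument for \textsc{Linearize}. You assert that $u$ forwards \textsc{Linearize}$(w)$ ``to a node lying strictly between $u$ and $w$ in hash order,'' but in this protocol \textsc{Delegate} forwards to $\bar q = \argmin_{q\in u.Q}\{|q-w|\}$ (Algorithm~\ref{algo:buildlist}), and in an arbitrary initial state $u.Q$ may contain only nodes on the wrong side of $u$; then $\bar q$ need not lie between $u$ and $w$ and may in fact be farther from $w$ than $u$. In such a state one can even construct a two-node ping-pong ($v.Q=\{z\}$, $z.Q=\{v\}$, $w$ beyond both). So the ``number of intervening nodes'' is not a monotone potential, and the claimed $\le n$ bound on the chain length is false in general. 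What saves the lemma is not a per-message potential but the interaction with the rest of the protocol: the weakly fair \textsc{Timeout} calls keep repairing $v.left$, $v.right$ and $v.Q$, and by Theorem~\ref{list:self_stabilizing} (together with Lemma~\ref{lemma:convergence:I}) the sorted list eventually forms; once it does, every remaining \textsc{Linearize} message does make strict progress and is absorbed. The paper's phrasing ``eventually the implicit edge $\ldots$ will be converted to an explicit edge, after $m$ has been forwarded to the right spot in the sorted list'' is shorthand for exactly this appeal to BuildList convergence, not for a self-contained potential argument. Your caveat about oversized $v.q$ is fine (it affects constants, not finiteness); the \textsc{Linearize} potential is the part that needs to be replaced by an appeal to the convergence machinery.
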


\begin{proof}
	Any message $m$ is processed by \textsc{BuildQDeBruijn} according to the protocol description from Section~\ref{sec:desc}.
	By definition of \textsc{BuildQDeBruijn}, $G$ does not get disconnected when processing $m$.
	If $m$ is a message for \textsc{BuildList}, then we know that eventually the implicit edge represented by $m$ will be converted to an explicit edge, after $m$ has been forwarded to the right spot in the sorted list.
	If $m$ is a message for the $q$-neighborhood protocol, then the node receiving $m$ may generate an acknowledgement $m'$ when processing $m$.
	By definition of the $q$-neighborhood protocol, processing $m'$ does not lead to more acknowledgement messages, which ends the chain of corrupted messages generated by $m$.
	In case $m$ is a probe (either for standard or general de Bruijn edges), by definition of our protocol $m$ will eventually be sent back to its initial sender, at which it is then processed without generating new corrupted messages.
\end{proof}

We show the rest of the convergence proof in multiple phases: First we argue that our system converges to a sorted list from any weakly connected graph.
When the list is stable, our protocol is able to establish the $q$-neighborhood edges as well as the standard and general de Bruijn edges.

\begin{lemma} \label{lemma:convergence:I}
\textsc{BuildQDeBruijn} transforms any weakly connected graph $G = (V, E_L \cup E_q \cup E_{dB} \cup E_{q-dB})$ into a sorted list.
\end{lemma}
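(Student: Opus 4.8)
The plan is to reduce convergence of the full protocol to convergence of \textsc{BuildList} (Theorem~\ref{list:self_stabilizing}) by showing that, from the point of view of the list edges, the extra sub-protocols can only ever \emph{help}. Concretely, I would argue that the non-list edges (those stored in the sets $v.Q$ and $v.db$) never destroy weak connectivity and, via the \emph{downgrade} mechanism, always eventually feed their content into \textsc{BuildList}. Thus the list sub-protocol effectively runs on a weakly connected graph that contains at least all the edges it would see in isolation, and Theorem~\ref{list:self_stabilizing} finishes the job.

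\textbf{Step 1: Connectivity is preserved.} First I would invoke Lemma~\ref{lemma:convergence:0}: eventually all corrupted messages are gone and $G$ stays weakly connected throughout. I would then note that, by the protocol description, every action either keeps a node reference (stored or in a channel) or \emph{delegates} it via \textsc{Linearize} — references are never simply deleted. Hence the union of explicit and implicit edges never loses a vertex's reachability, so $G$ remains weakly connected in every state.

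\textbf{Step 2: Non-list edges are eventually handed to \textsc{BuildList}.} The key observation is the downgrade rule: in \textsc{Timeout} of each sub-protocol other than \textsc{BuildList}, any node $u$ stored in $v.Q$ or $v.db$ that is closer to $v$ than $v.left$ or $v.right$ is passed to \textsc{Linearize}$(u)$ locally. By weakly fair action execution, \textsc{Timeout} fires infinitely often at every node. So I would argue that, in the limit, whenever a non-list edge $(v,u)$ would be a ``better'' list neighbor than $v$'s current $v.left$/$v.right$, its endpoint $u$ is injected into the \textsc{BuildList} dynamics; conversely, \textsc{BuildList} itself never produces a situation where a non-list edge is strictly closer than the list pointers without that edge being downgraded on the next \textsc{Timeout}. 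Therefore the reachability witnessed by \emph{any} edge of $G$ is, after finitely many steps, also witnessed by edges that \textsc{BuildList} operates on or will operate on.

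\textbf{Step 3: Apply Theorem~\ref{list:self_stabilizing}.} Once Steps~1--2 are in place, the sub-configuration consisting of $v.left$, $v.right$, and the stream of \textsc{Linearize} messages behaves exactly like \textsc{BuildList} running on a weakly connected graph; the other sub-protocols only add (never remove) \textsc{Linearize} requests and never disconnect anything. By the convergence part of Theorem~\ref{list:self_stabilizing}, the explicit list edges converge to a sorted list, and by its closure part they stay that way. I expect \textbf{the main obstacle} to be Step~2: one must rule out ``interference'' where the de Bruijn or $q$-neighborhood sub-protocols perpetually re-create far-away references or keep \textsc{Linearize} messages in flight in a way that prevents \textsc{BuildList} from ever settling. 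The clean way to handle this is a potential/monotonicity argument — e.g., track, for each node, the distance of its closest list neighbor, show this quantity is non-increasing under every action and strictly decreasing whenever a closer reference is present, and observe that only finitely many distinct hash values exist — so the list pointers can change only finitely often before stabilizing, after which no further \textsc{Linearize} message can improve them and the list is sorted.
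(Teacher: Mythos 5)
The central gap is in Step~2. Your downgrade-based argument only covers the case where a reference in $v.Q$ or $v.db$ is \emph{closer} to $v$ than $v.left$/$v.right$. But the critical case for connectivity is the opposite one: a ``bridge'' edge $(v,w)$ that connects two components of $H = (V, E_L)$ but spans many intermediate nodes, so $w$ is \emph{farther} from $v$ than $v.right$. That edge does not satisfy your downgrade criterion and is never handed to \textsc{BuildList} by the rule you cite. The paper closes this gap with two additional mechanisms, and both are essential: for $E_q$, the \textsc{Introduce} round-robin causes $v$ to introduce $q_{k-1}\in v.Q$ (an intermediate node, not $v$ itself) to $w$, producing a new $q$-edge that spans strictly fewer nodes, and an induction on that span count shows the gap eventually shrinks to the downgradable case (Lemma~\ref{lemma:qNeighborhoodToList}); for $E_{dB}$ and $E_{q-dB}$, the probe for the outermost cross-component de Bruijn edge necessarily fails (no right neighbor, or no standard de Bruijn hop available), and the edge is downgraded in \textsc{Probe\_done}, not by a ``closer than $v.right$'' test (Lemma~\ref{lemma:standardDBToList}). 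Your proposal mentions neither mechanism, so it does not actually produce the \textsc{Linearize} messages that would let Theorem~\ref{list:self_stabilizing} see a weakly connected $(V,E_L)$.

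Your suggested patch --- a potential argument on the distance of the closest list neighbor --- does not repair this. That potential shows the list pointers eventually stop changing, but a stabilized $(V,E_L)$ can still be disconnected: in the bridge scenario above, every node's $v.left$/$v.right$ may settle on optimal in-component neighbors while the bridge edge sits forever in $v.Q$ (it is never downgraded because it is not closer than $v.right$). Stabilization and connectivity are independent, and the lemma needs the latter. The paper's proof is structured precisely to supply connectivity: it reduces to showing $H=(V,E_L)$ eventually becomes weakly connected, separately per auxiliary edge type, and only then invokes Theorem~\ref{list:self_stabilizing}. Also, as a technical aside, the paper's downgrade condition for $v.db$ is not the same as for $v.Q$; for de Bruijn edges the code only downgrades when $v.right = \perp$ or $v.db(i,j)$ lies on the wrong side of $v$, otherwise it goes through probe failure, so the uniform rule you state in Step~2 misdescribes that sub-protocol.
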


\begin{proof}
We show for each edge set $E_+ \in \{E_q, E_{dB}, E_{q-dB}\}$ that any weakly connected graph $G_+ = (V, E_L \cup E_+)$ converges to a graph $G_+' = (V, E_L' \cup E_+')$ such that the graph $H = (V, E_L')$ is weakly connected, because then we can apply Theorem~\ref{list:self_stabilizing} to show the convergence of the sorted list.
This suffices to show the lemma, because sub-protocols in $E_+$ do not modify variables of \textsc{BuildList} directly and do not interact with other sub-protocols in $E_+$ in terms of calling a method or modifying variables (recall Figure~\ref{fig:protocol_dependencies}).
For a graph $G = (V,E)$ we call a directed edge $(v,w) \in E$ a \emph{bridge}, if $G$ is weakly connected, but $G'=(V,E\setminus \{(v,w)\})$ is not.

\begin{lemma}\label{lemma:qNeighborhoodToList}
\textsc{BuildQDeBruijn} transforms any weakly connected graph $G = (V, E_L \cup E_q)$ into a sorted list.
\end{lemma}

\begin{proof}
For a weakly connected graph $G_q=(V,E_L \cup E_q)$ assume that there is a bridge $(v,w) \in E_q$ that splits $G_q$ into two weakly connected components $G_q' = (V',E_L' \cup E_q')$ and $G_q'' = (V'',E_L'' \cup E_q'')$ with $V' \cap V'' = \emptyset$, $E_L = E_L' \cup E_L''$ and $E_q = E_q' \cup E_q''$.
This assumption implies that $H=(V,E_L)$ is not connected yet.
The proof works analogously for the case that there are multiple edges in $E_q$ that split $G$ in two weakly connected components.
W.l.o.g. assume $v < w$ and that $G_q'$ and $G_q''$ are already weakly connected just by the list edges $E_L$.
We perform an induction over the number of nodes $N$ that lie between $v$ and $w$ on the $[0,1)$-interval and show that $H$ eventually becomes weakly connected.

We start the induction at $N = 0$ and consider the node $v$: If $v.right = \perp$, then the $q$-neighborhood edge $(v,w)$ gets downgraded to a list edge in $v$'s \textsc{Timeout} procedure (Algorithm~\ref{algo:q-neighborhood}, line~\ref{algline:downgrading_q_neighborhood_end}).
In case $v.right \neq \perp$, we know that $v.right \in V'$, since otherwise $H$ is already weakly connected, which is a contradiction to our initial assumption.
So for $v.right \neq \perp$, $v.right \in V'$ we need to consider the cases $v.right \leq v$ and $v < w < v.right$ since there are no nodes lying between $v$ and $w$ on the $[0,1)$-interval.
In case $v.right \leq v$, the \textsc{Timeout} procedure of \textsc{BuildList} calls \textsc{Linearize}($v.right$) and sets $v.right = \perp$ afterwards (Algorithm~\ref{algo:buildlist}, lines~\ref{algline:buildlist:corrupt:start}-\ref{algline:buildlist:corrupt:end}), so we get to case $v.right = \perp$.
If $v < w < v.right$, then $v$ recognizes in its \textsc{Timeout} procedure of the $q$-neighborhood protocol that there is a node $w \in v.Q$ that is closer to $v$ than $v.right$, so the $q$-neighborhood edge $(v,w)$ gets downgraded to a list edge (Algorithm~\ref{algo:q-neighborhood}, line~\ref{algline:downgrading_q_neighborhood_end}).

For the induction hypothesis, we assume that $H$ becomes weakly connected, if there exist nodes $v \in V'$ and $w \in V''$ that are connected by a $q$-neighborhood edge $(v,w)$ and there are at most $n$ nodes lying between $v$ and $w$ on the $[0,1)$-interval.

Now assume that there are $N+1$ nodes $x_1,\ldots,x_{N+1}$ lying between $v$ and $w$. Again we distinguish between two cases: If $v.right = \perp$ then $(v,w)$ gets downgraded in $v$'s \textsc{Timeout} procedure and we are done (Algorithm~\ref{algo:q-neighborhood}, line~\ref{algline:downgrading_q_neighborhood_end}).
Hence let $v.right \neq \perp$.
For $v.right \leq v$ this is solved trivially by \textsc{BuildList} like before.
Therefore, assume that $v.right > v$ with $v.right \in V'$.
Then $v.right$ is upgraded to $v.Q$ in $v$'s \textsc{Timeout} procedure (Algorithm~\ref{algo:q-neighborhood}, line~\ref{algline:promotion_q_neighborhood}), which implies that $v.Q$ contains at least the nodes $v.right$ and $w$.
Now $v$ eventually picks $q_k = w$ in its \textsc{Timeout} procedure and introduces $q_{k-1} \in V'$ to $w$ (Algorithm~\ref{algo:q-neighborhood}, line~\ref{algline:qNeighborhood_inductionEnd}).
It either holds $q_{k-1} = v.right$ or $q_{k-1} = u$ for some $u \in V'$, $u > v.right$ (recall that nodes in $v.Q$ are sorted by their hash values).
This generates a $q$-neighborhood edge $(w,q_{k-1})$ with $w$ and $q_{k-1}$ being in different weakly connected components.
Since $q_{k-1} > v$, there are less than $N+1$ nodes lying between $w$ and $q_{k-1}$ so we know from the induction hypothesis that $H$ becomes weakly connected just by list edges.
\end{proof}

\begin{lemma}\label{lemma:standardDBToList}
\textsc{BuildQDeBruijn} transforms any weakly connected graph $G = (V, E_L \cup E_{dB})$ into a sorted list.
\end{lemma}

\begin{proof}
Assume that $G = (V, E_L \cup E_{dB})$ is weakly connected, but $H = (V, E_L)$ is not.
Then $V$ can be split up into $k \geq 2$ disjoint sets $V_1,\ldots,V_k \subset V$, for which it holds that $\forall i \in \{1,\ldots,k\}\ H_i := (V_i,\{(u,v) \in E_L\ |\ u,v \in V_i\})$ is weakly connected.
Consider $V_i, V_j$ such that $i \neq j$ and $G_{i,j} := (V_i \cup V_j,\{(u,v) \in E_L \cup E_{dB}\ |\ u,v \in V_i \cup V_j\})$ is weakly connected.
$V_i,V_j$ exist, because otherwise $G$ is not connected, which contradicts our initial assumption.
To show the lemma it suffices to show that the graph \[H_{i,j} := (V_i \cup V_j,\{(u,v) \in E_L\ |\ u,v \in V_i \cup V_j\})\] eventually becomes weakly connected, because the number of disjoint sets $k$ that $V$ can be split up into reduces by one.
By repeating this process, eventually $k = 1$ and we can apply Theorem~\ref{list:self_stabilizing}.

Let $V_i, V_j$ be weakly connected components as defined above.
Assume that the nodes in $H_i, H_j$ have already formed a sorted list.
They can do that because of Theorem~\ref{list:self_stabilizing} and the fact that all other sub-protocols are not able to violate the connectivity of $H_i, H_j$. 
Then there exists at least one edge out of $E_{dB}$ that has a node in $V_i$ as source and a node in $V_j$ as target or vice versa.
W.l.o.g. assume that there exist standard de Bruijn edges $e_1,\ldots,e_m \in V_i \times V_j$, $m \geq 1$.
For simplicity let these edges be ordered according to the hash value of their source node, i.e., for $e_1=(u_1,v_1),\ldots,e_m=(u_m,v_m)$ it holds $u_1 < u_2 < \ldots < u_m$.
Furthermore, consider only the standard de Bruijn edges whose target node is stored in the variable $v.db(1,1)$ of a node $v$ (we can do that because the standard de Bruijn protocol treats the variables $v.db(1,0)$ and $v.db(1,1)$ independently from each other).
It follows that the probe for the edge $e_m$ will fail, because $u_m$ either does not have a right neighbor which triggers line 5 of Algorithm~\ref{algo:standard_de_bruijn}, or the right neighbor $u_m.right$ does not have a standard de Bruijn edge stored in $u_m.right.db(1,1)$, which triggers line 22 of Algorithm~\ref{algo:standard_de_bruijn} and thus $u_m$ downgrades $e_m$ in line 41 of Algorithm~\ref{algo:standard_de_bruijn} after it got the probe back.
This makes the graph $H_{i,j}$ weakly connected and thus shows the lemma.
\end{proof}

We can use the same argumentation as in Lemma~\ref{lemma:standardDBToList} here to show that $G = (V, E_L \cup E_{q-dB})$ eventually becomes weakly connected just by list edges:

\begin{corollary}\label{lemma:generalDBToList}
\textsc{BuildQDeBruijn} converts any weakly connected graph $G = (V, E_L \cup E_{q-dB})$ into a sorted list.
\end{corollary}

Combining Lemma~\ref{lemma:qNeighborhoodToList},~\ref{lemma:standardDBToList} and Corollary~\ref{lemma:generalDBToList} shows that our protocol converts any weakly connected graph $G = (V, E_L \cup E_q \cup E_{dB} \cup E_{q-dB})$ into a sorted list.
\end{proof}

\begin{lemma} \label{lemma:convergence:II}
\textsc{BuildQDeBruijn} converts any sorted list into a $q$-connected list.
\end{lemma}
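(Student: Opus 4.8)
The starting point is a stable sorted list: every node $v$ has correct $v.left$ and $v.right$ pointers, and by Lemma~\ref{lemma:convergence:I} these are preserved by all sub-protocols. The goal is to show the $q$-neighborhood protocol drives every node $v$ to a state where $v.q \in (\frac14\sqrt[d]{n},\sqrt[d]{n})$ is fixed and $v.Q$ holds exactly the $c\cdot 2v.q$ nodes closest to $v$. The natural structure is two interleaved claims: (i) for any \emph{fixed} value of $v.q$ that does not change, $v.Q$ converges to the correct $c\cdot 2v.q$ closest neighbors; (ii) the $v.q$-update rule eventually stops, yielding a value in $\Theta(\sqrt[d]{n})$. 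Claim (ii) is then Lemma~\ref{q_v_approx_lemma}, whose proof (as noted in the remark after it) was contingent on (i) — so the real work here is (i), plus checking that the finitely many early $v.q$-updates do not disrupt it.

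For (i), fix $v$ and suppose $v.q$ has reached its final value. I would argue convergence of $v.Q$ by a potential/induction argument on distance from $v$ along the list. The key mechanics, from the protocol description: in each \textsc{Timeout}, $v$ round-robins through $q_k \in v.Q$ and sends \Call{Introduce}{$\tilde q$, $v$} to $q_k$ where $\tilde q$ is the $v.Q$-neighbor of $q_k$ on the side toward $v$; the receiver keeps its closest $c\cdot 2u.q$ of $u.Q\cup\{\tilde q\}$, delegates the rest via \textsc{Linearize}, and replies with one of its own list-neighbors toward $v$. So along the sorted list, knowledge propagates outward from $v$ one hop at a time: if $v$ knows the true $k$-th closest node on (say) its right, then through the introduce/reply exchange $v$ learns the $(k{+}1)$-st. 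I would phrase this as: after $v$'s list has stabilized, define $\ell$ = the largest index such that $v.Q$ contains the $\ell$ closest right-neighbors of $v$ (and symmetrically on the left); show that as long as $\ell < c\cdot 2v.q$, within finitely many rounds $\ell$ strictly increases (because $v$ eventually picks the current boundary node in its round-robin scan, introduces it to the correct list neighbor which replies, or the boundary node introduces $v$ a further node), and that $\ell$ never decreases once the list is stable — any node displaced from $v.Q$ is strictly farther than the retained ones, by the "keep the closest" rule, so it genuinely does not belong. Once $\ell = c\cdot 2v.q$ on both sides, closure of that state follows: no introduce can insert a closer node (there isn't one), and the list is fixed.

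Two auxiliary points need care. First, the early turbulence: before $v.q$ settles there may be up to $O(\log\sqrt[d]{n})$ updates, each resetting $v.Q$; I would simply note that after the last update (which happens in finite time by the counting argument in Lemma~\ref{q_v_approx_lemma}, now legitimately invokable since (i) supplies its missing hypothesis) the stable-list convergence argument above applies verbatim. Second, spurious extra references and leftover corrupted content: by Lemma~\ref{lemma:convergence:0} all corrupted messages are flushed, and any superfluous node reference sitting in some $v.Q$ that is farther than the true $c\cdot 2v.q$-th neighbor gets downgraded to \textsc{Linearize} and absorbed by the (already stable) list without changing it. The main obstacle I anticipate is making the "outward propagation" argument fully rigorous in the asynchronous model — precisely, ruling out that a node oscillates or that the round-robin pointer interacts badly with a concurrently-still-converging neighbor — but since we are working downstream of a \emph{stabilized} list, the neighbors' $left$/$right$ are fixed, which tames the interaction and lets the single-hop-per-round monotone progress argument go through.
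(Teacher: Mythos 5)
Your proposal is correct and follows essentially the same path as the paper's proof: reduce to showing that for a fixed $v.q$ the set $v.Q$ eventually contains the $c\cdot 2v.q$ closest list neighbors (then invoke Lemma~\ref{q_v_approx_lemma} for stabilization of $v.q$), and prove that claim by an outward induction along the sorted list using the monotonicity of $v.Q$ (close neighbors, once learned, are never evicted) together with the round-robin introduce/reply mechanism, which makes $v$ learn the next-farther list neighbor from the reply of a node already in $v.Q$. The paper orders the target set $N_v$ by distance and inducts on that index, while you track a per-side boundary index $\ell$, but this is a cosmetic difference; your extra remarks on pre-stabilization turbulence and leftover spurious references are compatible elaborations rather than a departure.
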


\begin{proof}
Consider a sorted list over the $[0,1)$-interval.
In this proof it suffices to show that for a fixed value of $v.q$, the set $v.Q$ eventually contains $v$'s closest $c\cdot 2v.q$ list neighbors, because then Lemma~\ref{q_v_approx_lemma} implies that $v.q$ eventually converges to a value in $(\frac{1}{4}\sqrt[d]{n}, \sqrt[d]{n})$.
For a node $v \in V$ let $N_v = \{n_1, \ldots n_{c\cdot 2v.q}\}$ be the set of $v$'s closest $c\cdot 2v.q$ list neighbors ordered by distance to $v$, i.e., $|v-n_1| < |v - n_2| <  \ldots  < |v - n_{c\cdot 2v.q}|$.
By definition of the procedure \textsc{Introduce} (Algorithm~\ref{algo:q-neighborhood}, line \ref{algline_qNeighborhood_introduce}), we do not remove a node $n_i \in N_v$ from $v.Q$, once $v$ has added $n_i$ to $v.Q$.
The reason for this is because we only remove nodes from $v.Q$, if $|v.Q| > c\cdot 2v.q$ (Algorithm~\ref{algo:q-neighborhood}, line~\ref{algline:q_neighborhood_removeS}) and if that is the case, the nodes that are removed have a greater distance to $v$ than any node $n_i \in N_v$, so they cannot be part of $N_v$.
It suffices to show that $v$ will eventually receive a message containing node $n_i \in N_v$ for all $i \in \{1,..,c\cdot 2v.q\}$, because then $v$ will add $n_i$ to $v.Q$.

We prove this via induction over $i$: For $i = 1$ the node $n_1 \in N_v$ is the closest neighbor of $v$, so it holds $n_1 = v.left$ or $n_1 = v.right$, since the sorted list has already converged.
W.l.o.g. let $n_1 = v.left$.
Then $v$ is introduced to $n_1$ in the \textsc{Timeout} method of $v.left$ (Algorithm~\ref{algo:q-neighborhood}, line~\ref{algline:qNeighborhood_inductionStart}).

For the induction hypothesis assume that $v$ has already been introduced to nodes $n_1,...,n_i$ for a fixed $i \in \{1, \ldots ,c\cdot 2v.q\}$.
This implies $n_1,\ldots,n_i \in v.Q$.

For the induction step we show that $v$ is eventually introduced to the node $n_{i+1}$.
Since there is no node $u$ with $|v-n_i| < |v-u| < |v - n_{i+1}|$, it follows that either $n_{i+1}$ is a direct list neighbor of $v$, or there exists $j \leq i$ such that $n_j \in v.Q$ and $n_{i+1}$ are neighboring nodes.
If $n_{i+1}$ is a direct list neighbor of $v$, then we are done, since this is the same case as for the induction base.
Let $j \leq i$ such that $n_j \in v.Q$ and $n_{i+1}$ are neighboring nodes.
Then it holds either $n_j.left = n_{i+1}$ or $n_j.right = n_{i+1}$.
Node $v$ picks nodes from $v.Q$ in \textsc{Timeout} in a round-robin fashion, so $v$ will eventually pick the node $n_j$.
After calling \textsc{Introduce} on $n_j$, $n_j$ picks the direct list neighbor that is further away from $v$ (Algorithm~\ref{algo:q-neighborhood}, line~\ref{algline:qNeighborhood_response_start}-\ref{algline:qNeighborhood_response_end}), which is $n_{i+1}$, so $n_j$ responds to $v$ with calling $v \gets$~ \textsc{Introduce}($\{n_{i+1}\}, \perp$).
This implies that $v$ gets to know $n_{i+1}$ as well.
\end{proof}

\begin{lemma} \label{lemma:convergence:III}
\textsc{BuildQDeBruijn} converts any $q$-connected sorted list into a $q$-connected sorted list with standard de Bruijn edges.
\end{lemma}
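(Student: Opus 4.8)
The plan is to treat this as a two-part argument: first that feeding the standard-de-Bruijn sub-protocol into an already $q$-connected sorted list cannot destroy that structure, and second that repeated probing drives every $v.db(1,0),v.db(1,1)$ to the values prescribed by Definition~\ref{def:topology}. For the first part I would note that, since no node joins or leaves, Lemma~\ref{q_v_approx_lemma} and Lemma~\ref{lemma:convergence:II} tell us the starting state already has correct $v.left,v.right$, a stable $v.q$, and a $v.Q$ equal to the $c\cdot 2v.q$ closest list neighbors of $v$; I would then check that the only effects the standard-de-Bruijn sub-protocol has on shared state are (i) writing into $v.db(1,0)/v.db(1,1)$, (ii) downgrading their old contents through \textsc{Linearize}, and (iii) upgrading copies of $v.left,v.right$ into $v.db$. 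None of these removes a reference from $G$, and by the closure part of Theorem~\ref{list:self_stabilizing} together with the ``we never drop a node of $N_v$'' reasoning of Lemma~\ref{lemma:convergence:II}, injecting extra references into \textsc{BuildList}/\textsc{Introduce} leaves an already sorted, $q$-connected list unchanged. (The general-de-Bruijn sub-protocol only reads, never writes, the $v.db(1,\cdot)$ variables, so it does not interfere.) Hence the $q$-connected sorted list is invariant and it remains to handle the de Bruijn variables.

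For the second part I would use weakly fair action execution: \textsc{Timeout} has guard $true$, so every node $v$ emits the probes $P_0,P_1$ infinitely often. Fix $j\in\{0,1\}$ and a probe $P_j$ emitted after the list and $q$-neighborhoods have converged. Its route (steps 1--4 of Section~\ref{sec:desc:sdb}) makes at most two hops over $v.right$ and a possibly stale edge $v.right.db(1,j)$, then greedily forwards $P_j$ along $q$-neighborhood edges toward the fixed point $\frac{v+j}{2}$ until it reaches a node whose local view contains no closer node. The crucial observation is that on a correctly $q$-connected sorted list this greedy walk is monotone and terminating: whenever the current carrier is not the unique node closest to $\frac{v+j}{2}$, its left or right list neighbor is strictly closer to that point and lies in its $Q$-set, so the walk keeps making progress and halts exactly at the node $t$ closest to $\frac{v+j}{2}$ --- this is the mechanism behind Lemma~\ref{lemma:standardProbing}, except that for convergence we need only correctness of the landing point, not the $3$-hop bound, so the stale edges used in steps 1--2 are harmless. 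By fair message receipt $v$ then processes the returned probe and sets $v.db(1,j)=t$.

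Finally I would close the loop: once $v.db(1,j)$ holds the correct $t$, every later probe for the same target returns the same $t$, so the variable is never corrupted again, and any value previously stored there was already handed to \textsc{BuildList} as an interior list reference and absorbed without effect; corrupted probes present in channels initially are flushed exactly as in Lemma~\ref{lemma:convergence:0}. Since there are finitely many nodes and each needs only finitely many of its own \textsc{Timeout} calls to fix its two standard de Bruijn edges, the system reaches a state in which $E_L$, $E_q$ and $E_{dB}$ all agree with Definition~\ref{def:topology}, and this state is closed. I expect the main obstacle to be the middle step --- making rigorous that step~3's greedy forwarding over the $q$-neighborhood always terminates at the globally closest node even though steps~1--2 may route the probe through de Bruijn edges that are still arbitrary --- since everything else is bookkeeping about which operations touch which variables.
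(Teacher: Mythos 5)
Your proof follows essentially the same route as the paper: wait for the $q$-connected sorted list to stabilize (Lemma~\ref{lemma:convergence:II}), argue that probes greedily forwarded over $q$-neighborhood edges in a $q$-connected sorted list must terminate at the node closest to the target point, and conclude that $v.db(1,0),v.db(1,1)$ converge to and remain at their prescribed values. Your articulation of the greedy-walk argument (monotone progress via list neighbors in $Q$-sets) is the same mechanism the paper invokes, and your first paragraph on preservation of the $q$-connected list is correct though somewhat redundant given Lemma~\ref{lemma:closure:II}.

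Two details the paper handles that your proof glosses over. First, you wave off ``possibly stale edges in steps 1--2'' as harmless, but if $v.right.db(1,1)=\perp$ the probe never reaches the greedy phase at all: Algorithm~\ref{algo:standard_de_bruijn} returns the probe immediately with the wrong result (line~\ref{algline:sdb_probe_fail}), and $v$ would write that wrong result into $v.db(1,1)$. The paper's proof addresses this explicitly by observing that after every node has run \textsc{Timeout} once, line~\ref{algline:upgrade_stdb} guarantees all $v.db(1,\cdot)$ are non-$\perp$, so probes sent afterward always reach step~3. ``Stale but non-$\perp$'' is indeed harmless; ``$\perp$'' is not, and the distinction needs to be made. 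Second, the two outer list nodes never initiate the probe for $l.db(1,0)$ (resp.\ $r.db(1,1)$) because the first hop is to a $\perp$ neighbor; the paper separately verifies that for those nodes the self-reference planted by line~\ref{algline:upgrade_stdb} is already the correct value. Without these two observations your argument does not actually establish that every $v.db(1,j)$ reaches the value required by Definition~\ref{def:topology}.
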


\begin{proof}
For each node $v \in V$ consider the following scenario for its variable $v.db(1,0)$ (analogously for $v.db(1,1)$): If $v.db(1,0) = \perp$, then $v.db(1,0)$ is set to $v$ itself (Algorithm~\ref{algo:standard_de_bruijn}, line~\ref{algline:upgrade_stdb}).
Thus we can conclude that after every node $v$ has executed its \textsc{Timeout} procedure at least once, it holds that $v.db(1,0)$ is not equal to $\perp$.
Therefore, it holds for every node other than the outer list nodes that every probe does not fail within the first two hops, since we already are given a sorted list and the values for standard de Bruijn edges are not equal to $\perp$.
This implies that the probes find the correct node for the respective standard de Bruijn edges, since they search greedily for the node that is closest to the target point via the $q$-neighborhood after the first two hops.
For the two outer list nodes, i.e., the nodes with the highest, resp. lowest hash values, it holds the following: The left outer list node $l$ does not send out a probe for the variable $l.db(1,0)$, since $l.left = \perp$.
But as stated above, it already holds $l.db(1,0) = l$, which is the correct value for $l$'s standard de Bruijn edge $l.db(1,0)$.
The same argumentation may be applied to the outer right node of the sorted list.
\end{proof}

Following the argumentation in the proof of Lemma~\ref{lemma:convergence:III}, we get the same result for the general de Bruijn edges:

\begin{corollary} \label{lemma:convergence:IV}
\textsc{BuildQDeBruijn} transforms any $q$-connected sorted list with standard de Bruijn edges into a $q$-connected sorted list with standard and general de Bruijn edges.
\end{corollary}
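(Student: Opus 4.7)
The plan is to mirror the argument of Lemma~\ref{lemma:convergence:III}, but now proceed by induction on the general de Bruijn level $i \in \{2,\ldots,\log(v.q)+1\}$. The base case $i=1$ is exactly the conclusion of Lemma~\ref{lemma:convergence:III}: after every node has executed \textsc{Timeout} at least once, $v.db(1,0)$ and $v.db(1,1)$ are either initialized to $v$ (in place of $\perp$) or already point to the correct closest node, and subsequent probes greedily converge via the $q$-neighborhood to the true target thanks to the already-stable $q$-connected sorted list.

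For the inductive step, I would assume that every node $v$ has all of its de Bruijn edges $v.db(i',j')$ with $i' < i$ set to the node closest to $\frac{v+j'}{2^{i'}}$, and show that the probes $P_{i,j}$ sent out in \textsc{Timeout} eventually set $v.db(i,j)$ to the correct target. Following the routing of Section~\ref{sec:desc:general}, $P_{i,j}$ first hops to $u = v.db(i-1, j \bmod 2^{i-1})$; by induction hypothesis $u \ne \perp$ and $u$ is the node closest to $\frac{v + (j \bmod 2^{i-1})}{2^{i-1}}$. The second hop is a standard de Bruijn hop via $u.db(1,0)$ or $u.db(1,1)$, which is correct by the base case. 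By Lemma~\ref{lemma:dbhop_correctness} the probe now sits within the range of $\frac{v+j}{2^i}$ that is covered by a $q$-neighborhood lookup, so greedy forwarding via $E_q$ terminates at the unique node closest to $\frac{v+j}{2^i}$ and reports it back to $v$.

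To complete the step, I would invoke the same downgrading/upgrading mechanism as before: whenever $v$ receives a probe response containing a node $t'$ closer to $\frac{v+j}{2^i}$ than its current $v.db(i,j)$, the old value is delegated via \textsc{Linearize} and $v.db(i,j)$ is set to $t'$; if $v.db(i,j) = \perp$ when \textsc{Timeout} fires, it is first reset to $v$, exactly as in Lemma~\ref{lemma:convergence:III}. The two outer list nodes must be treated separately: probes whose target $\frac{v+j}{2^i}$ falls outside the range covered by existing edges simply return the locally closest node, which matches Definition~\ref{def:topology} because that node is indeed closest on the $[0,1)$-interval.

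The only delicate point, and what I would flag as the main obstacle, is justifying that the induction hypothesis truly holds simultaneously at every node before probes on level $i$ are evaluated. This is not automatic under our asynchronous model: probes issued at level $i$ while level $i-1$ is still stabilizing may pick up stale values. The argument I would use is that any such probe either fails quickly (triggering a downgrade, which only helps the sorted list stay weakly connected) or returns an incorrect target that is later overwritten when a fresh probe issued after level $i-1$ has stabilized produces the correct closest node, using the fact that probes are sent periodically in a round-robin schedule and that \textsc{BuildQDeBruijn} never deletes node references but only delegates them.
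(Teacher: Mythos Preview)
Your proposal is correct and follows essentially the same approach as the paper, which simply states that the corollary holds ``following the argumentation in the proof of Lemma~\ref{lemma:convergence:III}'' without spelling out any further details. Your level-by-level induction is the natural way to make that argumentation precise for the hierarchical probing scheme of Section~\ref{sec:desc:general}; the paper's (implicit) version would rely only on the fact that after one \textsc{Timeout} no $v.db(i,j)$ equals~$\perp$ and that greedy forwarding over the already-stable $q$-connected list eventually reaches the closest node regardless of where the first two hops land, whereas your induction additionally recovers the $3$-hop efficiency of Lemma~\ref{lemma:probing} along the way.
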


Finally, we obtain the main result of this section:

\begin{theorem}[Convergence]
\textsc{BuildQDeBruijn} transforms any weakly connected graph $G = (V, E_L \cup E_q \cup E_{dB} \cup E_{q-dB})$ into a GDB.
\end{theorem}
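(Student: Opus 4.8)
The plan is to assemble the final statement by chaining the convergence results established so far, exploiting the layered structure of \textsc{BuildQDeBruijn} depicted in Figure~\ref{fig:protocol_dependencies}. First I would invoke Lemma~\ref{lemma:convergence:0} to guarantee that after finitely many steps every corrupted message initially present in the channels has been consumed, so that from some state on the only messages in transit are ones generated by the protocol itself, and $G$ is still weakly connected. From that point I would argue in four successive stages: (i) by Lemma~\ref{lemma:convergence:I} the explicit list edges $E_L$ eventually form a sorted list, which stays a sorted list by the closure part of Theorem~\ref{list:self_stabilizing} together with the downgrade/upgrade mechanism; (ii) by Lemma~\ref{lemma:convergence:II} this sorted list is extended to a $q$-connected list, and simultaneously Lemma~\ref{q_v_approx_lemma} forces every $v.q$ to settle in $(\tfrac14\sqrt[d]{n},\sqrt[d]{n})$ while Lemma~\ref{lemma:logn} makes the local $\log n$-estimate accurate up to $o(1)$; (iii) by Lemma~\ref{lemma:convergence:III} the standard de Bruijn variables $v.db(1,0),v.db(1,1)$ become correct; (iv) by Corollary~\ref{lemma:convergence:IV} all remaining variables $v.db(i,j)$ become correct. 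At that point every edge set of Definition~\ref{def:topology} is realized and there are no corrupted messages, i.e.\ $G$ is a GDB.

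The non-trivial point in this composition is that the asynchronous, weakly-fair scheduler may interleave the actions of all four sub-protocols arbitrarily, so ``stage $k+1$ begins after stage $k$ finishes'' must be read as ``once the structure of stage $k$ is present, it is never destroyed again''. Hence for each stage I would separately record a closure observation: once $E_L$ is a sorted list, no \textsc{Linearize} call can alter it (a downgrade fires only for a reference strictly closer to $v$ than $v.left$/$v.right$, which no longer exists, and an upgrade only copies an already-present list neighbor); once $v.Q$ holds the $c\cdot 2v.q$ closest list neighbors for the frozen value of $v.q$, \textsc{Introduce} never evicts one of them (eviction needs $|v.Q|>c\cdot 2v.q$ and removes only farther nodes, as in the proof of Lemma~\ref{lemma:convergence:II}); and once a de Bruijn variable stores the node closest to its target point, every future probe for it returns that same node (Lemmas~\ref{lemma:standardProbing} and~\ref{lemma:probing}). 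Combined with weakly fair action execution, which guarantees that the \textsc{Timeout} of each node, and hence the progress step of each stage, is taken infinitely often, these closure facts let the four lemmas be applied in sequence to the \emph{same} computation.

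I expect the main obstacle to be stage (ii): one has to simultaneously control the flushing of the transient garbage out of the sets $v.Q$ and $v.db$ that was placed adversarially in the initial state — these references must be delegated away by the downgrade mechanism without ever disconnecting $G$, which is exactly what the bridge arguments inside Lemma~\ref{lemma:convergence:I} provide — and the fact that $v.q$ may still be changing while $v.Q$ is being filled, so that $|v.Q|$ and the relevant range of de Bruijn levels are moving targets. The resolution is the quantitative bound of Lemma~\ref{q_v_approx_lemma}: each node performs at most $\mathcal O(\log\sqrt[d]{n})$ $v.q$-updates, so there is a state after which all $v.q$ are frozen; from then on $v.Q$ has a fixed size and the induction of Lemma~\ref{lemma:convergence:II} applies verbatim, and the de Bruijn stages (iii)--(iv) see a static set of levels $\{1,\dots,\log(2v.q)\}$. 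Finally I would remark that the residual inaccuracy of $v.q$ and of the $\log n$-estimate is harmless, since by the remark following Theorem~\ref{theorem:routing} all routing and probing guarantees only require these quantities to lie in $\Theta(\sqrt[d]{n})$, resp.\ $\log n \pm o(1)$.
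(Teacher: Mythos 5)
Your proposal follows exactly the paper's (implicit) proof: the theorem is obtained by chaining Lemma~\ref{lemma:convergence:0}, Lemma~\ref{lemma:convergence:I}, Lemma~\ref{lemma:convergence:II}, Lemma~\ref{lemma:convergence:III} and Corollary~\ref{lemma:convergence:IV} in that order, which the paper states without writing out a separate proof. Your additional remarks on why the stages compose soundly under the asynchronous scheduler (closure per stage, and freezing of $v.q$) are a correct and useful elaboration of what the paper leaves tacit.
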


\subsection{Closure}
In this section we prove the closure property of our protocol.
The approach is similar to the convergence proof from the last section: We prove the closure property for every single sub-protocol and combine these proofs to show the closure of the whole system.

\begin{lemma}\label{lemma:closure:I}
If the explicit edges in $G = (V, E_L)$ already form a sorted list, then they are preserved at any point in time if no nodes join or leave the system.
\end{lemma}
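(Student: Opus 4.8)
The plan is to show that once the explicit edges of $G=(V,E_L)$ form a sorted list, no action of \textsc{BuildQDeBruijn} can destroy this structure. The key observation is that the only sub-protocol that writes to the variables $v.left$ and $v.right$ directly is \textsc{BuildList}, and all other sub-protocols only interact with list edges through downgrading (which calls \textsc{Linearize}) or upgrading (which only copies references from $v.left$, $v.right$ into $v.Q$ or $v.db$, never the other way around). So it suffices to argue (a) that \textsc{BuildList} alone preserves a sorted list, which is exactly the closure part of Theorem~\ref{list:self_stabilizing}, and (b) that no downgrade or upgrade operation introduced by the other sub-protocols can change the correct values of $v.left$ and $v.right$.

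First I would recall the semantics of \textsc{Linearize}($u$) on a node $w$: it can only \emph{tighten} $w.left$ or $w.right$, i.e.\ replace the current neighbor by one that is strictly closer to $w$ on the correct side, delegating the displaced node onward. In a correct sorted list, $v.left$ and $v.right$ are already the closest left/right neighbors, so any node $u$ that arrives via a \textsc{Linearize} call satisfies either $u=v.left$, $u=v.right$, or $u$ is farther from $v$ than the current neighbor on its side; in every case the guard that would overwrite $v.left$ or $v.right$ is false, and $u$ is simply delegated further along the list. Since in a legitimate state the only references floating around are the correct ones (plus copies thereof in $q$-neighborhood and de Bruijn sets), every downgrade call \textsc{Linearize}($u$) is of this harmless form. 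Hence \textsc{BuildList}'s list variables are never modified, and by Theorem~\ref{list:self_stabilizing} the sorted list is preserved.

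Next I would check the upgrade direction: upgrading copies the reference in $v.left$ (or $v.right$) into $v.Q$ or $v.db(i,j)$ when that reference is a better fit than what is stored there. This operation reads $v.left$ but never writes it, so it cannot damage $E_L$; at worst it adds a correct neighbor reference to $v.Q$ or $v.db$, which is consistent with Definition~\ref{def:topology} since the closest list neighbor is also the closest neighbor overall. Finally, since no new messages carrying incorrect references are generated in a legitimate state (the only messages sent are \textsc{Linearize}, \textsc{Introduce}, and probes, all carrying references to genuine nodes), no implicit list edge can ever turn into an incorrect explicit one. Putting these pieces together yields the claim.

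The main obstacle I anticipate is being careful about the interaction between the sub-protocols and \textsc{BuildList} \emph{during} a single \textsc{Timeout}: one must verify that when, say, the $q$-neighborhood protocol downgrades a node via \textsc{Linearize}, the resulting chain of delegations terminates without ever passing through a state in which $v.left$ or $v.right$ transiently points to the wrong node. This is handled by the fact that \textsc{Linearize} only ever performs monotone tightening, so no transient violation of the list invariant is possible — but spelling out that monotonicity argument precisely, and confirming that it applies uniformly to the downgrade calls of all three non-list sub-protocols, is the delicate part of the proof.
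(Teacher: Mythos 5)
Your proposal is correct and takes essentially the same route as the paper: both rest on the observation that only \textsc{Linearize} can modify $v.left$ or $v.right$, that it does so only when a strictly closer node is learned, and that in a sorted list $v.left$ and $v.right$ are already the closest neighbors, so no such node can ever appear — after which the closure part of Theorem~\ref{list:self_stabilizing} applies. The paper states this in two sentences; your additional remarks about upgrades being read-only and about delegation chains being monotone are correct elaborations but not a different argument.
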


\begin{proof}
The lemma follows from Theorem~\ref{list:self_stabilizing}, because the \textsc{BuildList} protocol only modifies edges, if a node $v$ gets to know a node $w$ with either $v.left < w < v$ or $v < w < v.right$, which is not possible, because $v.left$ and $v.right$ already store $v$'s closest neighbors.
\end{proof}

\begin{lemma}\label{lemma:closure:II}
If the explicit edges in $G = (V, E_L \cup E_q)$ already form a $q$-connected list, then they are preserved at any point in time if no nodes join or leave the system.
\end{lemma}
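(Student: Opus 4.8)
The plan is to show that every action of \textsc{BuildQDeBruijn} that could possibly alter one of the variables determining an edge of $E_L \cup E_q$ --- namely $v.left$, $v.right$, $v.Q$, or $v.q$ for some node $v$ --- is a no-op once the explicit edges already form a $q$-connected list. For $v.left$ and $v.right$ this is exactly Lemma~\ref{lemma:closure:I}, and inspecting the remaining sub-protocols shows that the only way they touch these variables is through the \textsc{Linearize} calls they issue, which by the same argument cannot change a stable sorted list. For $v.q$, Lemma~\ref{q_v_approx_lemma} tells us that once $v.q \in (\frac{1}{4}\sqrt[d]{n}, \sqrt[d]{n})$ no further $v.q$-update occurs, so $v.q$ stays fixed and in particular $v.Q$ is never resized. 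Hence it remains to rule out any change to the \emph{contents} of $v.Q$, and the only sub-protocol that writes to $v.Q$ is the $q$-neighborhood protocol, which does so in its \textsc{Timeout} (the upgrade/downgrade step) and when processing an \textsc{Introduce} message.

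First I would dispose of the \textsc{Timeout} upgrade/downgrade step. In a legitimate state $v.left$ and $v.right$ are $v$'s two closest list neighbors, hence both lie among $v$'s closest $c\cdot 2v.q$ list neighbors and are therefore already contained in $v.Q$; moreover no node of $v.Q$ is closer to $v$ than $v.left$ (resp.\ $v.right$). Thus the downgrade guard never fires, and the upgrade merely re-copies a reference already present in $v.Q$. Either way $v.Q$ is unchanged.

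Next I would handle the \textsc{Introduce} messages, of which there are exactly two kinds in a legitimate computation (recall that by definition a legitimate state contains no corrupted messages, so every message in transit was generated by the protocol from the already-correct current configuration). (i) In \textsc{Timeout}, $v$ picks some $q_k \in v.Q$ round-robin and sends \Call{Introduce}{$\tilde q$, $v$} to it, where $\tilde q$ is $v$ itself if $q_k \in \{v.left, v.right\}$ and otherwise the element of $v.Q$ one step towards $v$. Since in a $q$-connected list $v.Q$ is the contiguous block of $v$'s $c\cdot 2v.q$ nearest list neighbors, consecutive elements of $v.Q$ are consecutive on the global list, so in every case $\tilde q$ is a direct list neighbor of $q_k$ and hence already lies in $q_k.Q$. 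Therefore $q_k.Q \cup \{\tilde q\} = q_k.Q$, the recomputed set equals $q_k.Q$, the set of evicted nodes is empty, and nothing is delegated. (ii) In response, $q_k$ sends back \Call{Introduce}{$\{l\}$, $\perp$} to $v$, where $l$ is the list neighbor of $q_k$ on the side \emph{away} from $v$. If $l \in v.Q$ we are done; otherwise $l \notin v.Q$, and because $v.Q$ is \emph{exactly} the set of $v$'s $c\cdot 2v.q$ closest list neighbors, every node of $v.Q$ is strictly closer to $v$ than $l$, so recomputing the $c\cdot 2v.q$ closest nodes of $v.Q \cup \{l\}$ again returns $v.Q$ unchanged and evicts only $l$. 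The $\perp$ in the second argument guarantees $v$ sends no further \textsc{Introduce} message, so the chain terminates. Any \textsc{Linearize}($l$) triggered by evicting $l$ only forwards the reference of an already correctly-placed node and modifies neither a stable sorted list nor any $q$-neighborhood set, as in Lemma~\ref{lemma:closure:I}.

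The main obstacle is case (ii): one must argue that the acknowledgement node $l$, which may sit just outside $v$'s current neighborhood, can never be closer to $v$ than some node currently in $v.Q$. This is precisely where the legitimate-state characterization of $v.Q$ as the contiguous block of $v$'s $c\cdot 2v.q$ nearest list neighbors is used --- $l \notin v.Q$ then immediately forces $l$ to be farther from $v$ than every node of $v.Q$. The rest is bookkeeping: checking that \textsc{BuildList} and the de Bruijn sub-protocols never write to $v.Q$, and that asynchrony and round-robin scheduling cannot surface a message with stale data, which holds because a legitimate state has no corrupted messages and every in-flight message reflects the current (correct) configuration.
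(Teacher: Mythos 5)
Your proof is correct and follows the paper's approach: freeze $v.q$ via Lemma~\ref{q_v_approx_lemma}, then argue that $v.Q$ cannot change. Your version is considerably more rigorous than the paper's terse sketch, which simply asserts that ``the protocol does not remove edges from $v.Q$ in a legitimate state''; you actually verify this by a case analysis of every action touching $v.Q$ (the \textsc{Timeout} upgrade/downgrade step and both directions of \textsc{Introduce}), exploiting the fact that in a legitimate state $v.Q$ is the contiguous block of $v$'s $c\cdot 2v.q$ nearest list neighbors to show that each such action is a no-op.
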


\begin{proof}
We know that once $v.q \in (\frac{1}{4}\sqrt[d]{n}, \sqrt[d]{n})$, there won't happen any $v.q$-updates anymore (Lemma~\ref{q_v_approx_lemma}).
So after a node $v$ knows its closest $c\cdot 2v.q$ list neighbors, it does not add any new node to its set $v.Q$.
Since our protocol does not remove edges from $v.Q$ in a legitimate state of the system, the lemma follows.
\end{proof}

\begin{lemma}\label{lemma:closure:III}
If the explicit edges in $G = (V, E_L \cup E_q \cup E_{dB})$ already form a $q$-connected list with standard de Bruijn edges, then they are preserved at any point in time if no nodes join or leave the system.
\end{lemma}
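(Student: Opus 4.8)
The plan is to build on Lemma~\ref{lemma:closure:II}, which already guarantees that the edges in $E_L \cup E_q$ stay a $q$-connected sorted list forever: no node is ever added to or removed from any $v.Q$, and $v.left, v.right$ never change. It therefore remains to show that the explicit de Bruijn edges in $E_{dB}$, i.e.\ the variables $v.db(1,0)$ and $v.db(1,1)$ of every node $v$, are never changed to an incorrect value. The variable $v.db(1,j)$ can only be modified in three ways: (i) by a returned probe $P_j$ causing $v$ to execute $v.db(1,j)\gets t$ for the node $t$ carried back by the probe; (ii) by a \emph{downgrade}, where $v.db(1,j)$ is handed to \textsc{BuildList} because it is closer to $v$ than $v.left$ or $v.right$; (iii) by an \emph{upgrade}, where $v$ copies $v.left$ or $v.right$ into $v.db(1,j)$. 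I would treat these three cases separately.

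For (ii) and (iii): in a legitimate state $v.left$ and $v.right$ are already $v$'s closest left and right list neighbours, and by Definition~\ref{def:topology} the value stored in $v.db(1,j)$ is already the node closest to $\frac{v+j}{2}$. Hence no list neighbour can be strictly closer to $v$ than it is, and no node in $v.db$ is a strictly better fit for a list edge than the currently stored list neighbour, so the guards of the downgrade and upgrade steps are not satisfied. The only borderline situation is when $\frac{v+j}{2}$ lies so close to $v$ that $v$ itself (or a direct list neighbour) is the node closest to $\frac{v+j}{2}$; but then the value written is exactly the value already stored, so again the edge set is unchanged.

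For (i): since the state is legitimate, every variable inspected while routing a probe $P_j$ out of $v$ — namely $v.right$, the de Bruijn edge $v.right.db(1,j)$ reached by the standard de Bruijn hop, and the $q$-neighbourhoods traversed in the greedy phase — holds its correct value. Therefore the routing argument of Lemma~\ref{lemma:standardProbing} applies verbatim: within three hops w.h.p.\ the probe reaches the node $t$ closest to $\frac{v+j}{2}$ and carries it back to $v$. This $t$ equals the value currently stored in $v.db(1,j)$, so the assignment $v.db(1,j)\gets t$ is a self-assignment; nothing is delegated to \textsc{BuildList} and no explicit edge changes. The two outer list nodes are handled exactly as in the proof of Lemma~\ref{lemma:convergence:III}: the left outer node never probes for $db(1,0)$ because its left neighbour is $\perp$, but it already stores itself there, which is the correct value, and symmetrically for the right outer node. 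Combining the three cases shows $E_L \cup E_q \cup E_{dB}$ is preserved.

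The main obstacle I anticipate is making the interaction between the standard de Bruijn sub-protocol and \textsc{BuildList} through the downgrade/upgrade mechanism airtight, including the degenerate small-$v$ corner cases; and, more subtly, arguing about probes that are already in flight in the given legitimate state. Such probes are legitimate (non-corrupted) protocol messages, so they persist and will be delivered; one must check that any such probe, continuing its route from wherever it currently sits, still encounters only correct list, $q$-neighbourhood and de Bruijn variables, and hence still returns the node closest to its target. This follows by the same routing argument once one observes, by induction on the computation and using cases (ii) and (iii) above, that all of these variables remain correct at every step.
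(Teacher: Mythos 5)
Your proof is correct and follows essentially the same structure as the paper's: identify the places where $v.db(1,0)$ and $v.db(1,1)$ can be modified (the initial downgrade/upgrade checks in \textsc{Timeout} and the \textsc{Probe\_Done} handler), then argue that in a legitimate state none of these triggers a change because all inspected variables already hold their optimal values. Your additional care about probes already in flight and the explicit treatment of the outer list nodes go slightly beyond the paper's terse argument, but the underlying idea is the same.
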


\begin{proof}
Aside from the initial checks in \textsc{Timeout}, every node $v$ modifies its pointers $v.db(1,0)$ and $v.db(1,1)$ only in the method \textsc{Probe\_done} (see Algorithm~\ref{algo:standard_de_bruijn}) under the condition that the probe result is different from the node currently stored in $v.db(1,0)$, resp. $v.db(1,1)$.
In a stable system, all probes return the node that is closest to the probe's target position, so the result of a probe does not change: $v.db(1,0)$ and $v.db(1,1)$ already store the nodes that are closest to the points $v/2$, resp. $(v+1)/2$.
Also none of the initial checks in \textsc{Timeout} are true when $v$ already has the optimal node stored for its standard de Bruijn edges.
\end{proof}

\begin{lemma}\label{lemma:closure:IV}
If the explicit edges in $G = (V, E_L \cup E_q \cup E_{dB} \cup E_{q-dB})$ already form a $q$-connected list with standard and general de Bruijn edges, then they are preserved at any point in time if no nodes join or leave the system.
\end{lemma}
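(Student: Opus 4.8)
The plan is to mirror the proof of Lemma~\ref{lemma:closure:III}, lifting the argument from standard de Bruijn edges (level $1$) to every higher level $i \in \{2,\ldots,\log(v.q)+1\}$, while invoking the earlier closure lemmas to keep the underlying list, $q$-connected list and standard de Bruijn edges fixed. First I would observe that since the explicit edges of $G$ already form a $q$-connected list with standard and general de Bruijn edges, Lemmas~\ref{lemma:closure:I}, \ref{lemma:closure:II} and \ref{lemma:closure:III} guarantee that $E_L$, $E_q$ and $E_{dB}$ are preserved; in particular, by Lemma~\ref{q_v_approx_lemma} no $v.q$-update ever happens, so for every node $v$ the index ranges $i \in \{2,\ldots,\log(v.q)+1\}$, $j \in \{0,\ldots,2^i-1\}$, and hence the set of target points $\frac{v+j}{2^i}$, stay fixed. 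It therefore suffices to show that each variable $v.db(i,j)$ already stores the node closest to $\frac{v+j}{2^i}$ (which holds by Definition~\ref{def:topology}) and is never changed by the protocol.

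Next I would argue, by induction on the level $i$, that the probe $P_{i,j}$ always returns exactly the node currently stored in $v.db(i,j)$. The base case $i = 1$ is precisely Lemma~\ref{lemma:closure:III}. For the induction step, recall from Section~\ref{sec:desc:general} that $P_{i,j}$ is first forwarded to $u = v.db(i-1,k)$ with $k = j \bmod 2^{i-1}$, which by the induction hypothesis is the node closest to $\frac{v+k}{2^{i-1}}$; then one standard de Bruijn hop is performed (correct by the base case); and finally the probe is forwarded greedily along $q$-neighborhood edges. In a legitimate state this greedy phase reaches the node closest to $\frac{v+j}{2^i}$ w.h.p., by exactly the analysis underlying Lemma~\ref{lemma:probing} (and Theorem~\ref{theorem:routing}). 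Since in the legitimate state $v.db(i,j)$ already equals this closest node, the guard in \textsc{Probe\_done} that would trigger a redirection is false, so no edge is rebuilt and no old edge is delegated away. I would then check, exactly as in Lemma~\ref{lemma:closure:III}, that the initial sanity checks in \textsc{Timeout} (replacing $\perp$ entries) never fire because every $v.db(i,j)$ is already non-$\perp$, and that the downgrade/upgrade exchange between the set $v.db$ and $v.left,v.right$ never fires because the list neighbours are already the closest nodes and every general de Bruijn edge is already optimal.

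Finally, combining this with Lemmas~\ref{lemma:closure:I}--\ref{lemma:closure:III} shows that all four edge sets $E_L$, $E_q$, $E_{dB}$, $E_{q-dB}$ are simultaneously preserved, which is exactly closure of the GDB. The main obstacle I anticipate is making the level induction airtight: one must verify that a high-level probe in flight cannot transiently corrupt a lower-level de Bruijn edge — which it cannot, since a probe $P_{i,j}$ only ever writes to $v.db(i,j)$ on its return — so that the inductive hypothesis for level $i-1$ genuinely holds at the moment $P_{i,j}$ uses $v.db(i-1,k)$ as a way-point. One should also be explicit that the only probabilistic ingredient is the ``w.h.p.'' success of the greedy $q$-neighborhood phase, which is the same caveat already present in the routing and probing lemmas, so the statement's ``preserved at any point in time'' is to be read up to that w.h.p. condition.
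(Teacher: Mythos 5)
Your proposal is correct and takes essentially the same approach as the paper, whose entire proof is just ``Analogously to the proof of Lemma~\ref{lemma:closure:III}.'' The induction over levels and the in-flight-probe observation are a careful unpacking of that one-liner; the only small refinement I would make is that in a legitimate state the greedy $q$-neighborhood phase of a probe succeeds \emph{deterministically} (the $q$-connected list guarantees strict progress toward the target), so the ``w.h.p.'' in Lemma~\ref{lemma:probing} bounds the hop count, not the correctness, and closure holds without any probabilistic caveat.
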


\begin{proof}
Analogously to the proof of Lemma~\ref{lemma:closure:III}.
\end{proof}

The main result of this section follows immediately from Lemma~\ref{lemma:closure:IV}:

\begin{theorem}[Closure]
If the explicit edges in $G = (V, E_L \cup E_q \cup E_{dB} \cup E_{q-dB})$ already form a GDB, then they are preserved at any point in time if no nodes join or leave the system.
\end{theorem}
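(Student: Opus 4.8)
The plan is to assemble the theorem from the four sub-protocol closure lemmas (Lemmas~\ref{lemma:closure:I}--\ref{lemma:closure:IV}) together with the observation that the sub-protocols of \textsc{BuildQDeBruijn} cannot destructively interfere once $G$ is a GDB. First I would recall that, by Definition~\ref{def:topology}, a GDB is exactly the graph whose explicit edges are $E_L \cup E_q \cup E_{dB} \cup E_{q-dB}$ as prescribed there, and that by the definition of a legitimate state (Section~\ref{sec:model}) we additionally need the system to be free of corrupted messages. So closure splits into two claims: (i) the explicit edge set is preserved, and (ii) no corrupted message ever reappears.

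For (i) I would argue sub-protocol by sub-protocol, in the same dependency order used in the convergence proof (cf.\ Figure~\ref{fig:protocol_dependencies}). Lemma~\ref{lemma:closure:I} preserves $E_L$: in a GDB, $v.left$ and $v.right$ already store $v$'s closest left and right neighbors, so no $w$ with $v.left < w < v$ or $v < w < v.right$ is ever learned and \textsc{BuildList} redirects nothing. Lemma~\ref{lemma:closure:II} then preserves $E_q$: by Lemma~\ref{q_v_approx_lemma} no further $v.q$-update occurs, $v.Q$ already holds the $c\cdot 2v.q$ closest list neighbors, and the protocol removes from $v.Q$ only when $|v.Q| > c\cdot 2v.q$, which never happens. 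Lemmas~\ref{lemma:closure:III} and~\ref{lemma:closure:IV} preserve $E_{dB}$ and $E_{q-dB}$: in a stable state every probe returns the node already stored in the corresponding $v.db(i,j)$, so \textsc{Probe\_done} changes nothing, and none of the initial checks in \textsc{Timeout} fire since every $v.db(i,j) \neq \perp$ and is already optimal. Crucially these four arguments compose, because each sub-protocol writes only to its own variables and never invokes a method of another $E_+$-sub-protocol; the only cross-protocol coupling is the upgrade/downgrade mechanism between list edges and $v.Q \cup v.db$, and in a GDB neither a downgrade (no non-list edge lies closer to $v$ than $v.left$ or $v.right$) nor an upgrade (no list neighbor is a better fit than what is already stored) is ever enabled. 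Hence the explicit edge set remains exactly $E_L \cup E_q \cup E_{dB} \cup E_{q-dB}$.

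For (ii) I would note that in a legitimate state the only messages ever in transit are those periodically emitted in \textsc{Timeout} (\textsc{Linearize} requests, \textsc{Introduce} requests and their bounded acknowledgements, and probes), all carrying correct references; by the argument of (i), processing any such message leaves every variable unchanged, and by the finiteness/acknowledgement structure already used in the proof of Lemma~\ref{lemma:convergence:0} it spawns no message carrying false information. Thus no corrupted message is ever created and the system stays legitimate; combining (i) and (ii) gives the theorem, which as noted is immediate from Lemma~\ref{lemma:closure:IV}.

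The step I expect to be the main obstacle is making the composition of the four non-interference arguments fully rigorous: each per-sub-protocol lemma implicitly assumes the others are already quiescent, so one must verify that the upgrade/downgrade coupling between \textsc{BuildList} and the $q$-neighborhood and de Bruijn sub-protocols has no enabled action in a GDB state --- i.e.\ that there is no circular dependency that could re-activate some action once all four have individually stabilized.
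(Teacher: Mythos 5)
Your proposal is correct and follows essentially the same route as the paper: the paper states that the theorem follows immediately from Lemma~\ref{lemma:closure:IV}, whose statement is already about the full graph $G = (V, E_L \cup E_q \cup E_{dB} \cup E_{q-dB})$ and whose proof chain (Lemmas~\ref{lemma:closure:I}--\ref{lemma:closure:IV}) builds up exactly the sub-protocol-by-sub-protocol argument you sketch, with each lemma incorporating one additional edge type so the composition concern you flag is resolved implicitly by the cumulative form of the lemma statements. Your additional part~(ii) on corrupted messages is a sound refinement that the paper leaves implicit (the theorem's phrasing is only about edge preservation), but it does not constitute a different proof strategy.
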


\subsection{Additional Properties}
First, we show that the \textsc{BuildQDeBruijn} protocol is almost silent in a legitimate state regarding the number of messages that are sent out per node in each call of \textsc{Timeout}:

\begin{theorem}\label{message_number}
Let the GDB $G$ be in a legitimate state.
The number of messages that a single node sends out per \textsc{Timeout}-call in the \textsc{BuildQDeBruijn} protocol is constant.
\end{theorem}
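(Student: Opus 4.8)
The plan is to decompose \textsc{BuildQDeBruijn} into its four sub-protocols and to bound, for each one separately, the number of messages that an arbitrary node $v$ puts into channels during a single execution of its \textsc{Timeout} action; since the sub-protocols do not cascade into one another during \textsc{Timeout} (cf.\ Figure~\ref{fig:protocol_dependencies}), summing the four bounds yields the claimed constant. Throughout I would use that $G$ is in a legitimate state, so in particular no corrupted messages are present and all edges already carry their target values; this lets me argue that the messages $v$ emits are absorbed by their recipients without triggering further traffic back at $v$ within the same \textsc{Timeout}-call.

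Concretely, I would argue the following per sub-protocol. (i) In \textsc{BuildList}, $v$ sends a single \textsc{Linearize}($v$) message to each of $v.left$ and $v.right$, hence at most $2$ messages. (ii) In the $q$-neighborhood sub-protocol, the round-robin choice of exactly one entry $q_k \in v.Q$ per \textsc{Timeout}-call means $v$ sends at most one \textsc{Introduce} message; the acknowledgement is sent by $q_k$ and, by construction, does not spawn a further acknowledgement. (iii) In the standard de Bruijn sub-protocol, $v$ emits at most the two probes $P_0$ and $P_1$. (iv) In the general de Bruijn sub-protocol, the round-robin choice of a single pair $(i,j)$ yields at most one probe $P_{i,j}$. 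In all four cases, downgrading and upgrading of edges are performed by local method calls and generate no messages, and a probe or message returning to $v$ is processed without emitting anything (in a legitimate state a probe returns the node $v$ already stores, cf.\ the proof of Lemma~\ref{lemma:closure:III}). Adding up gives at most $2+1+2+1 = 6$ messages per \textsc{Timeout}-call.

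The step that needs genuine care is the round-robin design of the $q$-neighborhood and general de Bruijn sub-protocols: a node has $\Theta(\sqrt[d]{n})$ entries in $v.Q$ and $\Theta(\sqrt[d]{n})$ de Bruijn edges (Theorem~\ref{theorem_degree}), so introducing or probing for all of them in a single \textsc{Timeout}-call would give a degree-dependent rather than constant bound. I would therefore make explicit that each per-node round-robin counter advances by exactly one position per \textsc{Timeout}-call, so that only one entry of each such set is touched. A secondary point, immediate from the probe-routing description but worth stating, is that the at-most-three forwarding hops of a probe (Lemmas~\ref{lemma:standardProbing} and~\ref{lemma:probing}) are performed by the intermediate nodes and by the node that finally sends the probe back, not by $v$ itself, so they do not count toward $v$'s own \textsc{Timeout}-call.
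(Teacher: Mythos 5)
Your proposal is correct and follows the same route as the paper's own proof: a per-sub-protocol count of $2+1+2+1 = 6$ messages emitted by $v$ in one \textsc{Timeout}-call. The paper states the count tersely; your additional remarks (the round-robin counters advancing one step per call, downgrades being local, probe forwarding being charged to other nodes) are sound elaborations of the same argument rather than a different approach.
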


\begin{proof}
We count the messages that are sent out per \textsc{Timeout}-call for a node $v \in V$: In the \textsc{BuildList} protocol, $v$ sends out two messages per \textsc{Timeout}-call to introduce itself to $v.left$ and $v.right$.
One message is sent out by $v$ to a list neighbor $u \in v.Q$ in the $q$-neighborhood sub-protocol.
Furthermore, $v$ sends out two probes per \textsc{Timeout}-call to the targets $\frac{v}{2}$ and $\frac{v+1}{2}$ via the standard de Bruijn protocol.
Finally, $v$ sends out one probe to a general de Bruijn edge.
Summing it all up, $v$ sends out $6$ messages per \textsc{Timeout}-call.
\end{proof}

Theorem~\ref{message_number} assures that nodes do not get flooded with stabilization messages.
This means incoming messages inserted into node channels will be processed quickly, which supports our main goal to deliver search requests as quickly as possible to the target node.

Next, we want to compute the expected amount of work for a single node $v \in V$.
By \emph{work} we mean the number of edges that a node has to build or redirect, when the number of participants changes.
First, we show that as soon as $v.q$ becomes stable, it only needs to be modified, if the number of nodes in our system changes quite heavily:

\begin{lemma}\label{q_v_update_lemma}
Let the GDB $G$ be in a legitimate state.
When $n$ increases by factor $2^d$ ($\Rightarrow 2^dn-n$ nodes join), then each old node $v \in V$ only needs to perform one $v.q$-update on expectation.
\end{lemma}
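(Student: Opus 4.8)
The plan is to follow the life of the approximation variable $v.q$ through the growth and show it is corrected at most once. In the legitimate state, Lemma~\ref{q_v_approx_lemma} tells us $v.q\in(\frac14\sqrt[d]{n},\sqrt[d]{n})$ w.h.p.\ and that $v.q$ (a power of two) is not scheduled for any further update. After the $2^dn-n$ new nodes have joined and the sorted list together with all $q$-neighborhoods have re-stabilized around the enlarged node set --- which happens by Lemmas~\ref{lemma:convergence:I} and~\ref{lemma:convergence:II} --- the relevant size is $n'=2^dn$, so $\sqrt[d]{n'}=2\sqrt[d]{n}$, the ``do nothing'' window for $v.q$ becomes $(\frac14\sqrt[d]{n'},\sqrt[d]{n'})=(\frac12\sqrt[d]{n},2\sqrt[d]{n})$, and the approximation target is $\frac12\sqrt[d]{n'}=\sqrt[d]{n}$.

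The first observation is that the old and the new window overlap in $(\frac12\sqrt[d]{n},\sqrt[d]{n})$, which gives a clean case split. If the stable value already satisfies $v.q>\frac12\sqrt[d]{n}$, then $v.q$ still lies in the new window, the update guard stays false, and $v$ performs zero $v.q$-updates. Otherwise $v.q\le\frac12\sqrt[d]{n}$, the guard fires exactly once, and the remaining work is to show that this single update leaves $v.q$ inside $(\frac12\sqrt[d]{n},2\sqrt[d]{n})$ so no second update follows. For this I would re-run the Chernoff estimate from the proof of Lemma~\ref{q_v_approx_lemma}, now instantiated for $n'$: w.h.p.\ the candidate $2^iv.q$ minimizing $a_i$ is the one closest to $\frac12\sqrt[d]{n'}=\sqrt[d]{n}$. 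The candidates are $\{v.q,v.q/2,v.q/4,\dots\}\cup\{2v.q\}$, and $v.q\le\frac12\sqrt[d]{n}$ forces $2v.q\le\sqrt[d]{n}$, so $2v.q$ is the largest candidate and hence the one closest to $\sqrt[d]{n}$; thus $v$ doubles $v.q$. Since $v.q>\frac14\sqrt[d]{n}$ strictly, the new value $2v.q$ lies in $(\frac12\sqrt[d]{n},\sqrt[d]{n}]\subseteq(\frac12\sqrt[d]{n},2\sqrt[d]{n})$, so by Lemma~\ref{q_v_approx_lemma} it is already stable.

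Finally I would convert ``at most one update w.h.p.'' into ``one update in expectation''. The only way more than one update can occur is if the w.h.p.\ guarantee of Lemma~\ref{q_v_approx_lemma} fails either before or after the growth, an event of probability $n^{-\Omega(1)}$; and even then the number of updates needed to re-stabilize $v.q$ is at most $\mathcal O(\log\sqrt[d]{n'})=\mathcal O(\log n)$, again by Lemma~\ref{q_v_approx_lemma}. Hence the expected number of $v.q$-updates at $v$ is at most $1+\mathcal O(\log n)\cdot n^{-\Omega(1)}=1+o(1)$.

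The step I expect to be the main obstacle is the middle one: arguing rigorously that a single correction step suffices. It relies on two things lining up --- the update rule caps its multiplicative move at a factor $2$, while the new window is wide by a factor $4$ and therefore comfortably absorbs a value that was only a factor $2$ too small --- and on the discrete minimizer over the candidates $2^iv.q$ behaving as the proof of Lemma~\ref{q_v_approx_lemma} predicts for the grown instance. The borderline value $v.q=\frac14\sqrt[d]{n}$, which would demand two doublings, is ruled out precisely because the legitimate-state window is open, so this case never arises.
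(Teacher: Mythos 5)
Correct, and essentially the same approach as the paper's: $\sqrt[d]{n}$ doubles when $n$ grows by a factor of $2^d$, so $v.q$ needs at most one doubling, and the approximation procedure realizes a doubling in a single $v.q$-update. Your write-up is more careful than the paper's one-paragraph argument---you supply the case split between ``already in the new window'' and ``one doubling needed,'' the argument that the discrete minimizer over $\{2^i v.q\}$ actually selects $2v.q$, and the explicit conversion from a w.h.p.\ bound to a bound in expectation---none of which changes the substance, but all of which the paper leaves implicit.
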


\begin{proof}
Let $n_{old} := n$ and $n_{new} := 2^d \cdot n_{old}$. Consider an arbitrary old node $v \in V$. Before $n$ increases, it holds $v.q \in \Theta(\sqrt[d]{n_{old}})$ (Lemma~\ref{q_v_approx_lemma}). If now $|V| \gets n_{new}$, then $q=\sqrt[d]{n_{new}} = \sqrt[d]{2^d\cdot n_{old}} = 2\cdot \sqrt[d]{n_{old}}$ increases by factor $2$, which leads to $v.q$ increasing by factor $2$. Doubling up $v.q$ is done in one $v.q$-update by our approximation approach (see Algorithm~\ref{algo:q-neighborhood}, line~\ref{algline:q_v_update}).
\end{proof}

Having to update $v.q$ only on rare occasions drastically reduces the work for a single node, because both, the $q$-neighborhood and the general de Bruijn edges are dependent on $v.q$.
The next lemma is needed for the proof of the following theorem:

\begin{lemma} \label{lemma:pointdistance}
Consider $n$ nodes that have been hashed uniformly and independently onto the $[0,1)$-interval.
Let $p \in [0,1)$ be an arbitrary point.
The expected distance between $p$ and the node that is located closest to $p$ is not higher than $\frac{1}{2n}$, when viewing the $[0,1)$-interval as a ring.
\end{lemma}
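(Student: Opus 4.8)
The plan is to mimic the proof of Lemma~\ref{cor:nodedistance}, but now measuring distance to the fixed point $p$ rather than to a fixed node. First I would observe that for a single node with hash value $X \sim U[0,1)$, the ring distance $D := \min\{|X-p|,\, 1-|X-p|\}$ is uniformly distributed on $[0,\tfrac12]$: as $X$ ranges over the circle, the two arcs emanating from $p$ each contribute a copy of $U[0,\tfrac12]$, so $Pr[D \le x] = 2x$ for $x \in [0,\tfrac12]$. Since the $n$ hash values are independent, the corresponding distances $D_1,\dots,D_n$ are i.i.d.\ with this law, and the node that realizes $\min_i D_i$ is exactly the node closest to $p$ on the ring.

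Next I would compute the distribution of $D_{\min} := \min\{D_1,\dots,D_n\}$: by independence, $Pr[D_{\min} > x] = (1-2x)^n$ for $x \in [0,\tfrac12]$, and $Pr[D_{\min} > x] = 0$ for $x > \tfrac12$. Using $E[Y] = \int_0^{\infty} Pr[Y > t]\,dt$ for a nonnegative random variable $Y$, this gives
\[
E[D_{\min}] = \int_0^{1/2} (1-2t)^n\,dt = \left[-\frac{(1-2t)^{n+1}}{2(n+1)}\right]_0^{1/2} = \frac{1}{2(n+1)}.
\]
Since $\frac{1}{2(n+1)} \le \frac{1}{2n}$, the claimed bound follows immediately.

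The only subtlety — and it is a minor one — is justifying that the ring distance from a uniformly placed node to $p$ is exactly uniform on $[0,\tfrac12]$ no matter where $p$ lies; this is precisely where the assumption that $[0,1)$ is viewed as a ring (rather than as an interval) enters, and it is what makes the resulting bound independent of $p$. Everything past that point is a routine CDF-of-minimum computation entirely analogous to the argument already used in Lemma~\ref{cor:nodedistance}.
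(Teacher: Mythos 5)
Your proof is correct, and it takes a genuinely different — and more rigorous — route than the paper. The paper argues informally: it lets $p$ lie between consecutive nodes $n_0,n_1$, invokes Lemma~\ref{cor:nodedistance} to claim the expected gap is $\tfrac{1}{n}$, and then says the distance from $p$ to its nearest node is ``maximized'' when $p$ is the midpoint, i.e.\ half the gap, yielding $\tfrac{1}{2n}$. This argument mixes worst-case placement of $p$ with an expectation over node positions, and it also quietly ignores that the gap containing a \emph{fixed} point $p$ is length-biased (its expected length is $\tfrac{2}{n+1}$, not $\tfrac{1}{n}$), so the paper's reasoning is not airtight even though the conclusion is true. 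You instead compute the distribution of the ring distance $D_{\min}$ directly: observing that the ring distance from a single $U[0,1)$ hash to a fixed $p$ is exactly $U[0,\tfrac12]$, so $\Pr[D_{\min}>x]=(1-2x)^n$ and $E[D_{\min}]=\tfrac{1}{2(n+1)}\le\tfrac{1}{2n}$. This is a cleaner CDF-of-minimum computation in the same spirit as the paper's own proof of Lemma~\ref{cor:nodedistance}, it gives the exact value rather than just a bound, and it makes the role of the ring metric explicit, which the paper's argument leaves implicit.
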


\begin{proof}
Let the point $p$ lie between two consecutive nodes $n_0$ and $n_1$.
From Fact~\ref{cor:nodedistance} we know that the expected distance between $n_0$ and $n_1$ is $\frac{1}{n}$.
The distance between $p$ and the node that is closest to $p$ is maximized, when $p$ lies exactly in the middle between $n_0$ and $n_1$, i.e., $p = \frac{n_0+n_1}{2}$.
This distance is equal to $\frac{1}{2n}$.
\end{proof}

Finally we can show an (asymptotically optimal) upper bound for the expected number of edges that need to be built or redirected for a node $v \in V$, when the number of nodes increases.

\begin{theorem}
Let the GDB $G$ be in a legitimate state.
When $n$ increases by factor $2^d$, then the number of edges that need to be built or redirected for an already existing node is in $\mathcal O(\sqrt[d]{n})$ on expectation, which is asymptotically optimal.
\end{theorem}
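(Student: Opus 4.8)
The plan is to fix an arbitrary old node $v \in V$ and bound, edge-type by edge-type, the expected number of edge slots that $v$ creates or redirects while the system re-converges after the $2^d n - n$ new nodes have joined. Write $n_{old} := n$ and $n_{new} := 2^d n_{old}$, so that $\sqrt[d]{n_{new}} = 2\sqrt[d]{n_{old}}$. The guiding observation is that by Theorem~\ref{theorem_degree} the degree of $v$ is $\mathcal O(\sqrt[d]{n})$ both before and after the change, so the number of \emph{fresh} edges $v$ ever builds is automatically $\mathcal O(\sqrt[d]{n})$; the real content is to show that no single edge slot is redirected more than a constant number of times in expectation during the transient, and to account for the one extra de Bruijn level created by the growth of $v.q$.

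For the list edges and the two standard de Bruijn edges there are only $\mathcal O(1)$ slots. I would argue that the slot $v.left$ (resp. $v.right$, resp. $v.db(1,j)$, which tracks the node closest to $\frac{v+j}{2}$) is updated at most once for every newly arrived node that falls strictly between $v$ and $v$'s final neighbour (resp. that is closer to $\frac{v+j}{2}$ than the old closest node was), the old occupant being delegated away via \textsc{Linearize} each time. By Lemma~\ref{cor:nodedistance} and Lemma~\ref{lemma:pointdistance} the relevant interval has expected length $\mathcal O(1/n_{old})$, so the expected number of new nodes landing in it is $\mathcal O(2^d) = \mathcal O(1)$; hence each of these $\mathcal O(1)$ slots is redirected $\mathcal O(1)$ times in expectation, contributing $\mathcal O(1)$ in total.

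For the $q$-neighborhood I would invoke Lemma~\ref{q_v_update_lemma}: in expectation $v.q$ is updated exactly once, namely doubled, so $|v.Q|$ grows from $c\cdot 2v.q_{old} = \mathcal O(\sqrt[d]{n_{old}})$ to $c\cdot 2v.q_{new} = \mathcal O(\sqrt[d]{n_{new}})$. Every node that ever enters $v.Q$ is either one of $v$'s final $c\cdot 2v.q_{new}$ closest list neighbours (there are $\mathcal O(\sqrt[d]{n})$ of these, and, once added, they are never removed, by the argument of Lemma~\ref{lemma:convergence:II}) or a node that is evicted once and delegated away; either way the number of edge operations charged to $v.Q$ is $\mathcal O(\sqrt[d]{n})$. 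For the general de Bruijn edges, the doubling of $v.q$ adds exactly one new level $i^\ast = \log(2v.q_{old})+1$ carrying $2^{i^\ast} = \mathcal O(\sqrt[d]{n})$ fresh slots, while the remaining $\sum_{i=1}^{\log(2v.q_{old})} 2^i = \mathcal O(\sqrt[d]{n})$ existing slots $v.db(i,j)$ each track the node closest to a fixed point $\frac{v+j}{2^i}$ and, by the same counting argument as above (Lemma~\ref{lemma:pointdistance} together with the $\Theta(1)$ expected number of new arrivals in a $\Theta(1/n_{old})$-interval around that point), change $\mathcal O(1)$ times in expectation. Summing the four contributions yields $\mathcal O(1) + \mathcal O(\sqrt[d]{n}) + \mathcal O(\sqrt[d]{n}) + \mathcal O(\sqrt[d]{n}) = \mathcal O(\sqrt[d]{n})$.

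Finally, for optimality I would note that after the increase the graph is again a GDB of diameter $d$, so every node keeps $\Theta(\sqrt[d]{n_{new}})$ neighbourhood edges — a quantity that is inherent to a fixed diameter $d$ by Fact~\ref{fact:node_degree_lower_bound} — whereas every old node previously had only $\Theta(\sqrt[d]{n_{old}}) = \Theta(\frac{1}{2}\sqrt[d]{n_{new}})$ neighbourhood edges; hence at least $\Omega(\sqrt[d]{n})$ edge creations per node are unavoidable for any protocol maintaining this topology, matching the upper bound. The step I expect to be the main obstacle is the per-slot redirection bound for the de Bruijn edges: one must ensure that a probe returning intermediate, not-yet-final nodes during re-convergence cannot make a single slot $v.db(i,j)$ oscillate more than $\mathcal O(1)$ times in expectation, which is precisely where Lemma~\ref{lemma:pointdistance} and the count of new arrivals near the target point $\frac{v+j}{2^i}$ are needed; controlling the total number of distinct nodes that ever pass through $v.Q$ during the transient requires similar care.
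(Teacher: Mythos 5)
Your proposal follows essentially the same approach as the paper's proof: decompose by edge type, use Lemma~\ref{q_v_update_lemma} to conclude $v.q$ doubles once (so one new de Bruijn level of $\mathcal O(\sqrt[d]{n})$ slots appears), use Lemma~\ref{lemma:pointdistance} to bound the expected number of redirects per de Bruijn slot by $\mathcal O(1)$, bound the $q$-neighborhood work by $\mathcal O(\sqrt[d]{n})$, and invoke Fact~\ref{fact:node_degree_lower_bound} for the matching lower bound. The only divergence is cosmetic: the paper counts $q$-neighborhood work by the expected number of new arrivals in the neighborhood interval ($\frac{c\cdot 2v.q}{n}\cdot(2^d n - n)$), whereas you count distinct nodes ever passing through $v.Q$; both land at the same bound and at a comparable level of rigor.
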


\begin{proof}
First, compute the amount of times a node $v \in V$ has to redirect one of its $v.q$-neighbors $q_i \in v.Q$. Since we know the expected distance between the two outer nodes $q_1$ and $q_{c\cdot 2v.q}$ of $v.Q$, we know the probability that a new node joins within $v$'s $q$-neighborhood:
\[Pr[\text{Node joins within }v's\ q\text{-neighborhood}] = \frac{c\cdot 2v.q}{n}.\]
So when $2^d\cdot n - n$ nodes join the system, we have 
\[\frac{c\cdot 2v.q}{n} \cdot (2^d \cdot n - n) = 2 \cdot c\cdot v.q \cdot (2^d - 1)\]
edge redirects on expectation for $v$'s $q$-neighborhood.

Now compute $v$'s expected number of edge redirects for its de Bruijn edges: Consider an arbitrary de Bruijn edge $\frac{v+j}{2^i}$ with $i \in \{1, \ldots ,\log(v.q)+1\}$ and $j \in \{0, \ldots 2^i-1\}$. 
The associated node to this edge is $v.db(i,j)$.
W.l.o.g. let $v.db(i,j) > \frac{v+j}{2^i}$.
In order for the edge to be redirected from $v.db(i,j)$ to a new node $w$, $w$ has to join within the interval $[\frac{v+j}{2^i} - (v.db(i,j) - \frac{v+j}{2^i}), v.db(i,j)]$.
From Lemma~\ref{lemma:pointdistance} we know that the expected size of this interval is not higher than $2\cdot \frac{1}{2n} = \frac{1}{n}$.
It follows 

\begin{eqnarray*}
& & Pr\left[\text{Join within } \left[\frac{v+j}{2^i} - \left(v.db(i,j) - \frac{v+j}{2^i}\right), v.db(i,j)\right]\right] \leq \frac{1}{n}\\
& \Rightarrow & \text{\# Edge redirects for the edge $v \rightarrow \frac{v+j}{2^i}$ is} \leq (2^d n - n)\cdot \frac{1}{n} = 2^d - 1\\
& \Rightarrow & \text{Work for $v$ is }\leq (2^d - 1)\cdot \text{Number of $v$'s de Bruijn edges}\\
&  \overset{Theorem~\ref{theorem_degree}}{=} & (2^d - 1)\cdot (4v.q-2).
\end{eqnarray*}
Putting it all together, the expected number of edge redirects for a node $v$ is equal to

\begin{eqnarray*}
& 2 \cdot c\cdot v.q \cdot (2^d - 1) + (2^d - 1)\cdot (4v.q-2) = \mathcal O(v.q) \overset{Lemma~\ref{q_v_approx_lemma}}{=} \mathcal O(\sqrt[d]{n}).
\end{eqnarray*}

Because of Lemma~\ref{q_v_update_lemma}, we know that an old node $v$ performs one update to its value $v.q$ after $2^d\cdot n-n$ nodes have joined the system.
So $v.q$ increases by factor $2$ and $v$ establishes a new level of general de Bruijn edges that contains $2v.q-1$ nodes.
Adding $2v.q-1$ to $(2^d-1)\cdot(4v.q + 2cv.q-2)$ does not change our overall costs of $\mathcal O(v.q)$.

Lastly, we argue why $\mathcal O(\sqrt[d]{n})$ edge redirects for old nodes is asymptotically optimal: If $n$ increases by factor $2^d$, the systems contains $2^d\cdot n$ nodes.
Since the degree is still $d$, Fact~\ref{fact:node_degree_lower_bound} implies that there has to be a node of degree $2\sqrt[d]{n}$, which leads to at least one old node having to at least double its degree of $\mathcal O(\sqrt[d]{n})$ (Theorem~\ref{theorem_degree}) to meet this bound.
\end{proof}

\section{Evaluation}\label{sec:evaluation}
In this section we discuss the results of our simulations.
We implemented our protocol in Java on a 64-bit machine, 2.3 GHz Intel core i3-6100U CPU (dual core) with 4GB main memory, running Windows 10.
For each value of $n \in \{1, \ldots ,500\}$ we perform $10$ runs, each starting with a different (random) weakly connected graph containing $n$ nodes.
In each run we count the number of phases that are necessary for our protocol until the diameter of the graph becomes equal to $d$.
Note that this does not necessarily imply that the system has reached a legitimate state at this point in time.
Afterwards we compute the average number of phases for each value of $n$.
A \emph{phase} is defined as follows: We call the \textsc{Timeout} procedure once on every node and wait until all messages that have been generated by these calls are processed, i.e., a phase is over if all nodes have called their \textsc{Timeout} procedure exactly one time and the channel of every node is empty.
Figure~\ref{fig:evaluation_diameter} shows the resulting graphs for values of $d \in \{2,3,4,5\}$.

\begin{figure}[ht]
	\centering
 	\includegraphics[scale=0.39]{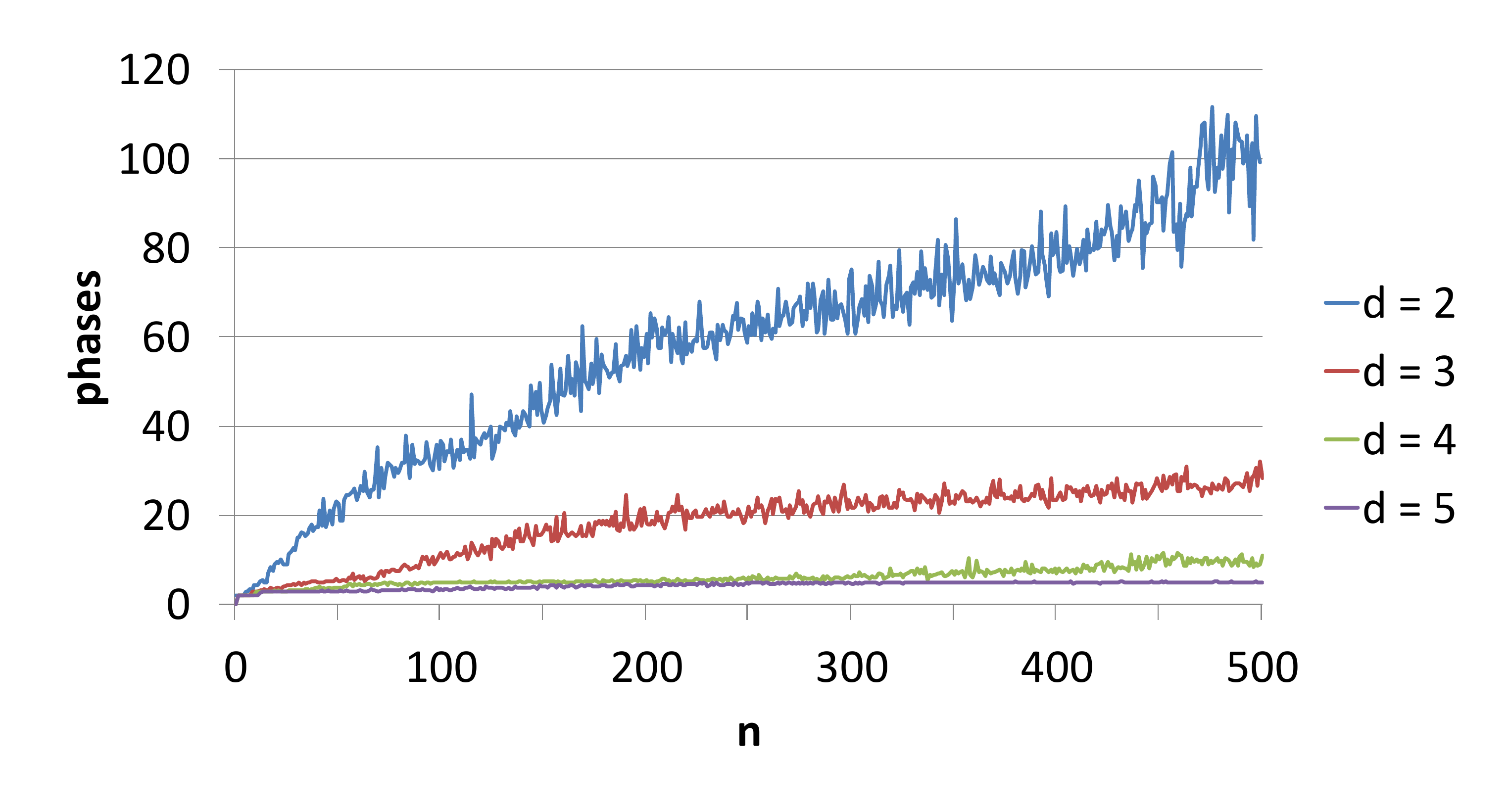}
	\caption{Average number of phases it takes our protocol to converge an initial weakly connected graph into a graph with diameter at most $d$.}
	\label{fig:evaluation_diameter}
\end{figure}

Notice that almost all our bounds that we have shown in Section~\ref{sec:analysis} are on expectation.
Because of this there exist initially weakly connected graphs that our protocol fails to converge to a graph with diameter $\leq d$.
The reason for this is because only a few entries in the distance matrix are at value $d+1$, resulting in the overall degree of the graph being equal to $d+1$.
Therefore, we allow the $q$-neighborhood of all nodes to contain more than only the $\sqrt[d]{n}$ closest nodes, by multiplying $q$ with some constant factor $c$ as described in the previous sections.
We can see that as soon as $d > 2$, the process fastens drastically.
For $d > 2$ the slope of the respective curves is very low, with the function for $d = 5$ being almost constant.
This indicates that for a large amount of nodes the system converges quickly to a practicable state such that search requests can be efficiently processed.

\section{Conclusion and future Work} \label{sec:conclusion}
We presented a new self-stabilizing protocol for the general de Bruijn graph that consists of multiple sub-protocols.
It has an advantage compared to the self-stabilizing clique in terms of the node degree, while still being able to provide constant time routing w.h.p.
Since the whole protocol is dependent on the publicly known hash function $h$ and the constant $d$, it may be an interesting task to handle nodes that use corrupted hash functions or corrupted values for $d$.


\newpage

\appendix

\section{Pseudocode}\label{appendix:pseudocode}

\begin{algorithm}[ht]
\caption{The routing algorithm $\rightarrow $ executed by node $v$}
\label{algo:routing}
\begin{algorithmic}[1]
\Procedure{DeBruijnSearch}{$t \in \mathbb{N}$, $r \in \mathbb{Z}$, $remHops \in \mathbb{N}_0$}
\If{$v.id = t$} \label{algline:search_success_1}
	\State \Return "Success!" \label{algline:search_success_2}
\EndIf
\If{$remHops > 0$ $\wedge$ $\log (q) \leq r$}
	\State $bin_t \gets $ \Call{Compute\_Bitstring}{$t, r$} \Comment{First $r$ bits of $t$'s bit string}\label{algline:routing:hop_start}
	\State $(t_1t_2 \ldots t_k)_{q} \gets$ \Call{Base\_Transform}{$bin_t, q$}  \Comment{Transform $bin_t$ to base $q$}\label{algline:routing:base_transform}
	\State Determine the edge $v \rightarrow u$ with minimal value $|u-\frac{v+t_k}{q}|$
	\State $u \gets $\Call{DeBruijnSearch}{$t$, $r - \log (q)$, $remHops-1$} \Comment{General de Bruijn hop}\label{algline:routing:hop_end}
\Else
	\State $u \gets \argmin_{(v,w) \in E_q}\{|w - h(t)|\}$ \Comment{Greedy search based on $u.Q$}\label{algline:routing:forwarding_start}
	\If{$|u-h(t)| < |v-h(t)|$}
		\State $u \gets $\Call{DeBruijnSearch}{$t$, $-1$, $0$} \Comment{Forward request to closer node}
	\Else\label{algline:search_fail_1}
		\State \Return "Failure!" \label{algline:routing:forwarding_end}
	\EndIf
\EndIf
\EndProcedure
\end{algorithmic}
\end{algorithm}

\begin{algorithm}[ht]
\caption{The \textsc{BuildList} protocol $\rightarrow $ executed by node $v$}
\label{algo:buildlist}
\begin{algorithmic}[1]
\Procedure{Timeout}{} \Comment{Introduce $v$ to $v.left$ and $v.right$}
\If{$v.right > v$} \Comment{Analogously for $v.left < v$}
	\State $v.right \gets \Call{Linearize}{v}$
\Else
	\State $\Call{Linearize}{v.right}$ \label{algline:buildlist:corrupt:start}
	\State $v.right = \perp$ \label{algline:buildlist:corrupt:end}
\EndIf
\EndProcedure
\State
\Procedure{Linearize}{$u$}
	\If{$u > v.right$} \Comment{Analogously for $u < v.left$}
		\State \Call{Delegate}{$u$}
	\EndIf
	\If{$v < u < v.right$} \Comment{Analogously for $v.left < u < v$}
		\State $u \gets \Call{Linearize}{v.right}$
		\State $v.right \gets u$
	\EndIf
\EndProcedure
\State
\Procedure{Delegate}{$u$} \Comment{Delegate $u$ to a closer node}
	\State $\bar{q} \gets \argmin_{w \in v.Q}\{|w - u|\}$
	\State $\bar{q} \gets$ \Call{Linearize}{$u$}
\EndProcedure
\end{algorithmic}
\end{algorithm}

\begin{algorithm}[ht]
\caption{The $q$-neighborhood protocol $\rightarrow $ executed by node $v$}
\label{algo:q-neighborhood}
\begin{algorithmic}[1]
\Procedure{Timeout}{}
	\For{\textbf{all} $q_r \in \{\bar{q} \in v.Q\ |\ \bar{q} > v, \bar{q} < v.right\}$}\Comment{Analogously for $v$'s left side}
		\State \Call{Linearize}{$q_r$} \Comment{Downgrade $q_r$ to \textsc{BuildList}}\label{algline:downgrading_q_neighborhood_end}
	\EndFor
	\If{$v.right \neq \perp \wedge\ v.right \not\in v.Q$} \Comment{Analogously for $v.left$}
		\State $v.Q \gets v.Q \cup \{v.right\}$	\Comment{Upgrade $v.right$ to $q$-neighborhood}\label{algline:promotion_q_neighborhood}
	\EndIf
	\State Pick $q_k \in v.Q$ in a round-robin fashion
	\If{$q_k = v.right$} \Comment{Analogously for $q_k = v.left$}
		\State $q_k \gets \Call{Introduce}{v, v}$ \label{algline:qNeighborhood_inductionStart}
	\ElsIf{$q_k > v$} \Comment{Analogously for $q_k < v$}
		\State $q_k \gets \Call{Introduce}{q_{k-1}, v}$ \label{algline:qNeighborhood_inductionEnd}
	\EndIf
	\State \Call{Approximate\_Q}{}
\EndProcedure
\State
\Procedure{Introduce}{$\tilde{Q}$, $sender$} \label{algline_qNeighborhood_introduce}
	\State $v.Q \gets v.Q \cup \tilde{Q}$
	\If{$|v.Q| > c\cdot 2v.q$}\label{algline:delegating_q_neighborhood_start}
		\For{$i = 1$ \textbf{to} $|v.Q| - c\cdot 2v.q$}
			\State $\bar{q} \gets \argmax_{q \in v.Q}\{|v-q|\}$
			\State $v.Q \gets v.Q \setminus \{\bar{q}\}$ \Comment{Remove $\bar{q}$ from $v.Q$}\label{algline:q_neighborhood_removeS}
			\State \Call{Linearize}{$\bar{q}$} \Comment{Delegate additional nodes via \textsc{BuildList}}\label{algline:delegating_q_neighborhood_end}
		\EndFor
	\EndIf
	\If{$sender \neq \perp$}
		\If{$sender < v$}\label{algline:qNeighborhood_response_start}
			\State $sender \gets \Call{Introduce}{v.right, \perp}$\Comment{Send a list neighbor back}
		\Else
			\State $sender \gets \Call{Introduce}{v.left, \perp}$\Comment{Send a list neighbor back}\label{algline:qNeighborhood_response_end}
		\EndIf
	\EndIf
\EndProcedure
\State
\Procedure{Approximate\_Q}{}
	\For{$i \in \{-\log(v.q), \ldots ,0,1\}$}
		\State $a_i  \gets |2^d \cdot |q_1-q_{2^iv.q}| - (1/(2^i\cdot v.q))^{d-1}|$ \label{algline:a_i}
	\EndFor
	\State $min \gets \argmin_{i \in \{-\log(|v.q|), \ldots ,0,1\}}\{a_i\}$ \Comment{Choose the absolute value closest to $0$}
	\If{$min \neq 0$}
		\State $v.q \gets v.q \cdot 2^{min}$ \label{algline:q_v_update}
	\EndIf
\EndProcedure
\State
\Function{Approximate\_log\_N}{}
	\For{$i \in \{\frac{1}{2}v.q, \ldots ,2v.q\}$}
		\State $a_i  \gets |2^d \cdot |q_1-q_i| - (1/i)^{d-1}|$
	\EndFor
	\State $k \gets \argmin_{i \in \{\frac{1}{2}v.q, \ldots ,2v.q\}}\{a_i\}$ \Comment{Choose the absolute value closest to $0$}
\State \Return $\lfloor \log((2k)^d) \rfloor$
\EndFunction
\end{algorithmic}
\end{algorithm}

\begin{algorithm}[ht]
\caption{The standard de Bruijn protocol $\rightarrow $ executed by node $v$}
\label{algo:standard_de_bruijn}
\begin{algorithmic}[1]
\Procedure{Timeout}{}
	\If{$v.right = \perp$ $\wedge$ $v.db(1,1) \neq \perp$} \Comment{Analogously for $v.db(1,0)$}
		\State \Call{Linearize}{$v.db(1,1)$} \Comment{Downgrade $v.db(1,1)$ to \textsc{BuildList}} \label{algline:downgrade_stdb}
	\EndIf
	\If{$v.db(1,1) = \perp \vee\ v.db(1,1) < v$}\Comment{Analogously for $v.db(1,0)$}
		\If{$v.db(1,1) \neq \perp \wedge\ v.db(1,1) < v$}
			\State \Call{Linearize}{$v.db(1,1)$}
		\EndIf
		\State $v.db(1,1) \gets v$  \Comment{Set $v$ as standard de Bruijn edge}\label{algline:upgrade_stdb}
	\EndIf
	\State $v.left \gets $ \Call{Probe}{$v,\frac{v}{2}, "leftDB"$}
	\State $v.right \gets $ \Call{Probe}{$v,\frac{v+1}{2}, "rightDB"$}
\EndProcedure
\State
\Procedure{Probe}{$sender$, $t$, $mode$}
	\If{$mode = "rightDB"$} \Comment{Analogously for $mode = leftDB$}
		\If{$v.db(1,1) = \perp$}
			\State $sender \gets $ \Call{Probe\_Done}{$t, v$} \label{algline:sdb_probe_fail}
		\Else
			\State $v.db(1,1) \gets $ \Call{Probe}{$sender, t, "dbh\_done"$} \Comment{Standard de Bruijn hop} \label{algline:sdb:standardhop}
		\EndIf
	\ElsIf{$mode = "dbh\_done"$}
		\State $u \gets \argmin_{w \in v.Q \cup \{v\}}\{|w-t|\}$ \label{algline:sdb:greedyForward}
		\If{$u \neq v$}
			\State $u \gets $ \Call{Probe}{$sender$, $t$, $"dbh\_done"$} \Comment{Search for node closest to $t$}
		\Else
			\State $sender \gets $ \Call{Probe\_Done}{$t, v$}
		\EndIf
	\EndIf
\EndProcedure
\State
\Procedure{Probe\_Done}{$t$, $result$}
	\If{$t > v$} \Comment{Analogously for $t < v$ and $v_0$}
		\If{$result \neq v.db(1,1)$ $\wedge$ $v.db(1,1) \neq \perp$}
			\State \Call{Linearize}{$v.db(1,1)$} \Comment{Downgrade old $v.db(1,1)$ to \textsc{BuildList}}\label{algline:delegating_sdB_2}
		\EndIf	
		\State $v.db(1,1) \gets result$
	\EndIf	
\EndProcedure
\end{algorithmic}
\end{algorithm}

\begin{algorithm}[ht]
\caption{The general de Bruijn protocol $\rightarrow $ executed by node $v$}
\label{algo:general_de_bruijn}
\begin{algorithmic}[1]
\Procedure{Timeout}{}
	\State Remove and downgrade general de Bruijn edges on level $l > \log(v.q)+1$
	\For{$i \gets 2 \ldots \log(v.q)+1$}\label{algline:downgrading_gdB_start}
		\For{$j \gets 0 \ldots 2^i-1$}
			\If{$v.right = \perp$ $\wedge$ $v.db(i,j) \neq \perp$ $\wedge$ $v.db(i,j) > v$}
				\State \Call{Linearize}{$v.db(i,j)$}  \Comment{Downgrade $v.left$ analogously} \label{algline:downgrading_gdB_end}
			\EndIf	
			\If{$v.db(i,j) = \perp \vee\ (\frac{v+j}{2^i} < v \wedge\ v.db(i,j) > v)$}
				\State  \Call{Linearize}{$v.db(i,j)$} \Comment{Only if $v.db(i,j) \neq \perp$}
				\State $v.db(i,j) \gets v$ \Comment{Set $v$ as general de Bruijn edge} \label{algline:gdb:promotion}
			\EndIf
		\EndFor
	\EndFor
	\State Pick $i \in \{2, \ldots ,\log(v.q)+1\}$ and $j \in \{0, \ldots ,2^i - 1\}$ in a round-robin fashion
	\State $v.db(i-1,j \mod 2^{i-1}) \gets $\Call{General\_Probe}{$v$, $\frac{v + j}{2^{i}}$, $i$, $j$, $1$}
\EndProcedure
\State
\Procedure{General\_Probe}{$sender$, $t$, $i$, $j$, $dB$}
	\If{$dB = 1$}
		\If{($v.db(1,0) = \perp$ $\wedge$ $j < 2^{i-1}$) $\vee$ ($v.db(1,1) = \perp$ $\wedge$ $j \geq 2^{i-1}$)}
			\State $sender \gets $ \Call{General\_Probe\_Done}{$v$, $i$, $j$}
			\State \Return
		\EndIf	
		\If{$j < 2^{i-1}$}
			\State $v.db(1,0) \gets $\Call{General\_Probe}{$sender$, $t$, $i$, $j$, $0$}
		\Else
			\State $v.db(1,1) \gets $\Call{General\_Probe}{$sender$, $t$, $i$, $j$, $0$}
		\EndIf	
	\Else \Comment{$dB = 0$}
		\State $u \gets \argmin_{w \in v.Q \cup \{v\}}\{|w-t|\}$
		\If{$sender = v$}
			\State $sender \gets $ \Call{General\_Probe\_Done}{$v$, $i$, $j$}
		\Else
			\State $u \gets $\Call{General\_Probe}{$sender$, $t$, $i$, $j$, $0$} \Comment{Forward probe}	
		\EndIf	
	\EndIf	
\EndProcedure
\State
\Procedure{General\_Probe\_Done}{$result$, $i$, $j$}
	\If{$result \neq v.db(i,j)$ $\wedge$ $v.db(i,j) \neq \perp$}
		\State \Call{Linearize}{$v.db(i,j)$} \Comment{Downgrade old $v.db(i,j)$ to \textsc{BuildList}}\label{algline:delegating_gdB}
	\EndIf	
	\State $v.db(i,j) \gets result$\label{algline:generaldB_establishEdge}
\EndProcedure
\end{algorithmic}
\end{algorithm}

\end{document}